\documentclass[11pt]{article}
   
\usepackage{xargs}   
\usepackage{times}
\usepackage{bbm}
\usepackage[pdftex,dvipsnames]{xcolor}  % Coloured text etc.

\usepackage[colorinlistoftodos,prependcaption,textsize=tiny,disable]{todonotes}
\newcommandx{\mnote}[2][1=]{\todo[linecolor=red,backgroundcolor=red!25,bordercolor=red,#1]{M: #2}}
\newcommandx{\unsure}[2][1=]{\todo[linecolor=blue,backgroundcolor=blue!25,bordercolor=blue,#1]{#2}}
\newcommandx{\knote}[2][1=]{\todo[linecolor=OliveGreen,backgroundcolor=OliveGreen!25,bordercolor=OliveGreen,#1]{#2}}
\newcommandx{\improvement}[2][1=]{\todo[linecolor=Plum,backgroundcolor=Plum!25,bordercolor=Plum,#1]{#2}}
\newcommandx{\cut}[2][1=]{\todo[linecolor=Plum,backgroundcolor=Plum!25,bordercolor=Plum,#1]{Cut: #2}}
\newcommandx{\thiswillnotshow}[2][1=]{\todo[disable,#1]{#2}}

\usepackage[bookmarks,colorlinks,breaklinks]{hyperref}  % PDF hyperlinks, with coloured links
\hypersetup{linkcolor=blue,citecolor=blue,filecolor=blue,urlcolor=blue} % all blue links

\usepackage{fullpage,appendix}
\usepackage{amsmath,amsfonts,amsthm,amssymb,xspace,bm}
\usepackage{dsfont}
\usepackage{algorithm}
\usepackage[noend]{algpseudocode}
\usepackage{tikz}
\usepackage{enumitem}

\usepackage[capitalise,nameinlink]{cleveref}
\crefname{lemma}{Lemma}{Lemmas}
\crefname{fact}{Fact}{Facts}
\crefname{theorem}{Theorem}{Theorems}
\crefname{corollary}{Corollary}{Corollaries}
\crefname{claim}{Claim}{Claims}
\crefname{example}{Example}{Examples}
\crefname{problem}{Problem}{Problems}
\crefname{definition}{Definition}{Definitions}
\crefname{assumption}{Assumption}{Assumptions}
\crefname{subsection}{Subsection}{Subsections}
\crefname{section}{Section}{Sections}

\newcommand{\newref}[2][]{\hyperref[#2]{#1~\ref*{#2}}}
\renewcommand{\eqref}[1]{\hyperref[#1]{(\ref*{#1})}}
\numberwithin{equation}{section}

\usepackage{color-edits}
\addauthor[jelani]{jn}{orange}
\addauthor[raghu]{rm}{purple}
\addauthor[tamas]{ts}{teal}
\addauthor[ravi]{rk}{olive}

\theoremstyle{plain}
\newtheorem{theorem}{Theorem}[section]
\newtheorem{lemma}[theorem]{Lemma}

\newtheorem{proposition}[theorem]{Proposition}
\newtheorem{corollary}[theorem]{Corollary}

\newtheorem{definition}[theorem]{Definition}

\theoremstyle{definition}

\newtheorem{remark}[theorem]{Remark}

\DeclareMathOperator*{\pr}{\mathsf{Pr}}

\DeclareMathOperator*{\ex}{\mathbb{E}}

\newcommand{\iprod}[2]{\langle #1,  #2 \rangle}

\def\inpw#1,#2{\langle #1, #2\rangle}

\newcommand{\reals}{\mathbb{R}}

\newcommand{\R}{\reals}

\newcommand{\one}{\mathbf{1}}
\DeclareMathOperator{\spann}{span}

\newcommand{\eps}{\epsilon}
\newcommand{\note}[1]{\marginpar{\tiny *note in TeX*}}
\newcommand{\ignore}[1]{}

\renewcommand{\phi}{\varphi}
\renewcommand{\epsilon}{\varepsilon}

\newcommand{\augmented}{\mathsf{AugmentedIndex}}
\newcommand{\countsketch}{\mathsf{CountSketch}}

\newcommand{\median}{\mathsf{median}}
\newcommand{\err}{\mathsf{err}}

\newcommand{\AR}{\textsf{AR}}
\newcommand{\AQ}{\textsf{AQ}}

\newcommand{\arank}{\mathrm{arank}}
\newcommand{\rank}{\mathrm{rank}}
\newcommand{\aQ}{\mathrm{aQ}}
\newcommand{\pq}{\mathsf{point\_query}}
\newcommand{\csk}{\countsketch}

\newcommand{\alice}{\mathrm{Alice}}
\newcommand{\bob}{\mathrm{Bob}}

\let\E\relax
\DeclareMathOperator*{\E}{\mathbb{E}}
\let\Pr\relax
\DeclareMathOperator*{\Pr}{\mathbb{P}}
\begin{document}

\author{
Jan Bul\'{a}nek\thanks{Google. \texttt{bul@google.com}.}
\and Ravi Kumar\thanks{Google Research. \texttt{ravi.k53@gmail.com}.}
\and Raghu Meka\thanks{UCLA. \texttt{raghum@cs.ucla.edu}. Supported by NSF EnCORE: Institute for Emerging CORE Methods in Data Science Award 2217033 and NSF AF: Small Award 2425350.}
 \and Jelani Nelson\thanks{Anthropic. \texttt{minilek@alum.mit.edu}.  Contributed to this work while at Google Research.}
    \and Tam\'{a}s Sarl\'{o}s\thanks{Google Research. \texttt{stamas@google.com}.}
}

\title{A Matrix Factorization Approach in Turnstile Streaming}
\date{}

\maketitle

\begin{abstract}
We define the {\it $M$-point query problem} in data streams: for a pre-defined matrix $M$, maintain a vector $x$ subject to turnstile updates such that the response $\widehat y_u$ to $\mathsf{query}(u)$ satisfies $|y_u - \widehat y_u| \le \eps\|x\|_1$, where $y := Mx$. We show that if $M$ can be factorized as $M=AB$ with $A,B$ space-efficiently representable, then a space-efficient algorithm exists, with memory $O(\eps^{-1}\|A\|_{2\rightarrow\infty}\|B\|_{1\rightarrow 1} + (\eps^{-1}\|A\|_{\infty\rightarrow\infty}\|B\|_{1\rightarrow 1})^{2/3})$ words. An important special case is when $M$ is the lower triangular matrix of all ones, which relates to the quantiles problem with additive error $\pm \eps n$, where $n$ is the database size. Our approach not only generalizes the existing ``dyadic'' approach to turnstile quantiles of Cormode and Muthukrishnan (J. Alg, 2005), but also simplifies and improves the analysis of the state-of-the-art dyadic CountSketch of Wang et al. (SIGMOD, 2013) and Luo et al. (VLDB, 2016). Our approach resembles the {\it matrix mechanism} of Li et al. (VLDB J., 2015) used in differential privacy: given a database $x\in\R^U$ and matrix $M$, the mechanism outputs a  private approximation that is close to $Mx$, where the privacy--error tradeoff depends on a certain matrix factorization norm of $M$.

We also improve the prior lower bound for quantiles with deletions, showing a memory lower bound of $\Omega(\eps^{-1}\log U)$ words. We also show any factorization has $\|A\|_{2\rightarrow\infty}\|B\|_{1\rightarrow 1} = \Omega((\log^{1.5} U) / \log\log U)$. This lower bound is new, and shows that for quantiles, the dyadic CountSketch is nearly optimal amongst factorization-based approaches.

    %There has been substantial progress recently on the problem of estimating quantiles in the streaming model, with new and/or simplified upper and lower bounds in the comparison model, word RAM model, and for relative error. For one setting however, a gap still remains between upper and lower bounds: the turnstile model, supporting both insertions and deletions.

    %We provide a new, promising matrix factorization approach to the problem. Though we do not close the gap between upper and lower bounds, we explain the existing upper bound as being an instantiation of a more general framework, while both simplifying the analysis and even slightly improving it. Specifically, we consider the $M$-point query problem:  given a matrix $M$ up front, and maintaining a vector $x\in\R^U$ subject to turnstile updates, the response to a query should be an estimate $\tilde y\in\R^U$ that well approximates $y := Mx$. Specifically, $\mathsf{query}(i)$ should return $\tilde y_i$ with $|y_i - \tilde y_i| \le \eps\|x\|_1$. We show that if $M$ has a small certain matrix factorization norm, then a space-efficient solution exists.
\end{abstract}

\thispagestyle{empty}
\addtocounter{page}{-1}
\newpage

\section{Introduction}
{\it Point query} is one of the most useful basic primitives in the design of streaming algorithms. The goal is to maintain a histogram $x\in\R^U$, initialized to the zero vector, subject to streaming updates of the form $\mathsf{update}(u, \Delta)$ that each trigger the change $x_u \leftarrow x_u + \Delta$. In the case of $x_u$ representing the multiplicity of item $u$ in a database, $\Delta=+1$ or $\Delta=-1$ can be viewed as either inserting or deleting item $u$ into or from that database, respectively. Then, an ideal point query algorithm should support the query operation $\mathsf{point\_query}(u)$, which simply returns $x_u$.

As is often the case in streaming algorithm design, such an ideal point query algorithm would require $\Omega(U)$ space, i.e., essentially memorizing the entire histogram $x$. We thus settle for an approximation in which the response to $\mathsf{point\_query}(u)$ returns  $x_u$ plus some bounded error depending on a norm of $x$.  Many low-memory algorithms have been developed that provide these guarantees for different settings (e.g., insertion-only, strict turnstile, or turnstile streaming, being deterministic or randomized), with error bounds depending on various norms of $x$ (e.g., $\ell_1$- or $\ell_2$-norm) \cite{MisraG82,CharikarCF04,CormodeM05,MetwallyAA06,BerindeICS10,NelsonNW14}. Point query has been both useful on its own, as well as a foundational primitive to enable low-memory streaming algorithms with applications in caching \cite{EinzigerF14}, heavy hitters \cite{CharikarCF04,CormodeM05}, quantiles \cite{CormodeM05,LuoWYC16}, graph streaming \cite{CormodeM05b}, and more.

\paragraph{$M$-point query problem.}

In this work, we introduce a more general problem that we call {\it $M$-point query}. Specifically, a matrix $M$ is pre-specified at the beginning of the stream, and the goal is to provide an algorithm that ideally responds to $\mathsf{point\_query}_M(u)$ with $y_u$, where we define $y := Mx$. The standard point query problem then just corresponds to $M$ being the identity matrix. As above, to achieve low memory we consider an {\it approximate} version of the problem, in which the algorithm is deemed to succeed if it outputs a value $\widehat y_u$ such that $|y_u - \widehat y_u| \leq \eps\|x\|_1$. Crucially we want error to depend \emph{only} on $\|x\|_1$ and \emph{not} on $\|Mx\|_1$, which could be much higher than $\|x\|_1$ (otherwise, we could simply apply standard point query algorithms to the vector $Mx$).

\paragraph{Rank and quantile queries in the turnstile model.}
%\paragraph{Prefix sums and rank queries.}
An important case to consider is the strict turnstile model ($x_u\ge 0$ for all $u$), where $x_u$ represents multiplicity of $u$ in a database, and with  $M = Q$ being the lower triangular matrix of all ones (aka the \emph{prefix-sum} matrix). Then $(Qx)_u = \sum_{v \le u} x_v$ is the {\it rank} of $u$ in the stream, i.e., the number of elements in the underlying database that are less than or equal to $u$. Then the $\ell_1$-error we aim for is $\pm \eps n$ with $n := \|x\|_1 = \sum_u |x_u| = \sum_u x_u$ being the database size. Note that simply using a regular point query data structure to sketch $Qx$ would give error $\pm \eps\|Qx\|_1$, which could be as big as $\Omega(\eps U\|x\|_1)$, for example if $x$ had flat entries. Obtaining error $\pm\eps\|x\|_1$ would then require parameterizing to have error $\pm \eps'\|Qx\|_1$ with $\eps' = \eps/U$. This would unacceptably increase memory usage by a factor of $U$; it would be more space-efficient to simply store $x$ in memory explicitly.

Commonly in the data structures literature one also studies the inverse query, $\mathsf{select}(j)$, which should return any item $z$ (not necessarily in the database) with $\mathsf{rank}(z) = j$. A natural relaxation given the form of additive error we consider is to return an element whose rank is not necessarily exactly $j$, but rather $j\pm\eps n$. This inverse query is often referred to as a {\it quantile} query, noting that the $\phi$th quantile is precisely $\mathsf{select}(\phi n)$, and additive error $\pm\eps n$ for $\mathsf{select}$ corresponds to error $\pm\eps$ for $\mathsf{quantile}$. We formally define these two algorithmic tasks as follows:

\begin{definition}[Approximate rank and quantile queries]\label{def:AR-AQ}
Fix parameters $\eps\in(0,1)$ and failure probability $\delta\in(0,1)$.
In a strict turnstile stream over $[U]$ producing $x\in\R_{\ge0}^{U}$ of size $n= \lceil \|x\|_1 \rceil$:
\begin{itemize}[nosep]
\item \emph{Approximate rank:} The problem $\AR(\eps,\delta,n,U)$  asks that for each $u \in [U]$, querying $\mathsf{rank}(u)=\sum_{v \le u}x_v$ returns an answer to within additive error $\pm \eps n$ with probability at least $1-\delta$.
\item \emph{Approximate quantile:} The problem $\AQ(\eps,\delta,n,U)$ asks to answer queries $q\in[n]$ by outputting an index $u\in[U]$ with $\text{rank}(u) \ge q - \varepsilon n$ and $\text{rank}(u-1) \le q + \varepsilon n$, with probability at least $1-\delta$.
\end{itemize}
For both these problems, the probability of success is for each query individually, and not for simultaneously answering all possible queries correctly.
\end{definition}

\paragraph{Motivation.}
Streaming algorithms for quantile estimation are ubiquitous in large-scale system design. When storing thousands of metrics across hundreds of microservices, a non-sketching linear memory solution may end up using more memory for the monitoring system than the actual application. Instead, each metric can be approximately tracked using a separate sketch\footnote{A {\it sketch} here denotes the memory footprint of an instance of a streaming algorithm.}. Quantile streaming algorithms have been implemented and deployed for example at Yahoo! \cite{TingMR20}, Apache Spark \cite{Spark}, Google \cite{GoogleCloud}, and more.

Algorithmic research into both upper and lower bounds in quantiles has surged in recent years. For example, we now have either matching or near-matching upper and lower bounds for insertion-only streaming quantiles in the word RAM model in which stream elements are integers in $[U]$ \cite{GuptaSW24,Wang25}, comparison-based model for deterministic algorithms \cite{GreenwaldK01,CormodeV20}, comparison-based model for randomized algorithms with very low failure probability \cite{KarninLL16}, and comparison-based model with relative error \cite{GribelyukSWY25}. These aforementioned results are all in the insertion-only setting, in which new items can be added to the database in the stream, but items may not be deleted.

In the fully dynamic setting, supporting both insertions and deletions, a gap remains. In the word RAM model, the best lower bound is $\Omega(\varepsilon^{-1})$ machine words, whereas the best upper bound is $\tilde O(\varepsilon^{-1}\log^{3/2} U)$  \cite{LuoWYC16}.%
\footnote{We use $\tilde O(f)$ to denote $O(f\cdot \mathrm{polylog}(f))$.}   Furthermore, there is only one known paradigm for producing good upper bounds: combining a tree-decomposition of the universe with point query, such as the so-called ``dyadic'' approach~\cite{CormodeM05,Wang2013Quantiles}.

%\bigskip

%Before we state our contributions, we define the concept of a factorization norm of a matrix $M$.
%\begin{definition}
%For a matrix $M$ and two matrix norms $\|\cdot\|_Y, \|\cdot\|_Z$, define the factorization norm
%$$
%\gamma_{Y,Z}(M) := \inf_{A,B: M = AB} \|A\|_Y\cdot \|B\|_Z 
%$$
%\end{definition}
%In this work we will be most concerned with the norms $\gamma^* := \gamma_{\ell_2\rightarrow\ell_\infty,\ell_1\rightarrow\ell_1}$, and $\gamma^+ := \gamma_{\ell_\infty\rightarrow\ell_\infty,\ell_1\rightarrow\ell_1}$. Recall the matrix operator norm $\|A\|_{X\rightarrow Y}$ denotes $\sup_{z\neq 0} \|Az\|_Y / \|z\|_X$ for two norms $\|\cdot\|_X, \|\cdot\|_Y$. In our case, $\gamma^*$ minimizes the product of the largest $\ell_2$-norm of a row of $A$ and $\ell_1$-norm of a column of $B$, and $\gamma^+$ minimizes the product of the largest $\ell_1$-norm of a row of $A$ and $\ell_1$-norm of a column of $B$.

%\bigskip

\subsection{Our Contributions} Our main upper bound contribution is as follows: 

\begin{theorem}\label{thm:mainintro}
Fix a matrix $M \in \R^{U \times U}$ and suppose one can factorize $M = AB$. Then, there exists a sketching algorithm that, given any $u \in [U]$, can successfully answer $\pq_M(u)$ with probability at least $2/3$ to within error $\eps \|x\|_1$, and whose memory consumption in words, beyond representing $A,B$, is 
\begin{equation}
O(1) \cdot \max\left\{\frac{\|A\|_{2\rightarrow\infty} \|B\|_{1\rightarrow 1}}{\eps}, \left(\frac{\|A\|_{\infty\rightarrow\infty} \|B\|_{1\rightarrow 1}}{\eps}\right)^{2/3} \right\} , \label{eqn:main-space}
\end{equation}
in addition to storing a random seed of $O(\log m + \log^2(1/\eps))$ bits, where $m$ is the number of rows of $B$.
\end{theorem}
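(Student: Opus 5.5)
We sketch $z := Bx \in \R^m$ with a CountSketch and answer a query by combining the point-query estimates of $z$ linearly through the $u$th row of $A$. Concretely, we maintain a CountSketch of $z$ with a single row of $w$ buckets, using a hash $h:[m]\to[w]$ and signs $\sigma:[m]\to\{\pm1\}$. An update $\mathsf{update}(v,\Delta)$ changes $z$ by $\Delta B_{\cdot v}$, so it is processed by adding $\Delta\sigma(j)B_{jv}$ to bucket $h(j)$ for every $j$ with $B_{jv}\neq 0$; this is where we use that $B$ is explicitly available. The point estimate is $\widehat z_j = \sigma(j)C_{h(j)}$, and on query $u$ we output $\widehat y_u = \sum_j A_{uj}\widehat z_j$. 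Since $\|z\|_1 \le \|B\|_{1\to1}\|x\|_1$, it suffices to bound the error relative to $\|z\|_1$.

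\textbf{Error decomposition.} The error is
\[
\widehat y_u - y_u = \sum_{j}\sum_{k\neq j} A_{uj}\,\sigma(j)\sigma(k)\,\mathbf{1}[h(j)=h(k)]\, z_k .
\]
The mean is zero once $\sigma$ is pairwise (or 4-wise) independent.

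\textbf{Variance.} A naive bound is $\sum_{j,k} A_{uj}^2 z_k^2/w$. The pairs $\{j,k\}$ give terms $\sigma(j)\sigma(k)$, so the sum is a chaos whose variance is
\[
\sum_{j\ne k}\Bigl(\tfrac{A_{uj}^2 z_k^2}{w} + \tfrac{A_{uj}A_{uk}z_jz_k}{w}\Bigr) \lesssim \frac{\|A_u\|_2^2\,\|z\|_2^2}{w}.
\]
By Chebyshev, with $w = O(\|A\|_{2\to\infty}^2\|B\|_{1\to1}^2/\eps^2)$ we already get error $\eps\|x\|_1$, using $\|z\|_2\le\|z\|_1$. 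This is quadratic in $1/\eps$ and too weak, so the plan is to exploit $\|z\|_2 \le \|z\|_1$ more carefully through a heavy/light split.

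\textbf{Heavy/light split.} Let $H$ be the set of $j$ with the largest $|z_j|$. Collisions with light coordinates contribute variance about $\|A_u\|_2^2\|z_{\text{tail}(w)}\|_2^2/w$, and $\|z_{\text{tail}(w)}\|_2^2 \le \|z\|_1^2/w$. The standard tail bound therefore gives error $\|A_u\|_2\|z\|_1/w$, so $w = \|A\|_{2\to\infty}\|B\|_{1\to1}/\eps$ suffices for the light part. This is the first term of \eqref{eqn:main-space}.

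Collisions involving heavy coordinates are the main obstacle. Each of the at most $k$ heavy items of mass $\|z\|_1/k$ lands in a bucket containing $j$'s with total weight $\|A_u\|_1/w$ on average, but $\|A_u\|_1$ is not controlled by $\|A\|_{2\to\infty}$. Bounding each coefficient by $\|A\|_{\infty\to\infty}$, which here is the maximum row $\ell_1$ norm, the heavy-collision error is roughly $\|A\|_{\infty\to\infty}\|z\|_1 \cdot (\text{number of heavy items})/w^2$ in expectation over the hash. Balancing $k$ against $w$ and requiring the heavy items to collide only with probability $O(1)$ (birthday, $k^2\lesssim w$) leads to $w \gtrsim (\|A\|_{\infty\to\infty}\|B\|_{1\to1}/\eps)^{2/3}$. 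My first attempt would be to take $k=w^{1/2}$ heavy items and show that, with constant probability, no two heavies collide. The error from heavies hitting the buckets of query-relevant $j$'s is then bounded by Markov via $\sum_j |A_{uj}|\cdot\Pr[\text{heavy in }h(j)]\cdot|z_{\text{heavy}}| \le \|A_u\|_1 \|z\|_1/w$. I'd verify this gives the $2/3$ exponent and adjust the threshold if it does not.

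\textbf{Randomness.} Finally, $h$ and $\sigma$ only need limited independence: pairwise or 4-wise for the variance step and $O(\log(1/\eps))$-wise for the heavy-item arguments. This gives a seed of $O(\log m+\log^2(1/\eps))$ bits, and taking the maximum of the two width requirements yields \eqref{eqn:main-space}.
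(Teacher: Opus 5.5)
\textbf{There is a genuine gap, and it lies in the estimator itself, not only in its analysis.} With a single CountSketch row, $\widehat y_u=\sum_j A_{uj}\widehat z_j$ cannot achieve the stated width.

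Take $B=I$ and $x=e_{k^*}$, so $z=e_{k^*}$ and $\|B\|_{1\to1}=\|x\|_1=1$. Let every row of $A$ equal $2\eps$ on a fixed set $S$ of $N=\lfloor 1/(4\eps^2)\rfloor$ coordinates with $k^*\notin S$, and $0$ elsewhere. Then $\|A\|_{2\to\infty}\le 1$, $\|A\|_{\infty\to\infty}\approx 1/(2\eps)$, $y_u=0$, and the theorem allows width $w=\Theta(\eps^{-4/3})$. But $\widehat y_u=2\eps\,\sigma(k^*)\sum_{j\in S'}\sigma(j)$ with $S'=\{j\in S:h(j)=h(k^*)\}$. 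Here $|S'|\sim\mathrm{Bin}(N,1/w)$ has mean $\Theta(\eps^{-2/3})\to\infty$. The error is at least $2\eps$ whenever this sign sum is nonzero, which happens with probability $1-o(1)$, even for fully random $h,\sigma$.

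The mechanism is that one heavy coordinate of $z$ contaminates every $\widehat z_j$ in its bucket, and $A_u$ adds these contaminations up. For $1$-sparse $z$, your naive second moment $\|A_u\|_2^2\|z\|_2^2/w$ is attained with $\|z\|_2=\|z\|_1$. So on such instances the quadratic width you first found is genuinely needed, and no heavy/light split helps, since the lone coordinate is always heavy.

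Your heavy-item heuristic with an extra factor $k/w^2$ has no justification. Your own Markov computation gives $\E|\text{heavy error}|\le\|A_u\|_1\|z\|_1/w$. That only yields $w=O(\|A\|_{\infty\to\infty}\|B\|_{1\to1}/\eps)$, exponent $1$ rather than $2/3$. This is just the triangle-inequality bound, and it gives $\log^2U/\eps$ for quantiles. The birthday condition (heavies avoiding each other) is beside the point, because the damage comes from heavies sharing buckets with the support of $A_u$. The threshold is also inconsistent: the light tail uses the top $w$ coordinates, while the birthday step uses the top $\sqrt w$.

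\textbf{The missing idea is median decoding over several independent rows}, which makes point-query errors at distinct coordinates nearly uncorrelated. The paper uses $5$ rows of width $r$. The median keeps $\E[\err(j)^2]=O(\|z\|_1^2/r^2)$ by \cite{LarsenPT21}, and gives $|\E[\err(j)\err(j')]|\le 70\|z\|_1^2/r^3$ for $j\ne j'$. The latter is proved by conditioning on the number $K$ of rows in which $j,j'$ collide:
\begin{itemize}
\item $K=0$ contributes $0$, by flipping the signs $\sigma_{a,j}$.
\item For $K=1$, the median is at most the second largest of the four good-row magnitudes. So $\E[\err(j)^2\mid K=1]\le 6\|z\|_1^2/r^2$, against $\Pr[K=1]\le 5/r$.
\item $K=2$ gives $3\|z\|_2^2/r$, against $\Pr[K=2]\le 10/r^2$.
\item $K\ge 3$ costs at most $10\|z\|_1^2/r^3$.
\end{itemize}
Expanding $\E[\langle A_u,e\rangle^2]$ into diagonal and off-diagonal terms then gives $O(\|A_u\|_2^2/r^2+\|A_u\|_1^2/r^3)\|B\|_{1\to1}^2\|x\|_1^2$, and Chebyshev yields both terms of the bound. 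In the example above, $\widehat z_j\ne z_j$ now requires $j$ to share a bucket with $k^*$ in at least $3$ of the $5$ rows. That has probability $O(r^{-3})=O(\eps^4)$, so the failure probability over all of $S$ is $O(\eps^2)$.

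Your randomness accounting also does not give the stated seed. $O(\log(1/\eps))$-wise independence over $[m]$ costs $O(\log m\cdot\log(1/\eps))$ bits. The paper instead uses $3$-wise independent hashes, pairwise independent signs, and a PRG fooling intersections of $4$ halfspaces, so that the $K=0$ cancellation survives.
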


We note that generally for a matrix $C$, $\|C\|_{2\rightarrow\infty}$ equals the largest $\ell_2$-norm of a row of $C$, $\|C\|_{1\rightarrow 1}$ equals the largest $\ell_1$-norm of a column, and $\|C\|_{\infty\rightarrow\infty}$ equals the largest $\ell_1$-norm of a row. The memory referenced in \eqref{eqn:main-space} does not include the memory to store $A,B$, and thus the theorem is most useful for decompositions that need not be stored explicitly but rather are implicit (such as the factorization implied by dyadic decomposition \cite{CormodeM05}).

The algorithm yielding \cref{thm:mainintro} is simple: sketch $Bx$ using $\countsketch$  with \emph{five} repetitions\footnote{Our analysis works for any constant and odd number of rows that is at least $5$.}, then estimate $y$ as $\hat{y} := A\cdot \median(\countsketch(Bx))$, where the median is applied to the five estimates; the update time depends on the column sparsity of the matrix $B$.  The analysis is novel, and departs from the typical analysis of $\countsketch$ in prior work. In particular, we need to take median of five estimates---in the standard analysis of $\countsketch$,  one repetition is enough for constant success probability---and in our analysis the random signs need not only be $O(1)$-wise independent, but also be generated by a pseudorandom generator fooling the intersection of constantly many halfspaces \cite{GopalanOWZ10,GopalanKM18,JayaramW21}. We also highlight a new \emph{correlation} property of $\countsketch$ that we believe could be useful elsewhere. \Cref{sec:overview} describes our main ideas.

\paragraph{Results for approximate quantiles.} 
We additionally provide some new contributions for the case $M=Q$, corresponding to approximate rank queries (related to quantiles as discussed):

\medskip

\noindent (i) \emph{Upper bounds.}  Via dyadic decomposition, our main upper bound yields an algorithm using $O(\eps^{-1}\log^{3/2}U + \eps^{-2/3}\log^{4/3} U) = O(\eps^{-1}\log^{3/2} U)$ words of memory, improving \cite{LuoWYC16} by a $(\log(\eps^{-1}\log U))^{3/2}$ factor and simplifying their analysis by avoiding certain conditioning and union bounds. Note that if one used the trivial decomposition $A = I$, $B = Q$, that would correspond to simply using a regular point query data structure to sketch $Qx$, which as mentioned above would require a prohibitive $\Omega(U/\eps)$ words of memory to achieve error $\pm\eps n$.

Furthermore, whereas the ``dyadic $\countsketch$'' analysis of \cite{LuoWYC16} needs an independent $\countsketch$ per level of the dyadic tree to reason about independence of errors in certain parts of the analysis, our approach implies every node in the tree decomposition can be hashed within a single $\countsketch$ instantiation, yielding a practical simplification; indeed, some industry implementations use a single table. 

\medskip 

\noindent (ii) \emph{Lower bounds.} 
We show that for approximate rank/quantiles, this factorization-based approach cannot yield a space improvement by more than $\log\log U$ factors. Specifically, we show that if $Q=AB$, then
$$
\|A\|_{2\rightarrow\infty}\|B\|_{1\rightarrow 1} = \Omega\left(\frac{\log^{3/2} U}{\log\log U}\right) .
$$
By the dyadic factorization, this is tight up to the $\log\log U$ factor. We note that some improvement to the dyadic factorization {\it is} possible; in particular, Fredman \cite{Fredman82} gave an upper bound of $\log_\lambda^{1.5} U$ with $\lambda = 3 + 2\sqrt 2 \approx 5.83$. However, our result shows that a significant asymptotic improvement is not possible.

\medskip

\noindent (iii) \emph{Separation.} We show any streaming algorithm to answer quantile queries with insertions and deletions to accuracy $\pm\eps n$ requires roughly $\Omega(\eps^{-1}\log (\epsilon n))$ words. Note this implies a separation between the fully dynamic model we consider and what is possible with insertion-only streams, where $O(\eps^{-1})$ words suffice \cite{GuptaSW24}. 

\medskip

Our matrix factorization approach resembles the {\it matrix mechanism} of \cite{LiMHMR15} in DP, and minimizing $\|A\|_{2\rightarrow\infty}\cdot\|B\|_{1\rightarrow 1}$ for $Q=AB$ is even the same as what surfaces in the privacy--utility tradeoff analysis in pure-DP for continual observation~(\Cref{app:matmech}), but the actual algorithms and analyses   differ. The recent work of \cite{BL26} proves an unconditional lower bound of $\Omega((\log U)^{1.5}/\epsilon)$ for pure-DP while the dyadic factorization (\cref{app:matmech}) gives an upper bound of $O((\log U)^2/\epsilon)$. This is reminiscent of the $\sqrt{\log U}$ gap we have in the upper and (unconditional) lower bounds for quantiles, though our factorization lower bound gives further evidence that the current upper bound may be tight. While the methods of \cite{BL26} and the proof of \cref{thm:fact-lb} have some common things (e.g., use of Haar basis), the proof ideas seem different. An intriguing open problem is to close the further $\sqrt{\log U}$ gap for pure-DP. We note the pure-DP error upper bound via the matrix mechanism involves the term $\|A\|_{2\rightarrow\infty}\|B\|_{1\rightarrow 1}\sqrt{\log U}$ (see \cref{eqn:dp-ub}), yielding $\log^2 U$ error via dyadic factorization, and thus another message of our factorization lower bound is that closing the gap to the lower bound of \cite{BL26} cannot be achieved merely by finding a better factorization.

\section{Overview of Techniques}\label{sec:overview}
\paragraph{$\countsketch$ data structure.}
We briefly recall the $\countsketch$ data structure~\cite{CharikarCF04} for sketching a vector $x \in \R^{U}$. The sketch is specified by
two parameters $\ell$ and $r$, where
$\ell$ is the number of rows (typically increased to reduce failure probability) and $r$ is the number of buckets per row (typically increased to reduce error).  The sketch consists of a hash function
$h_a:[U]\to[r]$ for each row $a \in [\ell]$, signs $\sigma_{a,u}\in\{\pm 1\}$ for each $a \in [\ell], u \in[U]$, and
$\ell \times r$ counters $C_{a,b}$ for each $a \in [\ell], b \in [r]$.   On update $(u,\Delta)$, perform
\begin{equation}\label{eq:csk1}
    C_{a,h_a(u)} \leftarrow C_{a,h_a(u)} + \sigma_{a,u}\Delta \qquad \text{for all } a\in[\ell].
\end{equation}
To answer $\pq(u)$, define the 
row estimator $\widehat{x}_u^{(a)}$
and the final estimator $\widehat{x}_u$ as
\begin{equation}\label{eq:csk2}
\widehat{x}^{(a)}_u := \sigma_{a,u} \cdot C_{a,h_a(u)}
\mbox{ and }
\widehat{x}_u := \median\big(\widehat{x}^{(1)}_u,\dots,\widehat{x}^{(\ell)}_u\big).
\end{equation}
Writing a row estimator as
$\widehat{x}^{(a)}_u = x_u + Y_{a,u}$
and expanding under full independence gives
\begin{equation}\label{eq:csk3}
Y_{a,u} := \sum_{v\neq u:\ h_a(v)=h_a(u)} \sigma_{a,u}\sigma_{a,v} x_v, \mbox{ and } \widehat{x}_u - x_u = \median(Y_{1,u},\dots,Y_{\ell,u})
=: \mathrm{err}(u).
\end{equation}
We throughout assume the signs $\sigma_{a,u}$ are fully independent; we will remove this later in \cref{sec:derandomize}. 

By \cite{LarsenPT21}, for $\ell \ge 3$ we have $\ex[\mathrm{err}(u)^2]=O(1) \cdot \|x\|_1^2/r^2$, so $r=O(1/\eps)$ gives $|\mathrm{err}(u)|\le \eps\|x\|_1$ with constant probability.

\paragraph{A new correlation bound.}
For aggregating many point queries, control of \emph{correlations} between the errors, and not just variance, is essential. We show that with $\ell=5$ rows,
\begin{theorem}[Correlation bound for median-of-5 $\countsketch$]\label{thm:cov-mains}
For any distinct $u\neq v\in[U]$,
\[
\big|\mathbb{E}[\mathrm{err}(u) \cdot \mathrm{err}(v)]\big|
= O(1/r^3)\,\|x\|_1^2.
\]
In particular, with $r=\lceil 1/\varepsilon\rceil$ this gives
\[
\big|\mathbb{E}[\mathrm{err}(u) \cdot \mathrm{err}(v)]\big|
= O(\varepsilon^3)\,\|x\|_1^2.
\]
\end{theorem}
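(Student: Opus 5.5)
The plan is to reduce the correlation of the two medians to the sign structure of a single ``dangerous'' coordinate. Fix $u\neq v$ and, for each row $a$, write
\[
Y_{a,u} = \sigma_{a,u}\sigma_{a,v} x_v\,\one[h_a(u)=h_a(v)] + Z_{a,u},
\]
and similarly for $Y_{a,v}$, where $Z_{a,u}$ collects the contributions of coordinates $w\notin\{u,v\}$.

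The key symmetry is this. Condition on all hashes and on the signs $\sigma_{a,w}$ for $w\notin\{u,v\}$, and flip $\sigma_{a,u}$ for every $a$ simultaneously. Then $Y_{a,u}\mapsto -Y_{a,u}$ for all $a$. Since $\ell=5$ is odd, the median is odd, so $\mathrm{err}(u)\mapsto-\mathrm{err}(u)$.

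On the other hand, $Y_{a,v}$ depends on $\sigma_{a,u}$ only through the collision term $\one[h_a(u)=h_a(v)]\,\sigma_{a,v}\sigma_{a,u}x_u$. So on the event $E$ that $h_a(u)\ne h_a(v)$ for every row $a$, $\mathrm{err}(v)$ is invariant under the flip. Hence $\E[\mathrm{err}(u)\mathrm{err}(v)\one_E]=0$, and it remains to bound $\E[\mathrm{err}(u)\mathrm{err}(v)\one_{E^c}]$.

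The naive bound is too weak. We have $\Pr[E^c]\le 5/r$, and by Cauchy--Schwarz with the $O(\|x\|_1^2/r^2)$ second moments from \cite{LarsenPT21} this gives only $O(\|x\|_1^2/r^{2.5})$ or so. The gain must come from the median. When only one or two rows have a $u$--$v$ collision, the median of five is insensitive to those rows up to a stability effect. The plan is therefore a finer decomposition.

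Let $S\subseteq[5]$ be the set of rows in which $u$ and $v$ collide. The case $|S|\ge 3$ has probability $O(1/r^3)$. Using a bound $|\mathrm{err}|\le$ median of $|Y_{a,\cdot}|$, together with $|Y_{a,u}|\le |x_v|+|Z_{a,u}|$ and H\"older's inequality, this case contributes $O(\|x\|_1^2/r^3)$. The $|x_u||x_v|$ pieces are fine since $|x_u||x_v|\le\|x\|_1^2$.

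For $|S|\in\{1,2\}$, apply the flip trick within the collision rows. Let $f$ be the median over the five rows. For $a\in S$, replace $Y_{a,v}$ by $Z_{a,v}$ to get a modified error $\mathrm{err}'(v)$ that is invariant under the flip. Then
\[
\E[\mathrm{err}(u)\,\mathrm{err}(v)\one_{S}] = \E\bigl[\mathrm{err}(u)\,(\mathrm{err}(v)-\mathrm{err}'(v))\,\one_S\bigr].
\]
With at most two rows perturbed out of five, $|\mathrm{err}(v)-\mathrm{err}'(v)|$ is nonzero only when the perturbed row crosses the median. This requires the order statistics of the three unperturbed rows to fall in a window that the perturbation of size $|x_u|$ can cross. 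The difference is at most the gap between the second and fourth order statistics of the unperturbed values, restricted to that event.

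I expect this step to be the main obstacle: showing that $\E[|\mathrm{err}(u)|\cdot|\mathrm{err}(v)-\mathrm{err}'(v)|\,\one_S]$ is $O(\|x\|_1^2/r^3)$ given $\Pr[S]=O(1/r)$. What I would try first is to condition on $S$ and exploit independence across rows. The three unperturbed rows are independent, and each satisfies $\Pr[|Z_{a,\cdot}|>t]\lesssim \|x\|_1/(rt)$ by Markov's inequality on the $\ell_1$ mass in the bucket. Both $|\mathrm{err}(u)|$ and the crossing gap are then controlled by order statistics of three or more independent rows of scale $\|x\|_1/r$, and their product needs an extra factor $1/r$ beyond the naive estimate. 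Here the fact that $\ell=5$ rather than $3$ matters: with three remaining independent rows, the probability that two of them are simultaneously large enough for heavy-tail interactions is $O(1/r^2)$, which makes the integrated product bound converge. I would carry this out by splitting on dyadic scales $t=2^j\|x\|_1/r$ of the relevant order statistics and summing tail bounds, separately treating the case where $x_u$ or $x_v$ itself is large.
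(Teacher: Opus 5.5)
Your $K=0$ sign-flip argument, the $|S|\ge 3$ case, and the identity $\E[\mathrm{err}(u)\mathrm{err}(v)\one_S]=\E[\mathrm{err}(u)(\mathrm{err}(v)-\mathrm{err}'(v))\one_S]$ are all correct. But the step you flag as the main obstacle is the whole content of the theorem, and for it you give a plan rather than a proof. The tool you propose would also not reach the stated bound. Markov's inequality turns the first-moment bound $\E[|Z_{a,\cdot}|]\le\|x\|_1/r$ into the weak-type tail $\Pr[|Z_{a,\cdot}|>t]\lesssim\|x\|_1/(rt)$. Summing such tails over dyadic scales throws away exactly the information you need. With $c=\|x\|_1/r$, a variable $X\le\|x\|_1$ with $\Pr[X>t]=c/t$ on $[c,\|x\|_1]$ obeys that tail bound but has $\E X\asymp c\log r$. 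For the second-largest $R_2$ of four independent such variables, $\E[R_2^2]\asymp c^2\log r$. So bookkeeping that uses only these tails cannot avoid a spurious $\log r$ and gives at best $O(\|x\|_1^2\log r/r^3)$. (Your case bookkeeping is also off. When $|S|=1$ there are four uncollided rows, not three. And $|S|=2$ has probability $O(1/r^2)$, so it should not be charged $\Pr[S]=O(1/r)$.)

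The missing idea is that Cauchy--Schwarz is not too weak. It just has to be applied to second moments \emph{conditional on $K$}, and these can be computed from exact first moments using independence across rows.

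\begin{itemize}
\item Deterministically, if at most $k\le 2$ of the five rows are collided, $|\median(z_1,\dots,z_5)|$ is at most the $(3-k)$th largest $|z_a|$ over uncollided rows. Otherwise three of the $z_a$ would exceed it in the same direction, and at least $3-k$ of them would be uncollided.
\item For $K=1$, write $X_a=|Y_{a,u}|$ over the four good rows. Then $\mathrm{err}(u)^2\le R_2^2\le\sum_{a<b}X_aX_b$ and $\E[X_aX_b]=\E X_a\,\E X_b\le(\|x\|_1/r)^2$. This gives $\E[\mathrm{err}(u)^2\mid K=1]\le 6\|x\|_1^2/r^2$, and likewise for $v$. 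Conditional Cauchy--Schwarz with $\Pr[K=1]\le 5/r$ then gives $30\|x\|_1^2/r^3$.
\item For $K=2$, $\mathrm{err}(u)^2\le\sum_{a\in G}Y_{a,u}^2$ over the three good rows. By sign cancellation, $\E[Y_{a,u}^2\mid h_a(u)\ne h_a(v)]\le\|x\|_2^2/r\le\|x\|_1^2/r$. Together with $\Pr[K=2]\le 10/r^2$ this closes the case.
\end{itemize}

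Your $\mathrm{err}'$ device is unnecessary. Changing one of five inputs moves the median by at most twice the second-largest absolute value of the other four, so your route reduces to this same estimate anyway.
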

Note that the variance bound alone, via Cauchy--Schwarz, 
would only yield
$$
\big|\mathbb{E}[\mathrm{err}(u) \cdot \mathrm{err}(v)]\big| \le \left(\mathbb E[\mathrm{err}(u)^2]\right)^{1/2}\left(\mathbb E[\mathrm{err}(v)^2]\right)^{1/2} = O(\|x\|_1^2/r^2) ,
$$
so the extra $1/r$ factor in
\cref{thm:cov-mains} is a crucial quantitative strengthening. We specialize to $\ell=5$ for clarity, but any constant $\ell\ge 5$ would suffice. We elaborate on the proof of \cref{thm:cov-mains} in \cref{sec:correlation}.
%\rmcomment{Should we have a formal statement for higher $\ell$?}\jncomment{Actually, what is the statement? And are there tradeoffs? Maybe at least worth pointing out what the tradeoffs would be.}

\paragraph{From correlation to $\pq_M$ and \Cref{thm:mainintro}.}
Let $M=AB$ be a factorization. The algorithm sketches $z:=Bx$ using $\countsketch$ (with $\ell=5$), recovers $\widehat{z}$, and outputs $\widehat{y} := A\widehat{z}$ as an estimate of $y:=Mx$. The error vector is $e:=\widehat{z}-z$, whose coordinates are exactly the point query errors, $\mathrm{err}(u)$, above. For any row $A_i$,
\[
\ex\big[\langle A_i,e\rangle^2\big]
= \sum_u A_{i,u}^2 \ex[\mathrm{err}(u)^2] + 
 \sum_{u\neq v} A_{i,u}A_{i,v}\,\ex[\mathrm{err}(u) \cdot \mathrm{err}(v)].
\]
The variance term is controlled by $\|A_i\|_2$ and the correlation term by the new $O(1/r^3)$ bound of \cref{thm:cov-mains} together with $\|A_i\|_1$. Choosing $r=\Theta(1/\eps)$ yields $|\widehat{y}_i-y_i| \le \eps\|x\|_1$ with constant probability whenever $\|A\|_{2\to\infty}\|B\|_{1\to 1}$ and $\|A\|_{\infty\to\infty}\|B\|_{1\to 1}$ are appropriately bounded, leading to the two terms in \cref{eqn:main-space} of \cref{thm:mainintro}.
Finally, to implement the sketch with limited randomness, we use constant-wise independent hashes and signs together with a PRG that fools intersections of constantly many halfspaces \cite{GopalanOWZ10,GopalanKM18,JayaramW21}, which suffices for the median analysis; details are in \cref{sec:derandomize}.

\paragraph{Quantiles: Classical factorization and upper bound.}
Let $Q \in \{0,1\}^{U\times U}$ be the lower triangular all ones (prefix-sum) matrix, so rank/quantile queries correspond to answering $\pq_Q$. The classical dyadic decomposition gives a factorization $Q=AB$ with explicit norm bounds:
\[
\|A\|_{2\to\infty}\le \sqrt{\log U},\qquad
\|A\|_{\infty\to \infty}\le \log U,\qquad
\|B\|_{1\to 1}\le \log U,
\]
and moreover $A,B$ are sparse with each row/column having at most $\log_2 U$ nonzeros. Specifically, the dyadic factorization is as follows. We imagine a perfect binary tree with leaves labeled $1,\ldots,U$ (suppose $U$ is a power of $2$). Then $B$ has $2U-1$ rows, corresponding to all nodes in the tree (including the leaves), and the $i$th column is the indicator vector of the ancestral path of leaf $i$, including $i$ itself. The $j$th row of $A$, $1\le j\le U$, is then the indicator vector of the at most $\log_2 U$ nodes, at most one per level of the tree, whose subtrees are disjoint and contain exactly the leaves $[1,j]$. $A$, $B$ can be made square via the observation that nodes that are a right child of their parent can have their rows eliminated from $B$ (and similarly columns eliminated from $A$), since their subtrees are never used in the dyadic decomposition of any prefix $[1,j]$.

Plugging the norm bounds into \Cref{thm:mainintro} yields $O(\eps^{-1}\log^{3/2} U)$ words of memory for turnstile approximate rank/quantile queries with additive error $\pm \eps n$, improving   \cite{LuoWYC16} by a $\Theta(\log^{3/2}(\eps^{-1}\log U))$ factor.  Besides this slight improvement, our framework newly implies that a single $\countsketch$ suffices across all dyadic levels: the $O(\eps^3)$ bound on the correlation lets us aggregate errors across all intervals without a separate sketch per level, simplifying the algorithm and tightening the analysis.

\paragraph{Quantiles: Factorization lower bound.}

Assume $U$ is a power of 2, $U = 2^L$ and $L \geq 10$. Let $A\in\R^{U\times m}$, $B\in\R^{m\times U}$. 
Let $a_i \in \reals^m, i \in [U]$ be the $i$th row of $A$, 
$b_k \in \reals^U, k \in [m]$ be the $k$th row of $B$, and 
$q_i = \one_{[i]} \in \reals^U$ be the $i$th row of $Q$. We refer to the $b_k$'s as the \emph{dictionary} rows.

The main idea of the proof is the following.  First we normalize the columns of $B$ so that each column
has at most unit $\ell_1$-norm, by writing $AB = (tA)(B/t)$; it then suffices to show a lower bound on the largest $\ell_2$-norm of the $a_i$'s.  To do this, we consider the depth $L$ tree defined by the dyadic intervals on $[U]$.  For every dyadic interval $I$, we consider
its left and
right children, and test $Q = AB$
against the left-minus-right Haar functional ($\phi_I$).
This leads to constraints of the form $\langle a_i,\delta_I\rangle=g_I(i)$, where  $\delta_I$ measures how imbalanced each $b_i$ is with respect to the left and right parts of the dyadic interval and $g_I$ is a function on $I$ that looks like a ``tent'' ($g_I(i) = 0$ for some prefix of $i$, then is linear with slope $-1$, then linear with slope $1$, then $0$ again). We call the inflection point where the slope changes from $-1$ to $1$ the ``kink''.

The proof then has two parts. 
\begin{itemize}[nosep]
\item Using martingale concentration, we show that $b_k$, whose dyadic imbalances $\delta_I(k)$ have large accumulated variance along many root-to-leaf paths, must contribute proportionally to its $\ell_1$-norm.  Since the column norms of $B$ have been scaled, only few such high-variance directions can occur. 
\item The kink of each tent $g_I$ implies that $\delta_I$ has a fresh component orthogonal to all same-or-coarser dyadic directions. Let $D_I$ denote the distance of $\delta_I$ to the span of the previous $\delta_{I'}$. Along any fixed root-to-leaf path, these fresh directions are orthonormal, so Bessel’s inequality forces $\|a_i\|_2^2$ to contain many terms of the fresh distances, $D_I^{-2}$. The martingale estimate bounds the total root-to-leaf sum $\sum_I D_I$ by $O(\log L)$ for most rows, while the kink property guarantees that $\Omega(L)$ levels of the tree contribute.  Hence many $D_I$’s are at most $O((\log L)/L)$, giving $\|a_i\|_2 \geq L^{3/2}/(\log L)$.  \end{itemize}
Since $L = \log U$, the lower bound follows.

\paragraph{Quantiles: Logarithmic streaming lower bound.}
We also show an unconditional lower bound for the streaming problem in \Cref{sec:loglb}: any turnstile algorithm that answers rank queries to within additive error $\pm \eps n$ with constant success probability requires
\[
\Omega\!\left(\frac{\log(\eps n)\,\log(\eps U/\log(\eps n))}{\eps}\right)\ \text{bits},
\]
which is $\Omega(\eps^{-1}\log U)$ words for $\eps$ not too small and $n$ subexponential in $U$ (and vice versa). The proof reduces from the \emph{augmented indexing} problem in the one-way communication model. In this problem, Alice receives $x\in\Sigma^t$ and Bob receives $i\in[t]$ as well as $x_1,\ldots,x_{i-1}$, and Bob must compute $x_i$ after a single message from Alice. It is known that the randomized public coin one-way communication complexity for constant success probability is $\Omega(t\log|\Sigma|)$ bits \cite{MiltersenNSW98,BarYossefJKK04}, trivially achieved by Alice simply sending $x$ entirely.

In our reduction, Alice encodes a sequence of codewords from an error-correcting code into a carefully spaced multiset stream (with geometrically decreasing block sizes), runs the streaming algorithm, and sends her  memory state to Bob. Bob deletes the prefix corresponding to his index and then issues quantile queries that recover the next codeword symbol-by-symbol. Error-correcting codes ensure robustness: even if a constant fraction of symbols are recovered incorrectly, Bob still decodes the codeword, contradicting the one-way lower bound for augmented indexing unless the sketch has the stated size. This also yields a clean separation from insertion-only streams, where $O(\eps^{-1})$ space is achievable \cite{GuptaSW24}.

\subsection{Correlation Analysis for Errors in $\countsketch$}\label{sec:correlation}

We next summarize the main argument behind \cref{thm:cov-mains}. Recall the setup of $\csk$ as defined by \cref{eq:csk1,eq:csk2,eq:csk3}. 
\ignore{
We start by recalling the definition of $\countsketch$. Fix parameters:
\begin{itemize}
  \item Number of rows $\ell=5$,
  \item Number of buckets per row $r$.
\end{itemize}

For each row $a\in[\ell]$, pick:
\begin{itemize}
  \item A hash function $h_a:[n]\to[r]$,
  \item Independent Rademacher signs $\sigma_{a,u}\in\{\pm1\}$ for each $u\in[n]$.
\end{itemize}

\paragraph{Independence assumptions.}}
As a reminder, we will first assume full independence:
different rows are independent, and within a row $a$, the random variables $\{h_a(u)\}_{u\in[U]}$ are i.i.d.\ uniform on $[r]$ and independent of the i.i.d.\ Rademacher signs $\{\sigma_{a,u}\}_{u\in[U]}$.
All the steps below can be derandomized to allow for pseudorandom $h,\sigma$ that can be stored in small memory, dominated by the rest of the algorithm (see \cref{sec:derandomize}), via limited independence and PRGs for intersections of halfspaces \cite{GopalanOWZ10}.

\ignore{
\paragraph{Sketch maintenance.}

We are trying to answer point-queries for a vector $x \in \R^n$ that is updated in a sequence of insertion/deletion operations. 

Row $a$ maintains counters $C_{a,1},\dots,C_{a,r}$ initialized to $0$.
On an update $(u,\Delta)$ (insertions/deletions), for each row $a$ perform
\[
C_{a,h_a(u)} \leftarrow C_{a,h_a(u)} + \sigma_{a,u}\Delta .
\]

\paragraph{Point query.}
For an index $i\in[n]$, define the row estimator
\[
\widehat{x}^{(a)}_i := \sigma_{a,i} \, C_{a,h_a(i)}.
\]
Then the final estimator is the median across rows:
\[
\widehat{x}_i := \median\Big(\widehat{x}^{(1)}_i,\dots,\widehat{x}^{(\ell)}_i\Big).
\]

\paragraph{Noise variables.}
Expand each row estimator as
\[
\widehat{x}^{(a)}_i
= x_i + Y_{a,i},
\qquad
Y_{a,i} := \sum_{u\neq i:\ h_a(u)=h_a(i)} \sigma_{a,i}\sigma_{a,u} x_u.
\]
Equivalently,
\[
\widehat{x}_i = x_i + \median(Y_{1,i},\dots,Y_{\ell,i}),
\qquad
\err(i) = -\median(Y_{1,i},\dots,Y_{\ell,i}).
\]}

\paragraph{Proof overview of \cref{thm:cov-mains}.} 
We use the following \emph{collision statistic}. Fix $u \neq v$ and define the number of rows in which $u$ and $v$ collide:
\[
K := \big|\{a\in[5]: h_a(u)=h_a(v)\}\big|.
\]
Since the rows are independent and $\Pr[h_a(u)=h_a(v)]=1/r$, we have $K\sim\mathrm{Bin}(5,1/r)$. We write
\[
\E[\err(u) \cdot \err(v)]
=
\sum_{k\ge 0} \E[\err(u) \cdot \err(v) \cdot \mathbf{1}_{\{K=k\}}].
\]
We show:
\begin{itemize}[nosep]
  \item If $K=0$, then $\E[\err(u) \cdot \err(v)\mid K=0]=0$ by a sign-flip symmetry argument.
  \item If $K\ge 3$, then the contribution is bounded by $\|x\|_1^2\Pr[K\ge 3] \le 10\|x\|_1^2/r^3$.
  \item If $K=1$, then a median robustness argument yields
    $\E[\err(u)^2\mid K=1] =  O(\|x\|_1^2/r^2)$,
    hence the covariance contribution is $O(\|x\|_1^2/r^3)$.%\footnote{We use $A\lesssim B$ to denote $A = O(B)$.}
  \item If $K=2$, then a simpler robustness argument yields
    $\E[\err(u)^2\mid K=2]=  O(\|x\|_2^2/r) \le O(\|x\|_1^2/r)$,
    hence the covariance contribution is also $O(\|x\|_1^2/r^3)$.
\end{itemize}
In the interest of space, we describe the details of each of the steps in \cref{sec:correlationfull}. 

\begin{theorem}\label{th:cskderand}
    We can choose pseudorandom hash functions $h_a:[U] \rightarrow [r]$, and signs $\sigma_{a,u} \in \{\pm 1\}$ with a total seed of length $O(\log U + \log^2(1/\varepsilon))$ bits to satisfy the same guarantees as in \cref{thm:cov-mains}. In addition, the signs and hash functions can be computed efficiently. 
\end{theorem}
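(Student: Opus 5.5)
We derandomize each ingredient of the proof of \cref{thm:cov-mains} separately, and take the seed to be the concatenation of independent seeds for the $\ell=5$ rows. For each row $a$ we draw $h_a$ from a $k$-wise independent family for a suitable constant $k$; this costs $O(\log U)$ bits per row. The signs $(\sigma_{a,u})_{u\in[U]}$ we draw as the output of a PRG fooling intersections of $O(1)$ halfspaces with constant error (any small constant), following \cite{GopalanOWZ10,GopalanKM18}; this costs $O(\log U + \log^2(1/\eps))$ bits once we condition on the hashes, as explained next. Both families are explicit and evaluable in time polylogarithmic in $U$ and $1/\eps$, which gives efficiency.

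The plan is to revisit the four cases of the correlation proof ($K=0,1,2,\ge3$) and check which distributional properties each one uses. The quantity $K$ depends only on the pair $(h_a(u),h_a(v))$ across independent rows, so pairwise independence of each $h_a$ already gives $K\sim\mathrm{Bin}(5,1/r)$.

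\textbf{Case $K=0$.} The argument is sign-flip symmetry: flipping $\sigma_{a,u}$ for all $a$ negates $\err(u)$ and leaves $\err(v)$ unchanged when $u,v$ never collide. To preserve this exactly, we multiply every pseudorandom sign by an independent truly random sign $\sigma'_u$ generated by a $4$-wise (constant-wise) independent family. Flipping $\sigma'_u$ preserves the distribution.

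\textbf{Cases $K\ge3$.} The tail uses only $|\err|\le\|x\|_1$ together with $\Pr[K\ge3]\le 10/r^3$, which still holds under limited independence.

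\textbf{Cases $K=1,2$.} These are the crux. The bounds $\E[\err(u)^2\mid K=k]$ come from the median robustness: the error exceeds $t$ only if at least $3$ of the $5$ row estimates $Y_{a,u}$ exceed $t$ in absolute value. Given the hashes, each event $\{|Y_{a,u}|>t\}$ is a union of two halfspaces in the signs of row $a$. So the event that the median error exceeds $t$ is a Boolean combination of $O(1)$ halfspaces, which is an $O(1)$-size union of intersections of $O(1)$ halfspaces in the sign vector.

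Writing $\E[\err^2]=\int 2t\Pr[|\err|>t]\,dt$, we need each tail probability fooled. Here the PRG's additive error $\delta$ is the obstacle, since it would contribute $\delta\|x\|_1^2$ over the range $t\le\|x\|_1$, and we need total error $O(\|x\|_1^2/r^2)$ conditioned on $K$. We handle this by splitting the range of $t$:
\begin{itemize}
\item For $t\gg\|x\|_1/r$, we use fourth-moment / Chebyshev-type bounds per row, which need only $4$-wise independence of the signs and $O(1)$-wise independence of the hashes. Combined over $3$ of the $5$ rows, using row independence, these give tails decaying fast enough to integrate to $O(\|x\|_1^2/r^2)$ (respectively $O(\|x\|_2^2/r)$ when $K=2$).
\item For $t\lesssim\|x\|_1/r$, the trivial bound contributes only $O(\|x\|_1^2/r^2)$.
\end{itemize}

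Then $\E[\err(u)\err(v)\mathbf 1_{K=k}]\le\Pr[K=k]\,\E[\err^2\mid K=k]$, by Cauchy--Schwarz and symmetry, which yields $O(\|x\|_1^2/r^3)$ as before. The PRG is needed where the original proof used exact distributional statements about the median beyond moments; there we fool the relevant halfspace intersections to error $\mathrm{poly}(\eps)$, costing $O(\log U+\log^2(1/\eps))$ bits by \cite{GopalanKM18,JayaramW21}. That error, times $\|x\|_1^2$, is at most $O(\eps^3\|x\|_1^2)$, which is absorbed.

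The main obstacle I expect is verifying that the median-robustness step for $K=1$ really reduces to fooling $O(1)$ halfspaces plus constant-wise independent moment bounds. In particular, the conditioning on $K$ and on the colliding row must be done in a way compatible with the pseudorandom hashes. I would first try conditioning on the hashes of all rows and fooling only over the signs.
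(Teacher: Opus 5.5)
\textbf{The gap is in the case $K=0$, which you treat as the easy one.} You claim exact sign-flip symmetry survives if each sign is multiplied by a $4$-wise independent $\sigma'_u$, because ``flipping $\sigma'_u$ preserves the distribution.'' This is false for any constant-wise independent family with a short seed.

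\begin{itemize}
\item The symmetry argument conditions on \emph{all} other signs in the buckets $h_a(u)$ and $h_a(v)$. These buckets can contain far more than $3$ items.
\item Under $k$-wise independence, $\sigma'_u$ need not be symmetric given those coordinates. In the standard $\mathbb{F}_2$-linear constructions, $\sigma'_u$ is determined by $O(\log U)$ other coordinates. The law of the vector is invariant under flipping coordinate $u$ only for a fully independent family, which costs $U$ bits.
\item Multiplying by an independent PRG output $\tau$ does not help. The products $\E[\prod_{w\in S}\tau_w]\cdot\E[\prod_{w\in S}\sigma'_w]$ with $u\in S$ and $|S|>4$ need not vanish.
\end{itemize}

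Since $\Pr[K=0]\approx 1$, any residual $\E[\err(u)\err(v)\mid K=0]$ larger than $O(\|x\|_1^2/r^3)$ destroys the theorem. So $K=0$ is exactly the case that needs the PRG, with error $\gamma=\Theta(r^{-3})$. This is where the $\log^2(1/\eps)$ in the seed comes from. A constant-error PRG, as in your first paragraph, cannot suffice.

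\textbf{How the paper handles $K=0$ (\cref{lem:derandomize-K0}).} Fix $h$ with $K=0$ and write $\err(u)\err(v)=\int_0^n\int_0^n(\cdots)\,ds\,dt$, a signed sum of the four indicators $\mathbf 1\{\err(u)>s,\err(v)>t\}$, etc.
\begin{itemize}
\item Each event is a disjoint union of at most $2^8$ median patterns $(S,T)$, and each pattern factors over rows.
\item Because $h_a(u)\neq h_a(v)$, one can write $Y_{a,u}=\sigma_{a,u}L_{a,u}$ and $Y_{a,v}=\sigma_{a,v}L_{a,v}$. Hence each row factor is a disjoint union of at most $4$ intersections of at most $4$ halfspaces in $\sigma_a$.
\item With independent rows, $|\prod_a p_a-\prod_a q_a|\le 5\gamma$. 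So the pseudorandom expectation is within $5\cdot 2^{20}\gamma n^2$ of the uniform one, which is $0$ by your symmetry argument.
\end{itemize}

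\textbf{Secondary issues.}
\begin{itemize}
\item The cases $K=1,2$ need no PRG. They use only $3$-wise independent $h_a$, pairwise independent $\sigma_a$ independent of $h_a$, independence across rows, the deterministic median lemma, and $R_2^2\le\sum_{a<b}X_aX_b$ with first moments.
\item Your tail-splitting there is unnecessary, and as sketched it does not give the claimed rate. Chebyshev with $\E[Y_{a,u}^2\mid\text{good}]\le\|x\|_2^2/r$ only gives $\E[\err(u)^2\mid K=1]=O(\|x\|_2^2/r)$. That is off by a factor of $r$ when one coordinate dominates $x$.
\item The Hadamard product (\cref{lem:bitwise}) plays the opposite role from the one you gave it. PRG output times a pairwise independent vector yields signs that are both pairwise independent (needed for $K=2$) and fool intersections of halfspaces (needed for $K=0$).
\end{itemize}
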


\paragraph{AI acknowledgments.} 
GPT5.5-Pro and Gemini 3.1 Pro were used in checking the calculations and polishing parts of the paper. All of the main ideas, except for the lower bound for $\gamma_{2,1}(Q)$ in \cref{sec:fac-lb} are mainly  due to the authors. 

While public frontier models (e.g., GPT5.5-Pro, Anthropic Fable 5) failed to make meaningful progress on showing lower bounds on $\gamma_{2,1}(Q)$, the present proof of this lower bound was generated entirely by an internal version of the UCLA AI for Math Moonshot harness \cite{uclamoonshot}. The proof generated by the harness was rewritten by the authors here for clarity and consistency with the rest of the writeup.

\ignore{
\section{Background} 

Let $[U] = \{1, \ldots, U\}$ be the universe.  Let $(x_1, \ldots, x_n)$ be a stream of $n$ elements, where each $x_u \in [U]$.  For an element $x \in [U]$, let $\rank(x) = |\{i : x_u \leq x \}|$ be its rank in the stream.   

We are interested in the following question:

\begin{definition}[Approximate Rank and Approximate Quantile] 
Fix an error parameter $\eps > 0$.
Given a stream $(x_1,\ldots,x_n) \in [U]^n$, our goal is to answer two types of queries:
\begin{itemize}[nosep]
    \item \emph{Approximate-Rank}: Given $x \in [U]$, $\arank(x)$ should return a value r such that  $|\rank(x) - r| \leq \epsilon n$. 
    \item \emph{Approximate-Quantile}: Given $q \in [n]$, $\aQ(q)$ should return an element $x \in [U]$ such that $|\rank(x) - q | \leq \epsilon n$. 
\end{itemize}
\end{definition}

We refer to the former as the $\AR(\eps, \delta, n, U)$ problem, and the latter as the $\AQ(\eps,\delta, n, U)$ problem, where $\delta$ denotes the failure probability to answer a query correctly. Furthermore, the stream supports insertions and deletions, with the constraint that a deletion must always be preceded by its associated insertion.  
\begin{definition}[Frequency estimation]
We say a data structure gives \emph{$(\epsilon_1,\epsilon_2)$-variance frequency estimator} if, for any stream with insertions and deletions, the structure can give an estimate $\hat{x}$ for the frequency vector $x \in \R_{\geq 0}^n$ such that 
\begin{itemize}[nosep]
    \item For all $j$, $\ex[(\hat{x}_j - x_j)^2] \leq \epsilon_1^2 \|x\|_1^2$.
    \item For all $i \neq j$, $|\ex[(\hat{x}_i - x_u) (\hat{x}_j - x_j)]| \leq \epsilon_2^2 \|x\|_1^2$.
\end{itemize}
\end{definition}

Let $\|M\|_{p,q}$ denote the $\ell_p$ to $\ell_q$ operator norm of matrix $M$, i.e. $\sup_{x\neq 0}\frac{\|Mx\|_q}{\|x\|_p}$. We will make use of the following norms. Recall that $\|M\|_{2,\infty}$ is the max $\ell_2$-norm of a row of $M$, and $\|M\|_{1,1}$ is the max $\ell_1$-norm of a column of $M$. 

\begin{definition}
    For a matrix $M$, $p,q \geq 0$, define 
    $$\gamma_{p,q}(X) = \min_{R,U: M = R U^T} (\max \text{$\ell_p$-norm of a row of $R$}) \cdot (\max \text{$\ell_q$-norm of a row of $U$}).$$
\end{definition}

In particular, we will focus on $\gamma_{1,1}$, $\gamma_{2,1}$. To use these norms in algorithms, we need the following abstract data structure.

Define CountSketch and its space etc}

\section{Using Factorizations for $\pq_M$}

We now prove \cref{thm:mainintro}, which states that if certain operator norms of the factorization of $M$ are small, then a space-efficient algorithm exists.  

\begin{proof}[Proof of \cref{thm:mainintro}]
Let $M=AB$ be a factorization. The algorithm sketches $z:=Bx$ using $\countsketch$ with $\ell = 5$, and a parameter $r$ to be chosen later. We recover $\widehat{z}$ as given in \cref{eq:csk1,eq:csk2}, and output $\widehat{y} := A\widehat{z}$ as an estimate of $y:=Mx$. The error vector is $e:=\widehat{z}-z$, whose coordinates are exactly the point query errors $\mathrm{err}(u)$ as defined in \cref{eq:csk3} for $\csk$.

Our goal is to bound $\ex[(y_u - \widehat{y}_u)^2]$ for any $u \in [U]$. Observe that $y_u - \widehat{y}_u = \iprod{A_u}{e}$, where $A_u$ is the $u$th row of $A$. Now, 
\[
\ex\big[\langle A_u,e\rangle^2\big]
= \sum_v A_{u,v}^2 \ex[\mathrm{err}(v)^2] + 
 \sum_{v'\neq v} A_{u,v}A_{u,v'}\,\ex[\mathrm{err}(v) \cdot \mathrm{err}(v')].
\]

By \cite{LarsenPT21}, we know that for any $v$, $\ex[\mathrm{err}(v)^2] = O(1/r^2) \|z\|_1^2$. From \cref{thm:cov-mains}, we also have that for $v \neq v'$, $|\ex[\mathrm{err}(v)  \cdot \mathrm{err}(v')]| = O(1/r^3) \|z\|_1^2$. Also, note that $\|z\|_1 = \|B x\|_1 \leq \|B\|_{1 \rightarrow 1} \|x\|_1$. 

Combining these equations, we get 
\begin{align*}
    \ex[\langle A_u, e\rangle^2] &\leq O(1/r^2) \cdot \|A_u\|_2^2 \cdot \|B\|_{1 \to 1}^2 \|x\|_1^2 + O(1/r^3) \cdot \|A_u\|_1^2 \cdot  \|B\|_{1 \to 1}^2 \|x\|_1^2\\
    &\leq O(1/r^2) \cdot \|A\|_{2 \to \infty}^2 \cdot \|B\|_{1 \to 1}^2 \|x\|_1^2 + O(1/r^3) \cdot \|A\|_{\infty \to \infty}^2 \cdot  \|B\|_{1 \to 1}^2 \|x\|_1^2,
\end{align*}
where the last inequality follows from the definition of $\|\cdot\|_{2 \to \infty}$ and $\|\cdot\|_{\infty \to \infty}$ (they are the max $\ell_2$-norm and max $\ell_1$-norm of any row of $A$, respectively). 

Thus, to apply Markov's inequality and have the final expression be at most $\eps^2 \|x\|_1^2/3$, it suffices to take 

$$r = O(1)\cdot \max\left(\frac{\|A\|_{2 \to \infty} \cdot \|B\|_{1 \to 1}}{\eps}, \left(\frac{\|A\|_{\infty \to \infty} \cdot \|B\|_{1 \to 1}}{\eps}\right)^{2/3}\right) . $$

Finally, to get the memory bound, we use the derandomized version of $\csk$ as given by \cref{th:cskderand}, applied to the vector $Bx\in\R^m$ (where $B$ has $m$ rows), finishing the proof. 
\end{proof}

\ignore{
\section{Application: Turnstile Rank and Quantile Queries}\label{sec:quantiles}

\paragraph{Prefix sums and rank queries.}
For strict turnstile streams, the frequency vector $x\in\R_{\ge 0}^{U}$ induces a multiset of size
$n:=\|x\|_1$ over $[U]$.
Define the \emph{prefix-sum matrix} $Q\in\{0,1\}^{U\times U}$ by
\[
Q_{i,j} := \ind{j\le i}.
\]
Then $(Qx)_i=\sum_{j\le i} x_j$ equals the rank of $i$ in the multiset.
Thus, the $(Q$-point query problem) is exactly approximate rank queries with additive error $\pm \eps n$.

Approximate quantile queries (given $q\in[n]$, output an element of rank in $[q-\eps n,q+\eps n]$)
reduce to approximate rank queries via binary search over $[U]$.

\rmcomment{Do we need the next notation. It makes things simpler but it is being introduced kind of late.}

\begin{definition}[Approximate rank and quantile queries]\label{def:AR-AQ}
Fix parameters $\eps\in(0,1)$ and failure probability $\delta\in(0,1)$.
In a strict turnstile stream over $[U]$ producing $x\in\R_{\ge0}^{U}$ of size $n=\|x\|_1$:
\begin{itemize}
\item The problem $\AR(\eps,\delta,n,U)$ (approximate rank) asks to answer queries $\mathsf{rank}(i)=\sum_{j\le i}x_j$ to additive error $\pm \eps n$ with probability at least $1-\delta$.
\item The problem $\AQ(\eps,\delta,n,U)$ (approximate quantile) asks to answer queries $q\in[n]$ by outputting an index $i\in[U]$ whose rank lies in $[q-\eps n,q+\eps n]$ with probability at least $1-\delta$.
\end{itemize}
\end{definition}}

\section{Quantiles and the Dyadic Factorization}

The classical dyadic decomposition yields an explicit factorization of $Q$, which we restate for completeness.

\begin{lemma}[Dyadic factorization]\label{lm:dyadicfact}
Assume $U$ is a power of two. There is a factorization $Q=AB$ with $d=2U-1$ columns/rows such that
\[
\|A\|_{2\to\infty}\le \sqrt{\log_2 U+1},\qquad
\|A\|_{\infty\to\infty}\le \log_2 U+1,\qquad
\|B\|_{1\to 1}\le \log_2 U+1.
\]
Moreover, $A$ is row-sparse and $B$ is column-sparse: every row of $A$ and every column of $B$ has at most $\log_2 U+1$ nonzeros.
\end{lemma}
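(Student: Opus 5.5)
We will write down the dyadic factorization explicitly and check $Q=AB$ entrywise. Let $U=2^L$ and take the perfect binary tree $T$ whose leaves are $1,\dots,U$ from left to right. Each node $w$ corresponds to the dyadic interval $I_w\subseteq[U]$ of leaves in its subtree. There are $2U-1$ nodes in $T$ and $L+1=\log_2 U+1$ levels, and the dyadic intervals of any fixed level partition $[U]$. Index the rows of $B$ and the columns of $A$ by the nodes. Set $B_{w,j}:=\mathbf 1[j\in I_w]$, so column $j$ of $B$ is the indicator of the ancestral path of leaf $j$, including $j$ itself. For each $i\in[U]$, let $S_i$ be the canonical dyadic decomposition of the prefix $[1,i]$, and set $A_{i,w}:=\mathbf 1[w\in S_i]$.

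The construction of $S_i$ is the standard one. Write $i$ in binary and walk from the root to leaf $i$. Each time the path goes to a right child, put the left sibling into $S_i$. At the end, put leaf $i$ itself into $S_i$. The intervals $I_w$ for $w\in S_i$ are then pairwise disjoint, their union is exactly $[1,i]$, and $S_i$ contains at most one node per level.

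Next we verify $Q=AB$:
\[
(AB)_{i,j}=\sum_{w}A_{i,w}B_{w,j}=\bigl|\{w\in S_i: j\in I_w\}\bigr| .
\]
Because the $I_w$ with $w\in S_i$ are disjoint, this count is $0$ or $1$. Because their union is $[1,i]$, it equals $1$ exactly when $j\le i$, which is $Q_{i,j}$.

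The norm bounds then follow from counting. Column $j$ of $B$ has exactly one nonzero per level, because the intervals in a level partition $[U]$, so it has $L+1$ ones and $\|B\|_{1\to1}= L+1$. Row $i$ of $A$ has $|S_i|\le L+1$ entries, all equal to $1$. Hence $\|A\|_{\infty\to\infty}\le L+1$ and $\|A\|_{2\to\infty}=\max_i\sqrt{|S_i|}\le\sqrt{L+1}$. The same counts give the claimed sparsity of the rows of $A$ and the columns of $B$.

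The only step needing care is the disjoint-cover property of $S_i$, which we prove by induction on the depth along the root-to-leaf path of $i$. The invariant is: after processing depth $d$, the chosen nodes tile $[1,\,\text{left endpoint of the current node}-1]$. Going to a right child adds the left sibling's interval, which is exactly the gap between the old left endpoint and the new one. At depth $L$ the current node is leaf $i$, so adding it extends the tiling to $[1,i]$.

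The lemma asks only for the $2U-1$-node version, so we would skip the remark about deleting right children to make the matrices square. Such nodes have zero columns in $A$, so deleting them preserves $Q=AB$ and the bounds.
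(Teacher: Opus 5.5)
Your proof is correct and uses the same construction as the paper: $B$ is the ancestor-path indicator matrix, and row $i$ of $A$ is the indicator of a disjoint dyadic cover of $[1,i]$; you write out the check $AB=Q$ and the tiling induction where the paper just says ``by construction.''

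One small correction concerns $S_i$. Your walk-based $S_i$ is the binary-expansion cover of $[1,i-1]$ plus the leaf $\{i\}$. So for even $i$ it contains both leaves $i-1$ and $i$, and the claim ``at most one node per level'' is false. The bound $|S_i|\le L+1$ still holds, since each of the $L$ steps adds at most one node, plus leaf $i$. For the same reason your closing remark fails, because even leaves are right children with nonzero columns in your $A$. Taking the cover given by the binary expansion of $i$ itself fixes both points.
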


\begin{proof}
Consider the perfect binary tree on $U$, with the leaves corresponding to the dyadic intervals of $[U]$.
Let $d=2U-1$ be the number of nodes.
Index the rows of $B\in\{0,1\}^{d\times U}$ by tree nodes and the columns by leaves: $B_{v,j}=1$ iff node $v$ lies on the root-to-leaf path of leaf $j$.
Then each column has exactly $\log_2 U+1$ ones, so $\|B\|_{1\to 1}\le \log_2 U+1$.

Define $A\in\{0,1\}^{U\times d}$ by letting row $i$ be the indicator of the (at most $\log_2 U+1$) dyadic intervals whose disjoint union is the prefix $[1,i]$.
Then each row has at most $\log_2 U+1$ ones, hence $\|A\|_{\infty\to\infty}\le \log_2 U+1$ and $\|A\|_{2\to\infty}\le \sqrt{\log_2 U+1}$.
By construction, $AB=Q$.
\end{proof}

%\subsection{Space bound for turnstile rank queries}

\begin{theorem}
    There is an $O((\log ^{3/2} U)(\log(1/\delta))/\eps)$-word turnstile algorithm for $\AR(\eps,\delta,n,U)$. Similarly, there is an $O((\log^{3/2} U)(\log(1/\eps \delta))/\eps + \log \log U)$-word turnstile algorithm for $\AQ(\eps,\delta,n,U)$.
\end{theorem}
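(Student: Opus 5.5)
We reduce \AR\ to \cref{thm:mainintro} applied with $M=Q$ and the dyadic factorization $Q=AB$ of \cref{lm:dyadicfact}, assuming without loss of generality that $U$ is a power of two (pad the universe, which changes $\log U$ by at most an additive constant).

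\textbf{Constant-probability rank sketch.} Plugging $\|A\|_{2\to\infty}\le\sqrt{\log_2U+1}$, $\|A\|_{\infty\to\infty}\le\log_2U+1$ and $\|B\|_{1\to1}\le\log_2U+1$ into \eqref{eqn:main-space} gives
\[
r=O\bigl(\eps^{-1}\log^{3/2}U+(\eps^{-1}\log^2U)^{2/3}\bigr)=O(\eps^{-1}\log^{3/2}U),
\]
where the second term is dominated because $\eps^{-2/3}\log^{4/3}U\le \eps^{-1}\log^{3/2}U$ for $\eps<1$. Neither $A$ nor $B$ needs to be stored. An update to $x_u$ touches the $\log_2U+1$ ancestors of leaf $u$, i.e.\ the nonzeros of column $u$ of $B$. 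A rank query sums the estimated counts of the at most $\log_2U+1$ dyadic intervals that decompose $[1,u]$, i.e.\ the nonzeros of row $u$ of $A$. The random seed of $O(\log U+\log^2(1/\eps))$ bits is $O(\log U)$ words at most, which is dominated. Since $n=\lceil\|x\|_1\rceil\ge\|x\|_1$, the error $\eps\|x\|_1$ is at most $\eps n$, so each query succeeds with probability at least $2/3$.

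\textbf{Amplification for \AR.} Run $k=O(\log(1/\delta))$ independent copies and output the median of their answers. The median is wrong only if at least half the copies fail, and a Chernoff bound makes this happen with probability at most $\delta$. This costs $O(\eps^{-1}\log^{3/2}U\log(1/\delta))$ words.

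\textbf{Reduction for \AQ.} A quantile query is answered by binary search over $[U]$ using approximate rank queries with error parameter $\eps/2$. The search finds the smallest $u$ whose estimated rank is at least $q$. If every rank answer made along the search is within $\pm\eps n/2$, then $\mathrm{rank}(u)\ge q-\eps n$ and $\mathrm{rank}(u-1)\le q+\eps n$. Rank is monotone, so the search is consistent once all answers are accurate. Because the guarantee holds per query rather than uniformly, we need all $\log_2U$ queries of one search to succeed together. A union bound therefore calls for failure probability $\delta/\log U$ per query, which gives $\log(\log U/\delta)$ rather than the claimed $\log(1/(\eps\delta))$.

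\textbf{Main obstacle.} The hardest step is obtaining the stated $\log(1/(\eps\delta))$ dependence instead of $\log(\log U/\delta)$. I would try the following fix first. The binary search descends the dyadic tree, and it only needs to distinguish ranks at resolution $\eps n$. So I would restrict the queries to a fixed grid of $O(1/\eps)$ relevant candidate points, namely the estimated $\eps n$-quantile boundaries. Union bounding over these grid points, with per-query failure probability $\eps\delta$, costs $\log(1/(\eps\delta))$ repetitions. The additive $O(\log\log U)$ term would cover the bookkeeping of the search, such as storing an index of the current position and the level counter. If this grid argument does not go through, the fallback is to accept $\log(\log U/\delta)$ repetitions, and I would then check whether this can be absorbed into the claimed bound in the parameter regime of interest.
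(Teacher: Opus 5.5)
\textbf{The \AR\ part.} This is exactly the paper's proof: plug \cref{lm:dyadicfact} into \cref{thm:mainintro}, note that the $(\eps^{-1}\log^2U)^{2/3}$ term is dominated, and amplify with the median of $O(\log(1/\delta))$ independent copies.

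\textbf{The \AQ\ part.} The paper only invokes the standard reduction, which makes $O(\eps^{-1}\log U)$ adaptive \AR\ queries and (implicitly) union-bounds over them. That needs per-query failure probability about $\eps\delta/\log U$, i.e.\ $O(\log(\log U/(\eps\delta)))=O(\log(1/(\eps\delta))+\log\log U)$ repetitions. So the $\log\log U$ in the statement is evidently meant to sit inside the repetition factor. Read literally as an additive term, it is dominated by the first summand and carries no content, which is also why your ``bookkeeping'' interpretation cannot be the intended one.

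Your fallback uses $\lceil\log_2U\rceil$ binary-search probes with $O(\log(\log U/\delta))$ repetitions. That is essentially the paper's argument, even slightly sharper, and the ``main obstacle'' you identify is an artifact of the literal reading.

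The grid fix, by contrast, does not work and should be dropped. The estimated $\eps n$-quantile boundaries are functions of the sketch's randomness, so you cannot union bound over them as if they were fixed points. Even the true boundaries are unknown points of $[U]$ that can only be located through $\Theta(\log U)$ adaptive probes, every one of which must be answered correctly. For example, take masses $n/2$ at $1$ and at $U$ and $q=n/2+2\eps n$: the only valid answer is $U$, and every probe of the search has rank $n/2$, the same margin $2\eps n$ below $q$.

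\textbf{The union bound needs tightening.} The probes are chosen adaptively from the same sketch, so the per-query guarantee does not directly cover ``the queries of one search''; the paper glosses over this too. One clean fix is as follows.
\begin{itemize}
\item Let $\alpha$ be the largest index with $\mathrm{rank}(\alpha)<q-\eps n$, and $\beta$ the smallest with $\mathrm{rank}(\beta)>q+\eps n$.
\item Call a probe $m$ bad if $m\le\alpha$ and its estimate is $\ge q$, or $m\ge\beta$ and its estimate is $<q$. A bad probe has error exceeding $\eps n$.
\item If no probe is bad, your binary-search invariant already gives a valid output.
\item Before the first bad probe, the current search interval $[lo,hi]$ satisfies $hi>\alpha$ and $lo\le\beta$. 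Hence a bad probe $m\le\alpha$ (resp.\ $m\ge\beta$) lies in an interval containing $\alpha$ (resp.\ $\beta$).
\item Intervals at a given depth of the implicit search tree are disjoint. So the first bad probe lies on the search path to $\alpha$ or to $\beta$, a fixed set of at most $2\lceil\log_2U\rceil$ points.
\end{itemize}
A union bound over that fixed set gives the $O(\log(\log U/\delta))$ repetitions you want, and it works with error parameter $\eps$ directly, so the $\eps/2$ is unnecessary.
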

\begin{proof} 

Plugging \cref{lm:dyadicfact} into \cref{thm:mainintro} yields an $O(\log^{3/2}U/\eps)$-word algorithm for turnstile approximate rank queries.
More precisely, since
\[
\|A\|_{2\to\infty}\|B\|_{1\to1} = O(\log^{3/2} U)
\quad\text{and}\quad
\|A\|_{\infty\to\infty}\|B\|_{1\to1} = O(\log^{2} U),
\]
\cref{thm:mainintro} gives a space bound (in words) of
\[
O\!\left(
\max\left\{
\frac{\log^{3/2}U}{\eps},\ 
\left(\frac{\log^{2}U}{\eps}\right)^{2/3}
\right\}
\right)\ 
=
O\!\left(\frac{\log^{3/2}U}{\eps}\right)\,,
\]
where the final simplification holds for all $\eps\le 1$.

This gives an algorithm for solving $\AR(\eps,1/3,n,U)$.  Note that since $B$ has at most $\log_2 U + 1$ non-zeros per column, the update time 
is $O(\log U)$.  To boost the success probability, we can do the standard trick of instantiating $O(\log(1/\delta))$ independent copies and taking a median of the estimators. 

To obtain an algorithm for solving $\AQ(\eps,\delta,n,U)$ we use the standard reduction that makes multiple adaptive  (typically $O((1/\eps) \log U)$) queries to an $\AR$ oracle~\cite{CormodeM05, LuoWYC16}.
\end{proof}

As mentioned earlier, focusing on $\delta = 1/3$, the above result has two advantages over \cite{LuoWYC16}: (i) it simplifies the analysis and algorithm requiring a single $\csk$ table for the entire tree as is actually used by practitioners and (ii) saves an additional $O((\log(\eps^{-1} (\log U)))^{3/2})$ factor in space. 

\subsection{Further Applications}

We can obtain dyadic-like factorizations from \cref{lm:dyadicfact} for matrices $M$ that have a specific \emph{tree-like} or small \emph{treewidth} structure. Concretely, the arguments of \cite{KelnerKMR21} show that whenever the graph formed by the support of entries in $(M M^T)^{-1}$ has treewidth $k$, there is a factorization of $M = A B$ where the rows of $A$ and columns of $B$ are $O(k \log U)$ sparse. This effectively leads to $\|A\|_{2 \to \infty} \|B\|_{1 \to 1} \approx O((k \log U)^{3/2})$ as long as the coefficient sizes are bounded which is typically the case. Thus, for all such matrices $M$, we get an algorithm for $\pq_M$ that uses $O((k \log U)^{3/2}/\eps)$ words. 

The above includes matrices $M$ that generalize quantiles to allow for equations like: $y := Mx$, with $y_i = \sum_{j=1}^k a_{j} y_{i-j} + x_i$. Note that when $y := Q x$, we have $y_i = y_{i-1} + x_i$ and is thus a special case of the above. 
This can also capture \emph{exponentially-decaying} sums: $y_i = (1-\rho) y_{i-1} + x_i$, which corresponds to the matrix $M_\rho$ where $M_\rho(i,j) = (1-\rho)^{i-j}$ if $j \leq i$ and $0$ otherwise. We would still get a turnstile algorithm with same complexity as for quantiles by the same method. 

\section{Logarithmic Lower Bound}
\label{sec:loglb}

Here we prove that any turnstile streaming algorithm that approximates rank queries to within error $\eps$ while supporting deletions must use roughly $\Omega(\eps^{-1} \log (\eps n))$ words. Note that this is logarithmically worse than what can be done in the insertion-only setting.%\jncomment{This is a lower bound for $\AQ$, whereas our upper bound is for $\AR$ in rest of paper; should make sure to spell out the connection}

\begin{theorem}\label{th:mainlb}
There is a fixed constant $\eps_0$ such that the following holds. Suppose streaming algorithm $\mathcal A$ solves the $\AQ(\eps,\delta,n,U)$ problem in strict turnstile streams for some $\eps\in(0,\eps_0)$ and $\delta<1/10$. Then $\mathcal A$ must use at least $\Omega(\eps^{-1}\log ( \eps n)\log(\eps U/\log(\eps n)))$ bits of memory. 
\end{theorem}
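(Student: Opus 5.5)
We reduce from augmented indexing, following the outline in \cref{sec:overview}. Set $t := \Theta(\log(\eps n))$ blocks. Fix an error-correcting code $\mathcal C\subseteq [s]^{k}$ with $k=\Theta(1/\eps)$ symbols, constant rate and constant relative distance (say $1/4$), over an alphabet of size $s=\Theta(\eps U/\log(\eps n))$ (Reed--Solomon or a concatenated code suffices once $s\ge k$; otherwise we use a standard constant-rate code over a smaller alphabet with no loss). Alice's input is $c^{(1)},\dots,c^{(t)}\in\mathcal C$, i.e.\ an element of $\Sigma^t$ with $\log|\Sigma| = \Theta(k\log s)$. The augmented indexing bound of \cite{MiltersenNSW98,BarYossefJKK04} then gives $\Omega(t k\log s) = \Omega(\eps^{-1}\log(\eps n)\log(\eps U/\log(\eps n)))$ bits, which is the claimed bound, provided the protocol below succeeds with constant probability.

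\textbf{Encoding.} Split $[U]$ into $t$ disjoint ranges $R_1<R_2<\dots<R_t$, each of size about $U/t$. Split each $R_j$ further into $k$ consecutive sub-ranges $R_{j,1}<\dots<R_{j,k}$, each of size $s$. Block $j$ receives weight $w_j := 2^{t-j}\cdot w_0$, with geometrically decreasing weights and $w_t=w_0=\Theta(1)$, so that the total is $n = \Theta(k\,2^t)$, i.e.\ $t=\Theta(\log(\eps n))$. For symbol $i$ of block $j$, Alice inserts $w_j$ copies of the element at offset $c^{(j)}_i$ inside $R_{j,i}$. She runs $\mathcal A$ on this stream and sends its memory state to Bob. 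Bob knows $c^{(1)},\dots,c^{(j-1)}$ and wants $c^{(j)}$, so he deletes all items of blocks $1,\dots,j-1$; these deletions are legal strict-turnstile updates. The remaining multiset has size $n_j=\sum_{j'\ge j} k w_{j'} \le 2k w_j$, and block $j$ holds at least half of this mass. Each symbol therefore occupies a mass of $w_j \ge n_j/(2k)$ located at a single point.

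\textbf{Decoding.} Choose $k = c/\eps$ for a small enough constant $c$, so that $\eps n_j \le w_j/4$. For each $i$, Bob queries the quantile $q_i := (i-\tfrac12)w_j$. Any valid $\AQ$ answer $u$ satisfies $\mathrm{rank}(u)\ge q_i-\eps n_j$ and $\mathrm{rank}(u-1)\le q_i+\eps n_j$. This forces $u$ to be exactly the point where the $i$th symbol's mass sits, since ranks jump by $w_j$ there and are flat elsewhere in block $j$. From $u$, Bob reads off $c^{(j)}_i$.

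Each query individually succeeds with probability at least $1-\delta>9/10$. The expected number of wrong symbols is therefore at most $k/10$, and by Markov the number of errors is below the unique-decoding radius $k/8$ with probability at least $1/5$. This is a constant, and it can be raised by standard means or absorbed into the augmented-indexing bound, which holds for any constant success probability above the trivial level. Here lies the subtlety: since success is only guaranteed per query, we must rely on Markov rather than a union bound, and the code's distance has to be chosen so that the probability of correct decoding exceeds what augmented indexing requires. Picking relative distance close to $1/2$ and $\delta<1/10$ handles this.

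\textbf{Main obstacle.} The main obstacle is the parameter bookkeeping. We need
\begin{itemize}
\item the geometric weights to make block $j$ dominate after deletion (to within the constant factor absorbed by $\eps_0$),
\item the code alphabet to fit within $R_{j,i}$ while keeping $\log s=\Theta(\log(\eps U/\log(\eps n)))$,
\item and the regime where $\eps U/\log(\eps n)$ is small, where we fall back to a binary code and the bound degrades gracefully.
\end{itemize}
Integrality of weights and $n$ being the final database size are handled by padding with dummy mass at the top of the universe that Bob also deletes.
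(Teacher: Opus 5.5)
Your reduction is the same as the paper's: augmented indexing over a codeword alphabet, geometrically decreasing multiplicities per block, Bob deleting earlier blocks and querying each symbol's midpoint rank, then Markov plus unique decoding. The only differences are constants: you use ratio $2$, where the paper uses ratio $11$ together with error parameter $\eps/10$. However, the final probability bookkeeping does not go through as written.

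With relative distance near $1/2$, the unique-decoding radius is about $k/4$. Markov then only bounds the failure probability by $\delta k/(k/4)<2/5$. Success above $3/5$ falls short of the $2/3$ that the augmented-indexing theorem and \cref{cor:augindex} require as cited. The corollary goes through the binary problem, so it certainly cannot absorb the $1/5$ you get with distance $1/4$.

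The paper avoids this by using the large alphabet. The random code of \cref{lem:code-exists} has relative distance $1-1/\rho=99/100$ and still has $\log|\mathcal C|=\Omega(\ell\log q)$. So Bob can correct up to $\ell/3$ errors, and Markov gives failure below $3/10$. Any relative distance of at least $3/5$ would suffice.

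There are two other ways to repair the step:
\begin{itemize}
\item Run $O(1)$ independent copies of $\mathcal A$ and take a per-query majority, which drives $\delta$ down at constant cost in space.
\item Prove the Fano-type large-alphabet version of augmented indexing that you allude to. It is true, but it is not what is cited.
\end{itemize}
Also, as the paper does, have Bob restore the post-deletion snapshot before each query. That way the per-query guarantee applies even if answering a query modifies memory.
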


The following lemma is standard.

%% Proof below is the  https://en.wikipedia.org/wiki/Gilbert%E2%80%93Varshamov_bound_for_linear_codes, we can also/additionally cite, no deterministic constructions are known.

\begin{lemma}\label{lem:code-exists}
For any integers $q,\ell>1$ and $\rho \le q/(2e)$, there exists a code $\mathcal C_{q,\ell,\rho}$ with $|\mathcal C_{q,\ell,\rho}| \ge (q/(e\rho))^{\ell/(2\rho)}$, alphabet size $q$, block length $\ell$, and relative distance $1 - 1/\rho$.
\end{lemma}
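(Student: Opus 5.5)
We will prove \cref{lem:code-exists} by the probabilistic method combined with a greedy (Gilbert--Varshamov style) selection. Relative distance $1-1/\rho$ means that any two distinct codewords in $[q]^\ell$ agree in at most $\ell/\rho$ coordinates. Set $t := \lfloor \ell/\rho \rfloor$. Our plan is to bound the size of a ``ball'' of words that agree with a fixed word $w$ in more than $t$ positions. We then pick codewords greedily, so that the code size is at least $q^\ell$ divided by the ball size.

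\textbf{Step 1 (ball bound).} Fix $w\in[q]^\ell$. Let $N$ be the number of words $w'$ that agree with $w$ on at least $s:=t+1 \ge \ell/\rho$ coordinates. Choosing the agreement set and leaving the remaining coordinates free gives
\[
N \le \binom{\ell}{s} q^{\ell-s} \le \left(\frac{e\ell}{s}\right)^{s} q^{\ell-s} \le (e\rho)^{s} q^{\ell-s},
\]
where the last inequality uses $s\ge \ell/\rho$.

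\textbf{Step 2 (greedy).} Repeatedly add to the code any word that agrees in at most $t$ positions with every word already chosen. Each chosen word excludes at most $N$ words, so the process yields a code of size at least
\[
\frac{q^\ell}{N}\ge \left(\frac{q}{e\rho}\right)^{s}.
\]
Because $\rho\le q/(2e)$, we have $q/(e\rho)\ge 2>1$. The bound is therefore increasing in $s$, and since $s\ge \ell/\rho\ge \ell/(2\rho)$ we obtain $|\mathcal C|\ge (q/(e\rho))^{\ell/(2\rho)}$.

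\textbf{Main obstacle.} The only delicate point is the rounding: the threshold $s$ must be at least $\ell/(2\rho)$ while still guaranteeing distance $1-1/\rho$. The factor $2$ in the exponent gives exactly this slack, and it also covers the edge case $\ell/\rho<1$, where $s=1$ and the code only needs distinct symbols in every position. In that case greedy still yields $\ge q/(e\rho)\ge (q/(e\rho))^{\ell/(2\rho)}$ codewords. Alternatively, one could choose a random linear code over a prime field and union bound, but the greedy argument avoids any need for $q$ to be a prime power.
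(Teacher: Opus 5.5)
Your proof is correct in the meaningful regime $\rho>1$, but it takes a different route from the paper.

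The paper draws $N$ independent uniformly random words. It bounds $\Pr[\alpha(C_i,C_j)>\ell/\rho]<(e\rho/q)^{\ell/\rho}$ by a Chernoff bound and then union-bounds over all $N^2$ pairs. The requirement $N^2(e\rho/q)^{\ell/\rho}\le 1$ is exactly where the exponent $\ell/(2\rho)$ comes from.

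Your core estimate is the same tail bound: $\binom{\ell}{s}q^{-s}\le (e\rho/q)^{s}$ is your ball size divided by $q^\ell$. You feed it into a Gilbert--Varshamov greedy packing instead of a pairwise union bound. This avoids the quadratic loss and needs no concentration inequality. It actually gives $|\mathcal C_{q,\ell,\rho}|\ge (q/(e\rho))^{\lfloor \ell/\rho\rfloor+1}$, roughly the square of the stated bound. The random-coding route buys something you don't need here: a random code of somewhat smaller size works with good probability.

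Your ``main obstacle'' remark is therefore a misreading. The factor $2$ in the exponent is an artifact of the paper's union bound over pairs, not slack your rounding consumes. You already have $s=\lfloor \ell/\rho\rfloor+1>\ell/\rho$, and your argument never uses the $2$.

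One small caveat: the greedy must pick distinct words, and your computation needs $s\le \ell$, i.e.\ $\rho>1$. For $\rho\le 1$ you get $s>\ell$, so $N=0$ and the step $q^\ell/N$ is meaningless. For $\rho=1$ the whole space $[q]^\ell$ trivially works. For small $\rho<1$ the stated size can even exceed $q^\ell$: with $q=\ell=2$ and $\rho=1/4\le q/(2e)$ it asks for $(8/e)^4>4$ codewords. So $\rho\ge 1$ is an implicit hypothesis of the lemma, shared by the paper's proof. This is harmless, since the lemma is applied with $\rho=100$.
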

\begin{proof}
   Pick $C_1,\ldots,C_N$ independently and uniformly at random from $[q]^\ell$. Define $\alpha(C,C') := |\{ k : C_k = C'_k \}|$. Then for any $i\neq j$, $\ex \left[ \alpha(C_i, C_j) \right] = \ell/q$, so that by a Chernoff bound
    $$
    \Pr \left[\alpha(C_i, C_j) > \frac{\ell}{\rho} \right] < \left(\frac {e\rho}{q} \right)^{\frac \ell\rho} .
    $$
    By a union bound, $\Pr[\exists i\neq j,\ \alpha(C_i, C_j) > \ell/\rho] < N^2 ((e\rho)/q)^{\ell/\rho} \le 1$ for $N = (q/(e\rho))^{\ell/(2\rho)}$. Thus, a code of size $N$ with the desired property exists.
\end{proof}

% We also make use of Fano's inequality.

% \begin{lemma}[Fano's inequality]
% Suppose $X$ is a random variable finitely supported in $\mathcal{X}$. Let $\hat{X} \eqdef g(Y)$ be the predicted value of $X$ for $g$ a deterministic function also taking values in $\mathcal{X}$. Then if $\Pr[\hat{X}\neq X] = \delta$,
% $$
% H(X|Y) \le H(X|\hat{X})\le H_2(\delta) + \delta\log_2(|\mathcal{X}|-1), 
% $$
% where $H_2(\delta) \eqdef -\delta\log_2 \delta - (1-\delta)\log_2(1-\delta)$.
% \end{lemma}

\begin{definition}[Augmented indexing]
In the $\augmented_{n,\Sigma}$ problem, there are two parties Alice and Bob. Alice receives $x\in \Sigma^n$, and Bob receives $x_1,\ldots,x_{i-1},i$. Alice must send a single message to Bob, who must then output some $b\in \Sigma$. We would like that $\Pr[b\neq x_i] \le \delta$ for some given $\delta\in(0,1)$.
    \end{definition}

    \begin{theorem}[\cite{BarYossefJKK04,MiltersenNSW98}]
    The one-way randomized communication complexity of $\augmented_{n,\{0,1\}}$ with public randomness and failure probability $\delta < 1/3$ is $\Omega(n)$. Succinctly,
    $$
    \forall \delta < 1/3,\ R_\delta^{\text{pub},\rightarrow}(\augmented_{n,\{0,1\}}) = \Omega(n) .
    $$
    \end{theorem}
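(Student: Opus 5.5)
We prove the bound by an information-theoretic argument under the uniform input distribution, after reducing to deterministic protocols via Yao's minimax principle. Fix $\delta<1/3$ and a public-coin one-way protocol $\Pi$ that errs with probability at most $\delta$ on every input, sending messages of at most $c$ bits. Let $\mu$ be the distribution with $X\in\{0,1\}^n$ uniform and $I\in[n]$ uniform and independent of $X$; Bob's input is $(X_1,\ldots,X_{I-1},I)$. Since the error is at most $\delta$ for every fixed public string, averaging gives a fixed public string, and hence a deterministic protocol, whose distributional error under $\mu$ is at most $\delta$ and whose messages still have at most $c$ bits. So it suffices to lower bound $c$ for deterministic protocols with $\mu$-error at most $\delta$.

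In the deterministic protocol the message is $M=M(X)$, a function of Alice's input only, taking at most $2^c$ values. Hence
\[
c\ \ge\ H(M)\ \ge\ I(M;X)\ =\ \sum_{i=1}^n I(M;X_i\mid X_{<i}),
\]
by the chain rule for mutual information. Since the $X_i$ are independent uniform bits, $H(X_i\mid X_{<i})=1$. Therefore
\[
I(M;X_i\mid X_{<i}) \;=\; 1-H(X_i\mid M,X_{<i}).
\]

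Next we bound each conditional entropy by Fano's inequality. For a fixed index $i$, Bob's output is a deterministic function $g_i(M,X_{<i})$, and it predicts $X_i$ with some error $\delta_i:=\Pr[g_i(M,X_{<i})\ne X_i]$. Because $I$ is uniform and independent of $X$, $\frac1n\sum_i\delta_i$ is exactly the $\mu$-error, so it is at most $\delta$. Fano's inequality for a binary variable gives
\[
H(X_i\mid M,X_{<i})\ \le\ H_2(\delta_i),
\]
where $H_2$ is the binary entropy function, extended so that $H_2(p)=1$ for $p\ge 1/2$. With this extension $H_2$ is still concave and nondecreasing on $[0,1]$.

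Finally, we sum and apply Jensen's inequality. Summing the per-index bounds gives
\[
c\ \ge\ \sum_i\bigl(1-H_2(\delta_i)\bigr)\ \ge\ n\Bigl(1-H_2\bigl(\tfrac1n\textstyle\sum_i\delta_i\bigr)\Bigr)\ \ge\ n\bigl(1-H_2(\delta)\bigr).
\]
The middle step uses concavity of $H_2$, and the last uses that $H_2$ is nondecreasing. Since $\delta<1/3$, $H_2(\delta)<H_2(1/3)<1$, so $c=\Omega(n)$.

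The main subtlety is keeping the conditioning straight. Bob's knowledge of $X_{<i}$ is exactly what makes the chain-rule term $I(M;X_i\mid X_{<i})$, rather than the unconditioned $I(M;X_i)$, the quantity controlled by Fano. The lower bound survives this extra side information only because $M$ depends solely on $X$ and $I$ is independent of $X$, which is what allows the per-index Fano bounds to be averaged over $I$.
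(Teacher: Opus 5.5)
Your proof is correct. Since the paper gives no proof of this statement and only cites \cite{MiltersenNSW98,BarYossefJKK04}, your proposal is a self-contained replacement rather than a variant of an argument in the text.

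What you give is the standard information-theoretic argument:
\begin{enumerate}
\item Fix the public coins to obtain a deterministic protocol with small error under the uniform product distribution.
\item Bound $c \ge I(M;X)=\sum_i I(M;X_i\mid X_{<i})$ by the chain rule.
\item Control each term by binary Fano, using that Bob's guess for $X_i$ is a function of $(M,X_{<i})$.
\end{enumerate}
Compared with citing the result as a black box, this yields the explicit bound $c\ge n\bigl(1-H_2(\delta)\bigr)$. That bound holds for every $\delta<1/2$, not only $\delta<1/3$. It also shows why Bob's side information is harmless: $X_{<i}$ is exactly the conditioning in the chain-rule term, and $M$ depends only on $X$ while $I$ is independent of $X$.

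Three small fixes:
\begin{itemize}
\item Your averaging sentence is inverted. The hypothesis is that every \emph{input} has error at most $\delta$ over the public coins; fixing the coins does not keep the error below $\delta$ on every input. The correct step is the exchange $\E_R\Pr_\mu[\mathrm{err}\mid R]=\E_\mu\Pr_R[\mathrm{err}]\le\delta$, which yields some fixed public string with $\mu$-error at most $\delta$.
\item Messages of length at most $c$ take fewer than $2^{c+1}$ values, so you only get $H(M)\le c+1$. This does not affect the $\Omega(n)$ conclusion.
\item Extending $H_2$ beyond $1/2$ is unnecessary. $H_2$ is concave on all of $[0,1]$, and $\frac1n\sum_i\delta_i\le\delta<1/2$ lies in the range where $H_2$ is increasing, so Jensen plus monotonicity applies directly.
\end{itemize}
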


    \begin{corollary}\label{cor:augindex}
    Suppose $|\Sigma|\ge 2$ is a power of $2$. Then
    $$
    \forall \delta < 1/3,\ \R_\delta^{\text{pub},\rightarrow}(\augmented_{n,\Sigma}) = \Omega(n\log|\Sigma|) .
    $$
    \end{corollary}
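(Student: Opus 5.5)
We reduce $\augmented_{n',\{0,1\}}$ with $n' = n\log_2|\Sigma|$ to $\augmented_{n,\Sigma}$, so that any one-way protocol for the large-alphabet problem yields one for the binary problem with the same message length. Write $s := \log_2|\Sigma|$; this is an integer because $|\Sigma|$ is a power of $2$. Identify $\Sigma$ with $\{0,1\}^s$. Alice receives $X\in\{0,1\}^{ns}$ and groups it into $n$ consecutive blocks of $s$ bits, giving $x\in\Sigma^n$ with $x_j = (X_{(j-1)s+1},\ldots,X_{js})$.

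Bob receives an index $t\in[ns]$ together with $X_1,\ldots,X_{t-1}$. He sets $i := \lceil t/s\rceil$, the block containing bit $t$. The prefix $X_1,\ldots,X_{t-1}$ contains all of blocks $1,\ldots,i-1$, so Bob knows $x_1,\ldots,x_{i-1}$ and the index $i$. This is a legitimate input for $\augmented_{n,\Sigma}$. Both parties run the given protocol for $\augmented_{n,\Sigma}$ with failure probability $\delta<1/3$, using the same public randomness and the same single message from Alice. Bob obtains $b\in\Sigma$ and outputs bit $t-(i-1)s$ of $b$. Whenever $b=x_i$, this output equals $X_t$, so the binary protocol fails with probability at most $\delta<1/3$.

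The message length is unchanged by this simulation. Hence
$$R_\delta^{\text{pub},\rightarrow}(\augmented_{n,\Sigma}) \ge R_\delta^{\text{pub},\rightarrow}(\augmented_{ns,\{0,1\}}) = \Omega(ns) = \Omega(n\log|\Sigma|),$$
where the equality is the cited theorem of \cite{BarYossefJKK04,MiltersenNSW98} applied with length $ns$.

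The argument has no real obstacle. The one point to check is that Bob's prefix always determines every full earlier block. This holds because bit $t$ lies in block $i$, so every bit of blocks $1,\ldots,i-1$ has index at most $(i-1)s < t$.
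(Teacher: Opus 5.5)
Your proposal is correct and follows essentially the same route as the paper: reduce $\augmented_{n\log_2|\Sigma|,\{0,1\}}$ to $\augmented_{n,\Sigma}$ by grouping Alice's bits into $\log_2|\Sigma|$-bit blocks and having Bob recover the whole block containing his bit. Your explicit check that Bob's prefix covers all earlier full blocks is a detail the paper leaves implicit.
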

    \begin{proof}
    We reduce from $\augmented_{n\log_2(|\Sigma|),\{0,1\}}$. Let $x\in\{0,1\}^{n\log_2(|\Sigma|)}$ be Alice's input, and $x_1, \ldots, x_{i-1}, i$ be Bob's input. Let $\Pi$ be a protocol solving $\augmented_{n,\Sigma}$ involving players $\Pi_{\alice}$ 
    and $\Pi_{\bob}$. Alice breaks up her input into $n$ blocks of $\log_2(|\Sigma|)$ bits to then create a new input $x'\in\Sigma^n$ by treating each block as an element of $\Sigma$. She then simulates  $\Pi_{\alice}$ with input $x'$ and sends the corresponding message. Bob similarly simulates $\Pi_{\bob}$ to recover not just the $i$th bit of $x$, but the entire $\log_2(|\Sigma|)$-bit block of $x$ containing $x_i$.
    \end{proof}

We are now ready to prove a lower bound for the approximate quantiles problem when deletions are allowed in the stream.

\begin{proof}[Proof of \cref{th:mainlb}]
    Without loss of generality we assume $\mathcal A$ has error parameter $\eps/10$. We reduce from $\augmented_{s,\mathcal C_{q,\ell,\rho}}$ for $q = \eps U/\log(\eps n), \rho = 100, \ell = K/\eps, s = (1/2)\log_{10}(\eps n)$, and $K>0$ a sufficiently small constant to be determined, where ${\mathcal C}_{q,\ell,\rho}$ is from \cref{lem:code-exists} (we can remove some of its elements to round down its size to a power of $2$). Alice's input is a sequence  $C_1,\ldots,C_s\in \mathcal C_{q, \ell,\rho} \subset [q]^\ell$. Bob receives $C_1,\ldots,C_{i-1},i$ for some $i\in[s]$ and would like to compute $C_i$.  Let $\mathcal A$ be a streaming algorithm for the approximate quantiles problem.
    
    To use $\mathcal A$ to solve $\augmented$, Alice creates a stream from her input as follows. For each $i\in[s]$ and $j\in[\ell]$, she inserts $((i-1)\ell + (j-1))q + (C_i)_j$ into the stream $\eps n/11^i$ times.  (By construction, the intervals of values corresponding to any $(i, j)$ are disjoint.) She then takes the memory contents $\mathcal M$ of $\mathcal A$ and sends that to Bob as a message. Bob then initializes $\mathcal A$ in memory state $\mathcal M$. He then deletes items inserted into the stream by Alice due to $C_{i'}$ for $i' < i$, then immediately afterward stores a snapshot of the memory state $\mathcal M'$ of $\mathcal A$; he also defines $n' := \sum_{k=i}^s \eps n\ell/11^k < 1.1Kn/11^i$. 
    
    We now describe how Bob attempts to recover each symbol of $C_i\in[q]^\ell$, one index at a time. More specifically, he will compute some $\tilde C\in[q]^\ell$, which he will then attempt to use to recover $C_i$. Bob loops over each $j\in[\ell]$, and in each iteration does the following. He first resets the memory of $\mathcal A$ to $\mathcal M'$. He then queries $\mathcal A$ for an item $z_j\in[U]$ with rank $t = \lfloor(j-1/2)\eps n/11^i\rfloor$, then sets $\tilde C_j := z_j - ((i-1)\ell + (j-1))q$.

    Note that the elements of the universe that might be generated in the stream due to the symbol at $(C_i)_j$ are all strictly less than those that may be generated by the symbol at $(C_{i'})_{j'}$ for $(i,j) < (i',j')$ (lexicographic comparison). Furthermore, since the symbol at $(C_i)_j$ causes $\eps n/11^i$ insertions in the stream, from the point in the stream corresponding to memory state $\mathcal M'$, the item $\beta := ((i-1)\ell + (j-1))q + (C_i)_j$ spans an interval of ranks $[(j-1)\eps n/11^i, j\eps n/11^i)$, and Bob queries $t$ precisely in the middle of this interval. Thus, $\mathcal A$ will return $\beta$ as long as its rank error is less than $
    \Delta := (\eps n)/(2\cdot 11^i)$. By the guarantees of an $\AQ$ data structure, $\mathcal A$ will have rank error at most $(\eps/10) n'< .11\eps n/11^i < \Delta$ with probability at least $1-\delta$. Thus, by the linearity of expectation, Bob expects that $\tilde C_j \neq (C_i)_j$ for at most $\delta \ell < \ell/10$ values of $j\in[\ell]$. Then by Markov's inequality, the probability that $\tilde C_j \neq (C_i)_j$ for more than $\ell/3$ values of $j$ is less than $1/3$. Conditioned on this bad event not occurring, Bob can uniquely correct $\tilde C$ to $C_i$ using the fact that $\mathcal C_{q,\ell,\rho}$ is an error-correcting code of relative distance $1-1/\rho = 99/100$, thus solving the $\augmented$ instance.
 
    The number of bits communicated in this protocol is equal to the space complexity of $\mathcal A$, which by \cref{cor:augindex} is $\Omega(s\log|\mathcal C_{q,\ell,\rho}|) = \Omega(\ell s\log q) = \Omega(\eps^{-1}\log(\eps U/\log(\eps n))\log(\eps n))$, as desired.
\end{proof}

\cref{th:mainlb} is an improvement over the insertion-only lower bound of roughly $\Omega(1/\eps)$ words (see \cite[Theorem 3.1.9]{Nelson20}).
\section{Factorization Lower Bound}\label{sec:fac-lb}
In this section we prove the lower bound on $\gamma_{2,1}(Q) := \inf_{Q = AB}\|A\|_{2\rightarrow\infty}B_{1\rightarrow 1}$.

\begin{theorem}\label{thm:fact-lb}
$\gamma_{2,1}(Q) = \Omega((\log^{1.5} U)/(\log\log U))$.
\end{theorem}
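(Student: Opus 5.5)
We follow the outline given in the overview for the factorization lower bound. Write $U=2^L$, let $a_i\in\R^m$ be the rows of $A$ and $b_k\in\R^U$ the rows of $B$. Since $(tA)(B/t)=AB$, we may normalize so that $\|B\|_{1\to1}\le 1$, i.e.\ $\sum_k |b_k(j)|\le 1$ for every column $j$. It then suffices to show that some row satisfies $\|a_i\|_2 = \Omega(L^{3/2}/\log L)$. The first step is to test the identity $Q=AB$ against the Haar functionals. For each dyadic interval $I$ with halves $I_L, I_R$, let $\phi_I=\one_{I_L}-\one_{I_R}$. Apply $\phi_I$ to the rows of $B$ to get $\delta_I\in\R^m$ with $\delta_I(k)=\langle b_k,\phi_I\rangle$. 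Then $\langle a_i,\delta_I\rangle=\langle q_i,\phi_I\rangle=:g_I(i)$. This $g_I$ is a tent: it is zero before $I$, has slope $-1$ on $I_L$ and slope $+1$ on $I_R$, and returns to zero afterward. Its kink sits at the midpoint of $I$.

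The second step is a lower bound through Bessel's inequality. Fix a leaf $i$ and the root-to-leaf chain $I_0\supset I_1\supset\cdots\supset I_{L-1}$ of dyadic intervals containing $i$. Order the vectors $\delta_{I_t}$ from coarse to fine, run Gram--Schmidt, and obtain orthonormal $e_t$ with $\delta_{I_t}=\sum_{s\le t}c_{ts}e_s$. Here $D_{I_t}=|c_{tt}|$ is the distance of $\delta_{I_t}$ to the span of the coarser vectors. Consider neighbouring leaves $i,i'$ near the kink of $I_t$ that share all coarser intervals and lie on opposite sides of the midpoint. The values $g_{I_s}(i)$ for $s<t$ are affine in $i$ at those points, while $g_{I_t}$ changes slope there. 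Taking a second difference of $a_i$ over three such leaves removes the contributions of the coarser directions. What remains has a component of size $\Omega(1)$ along $e_t$, and this forces $|\langle \Delta^2 a, e_t\rangle|\gtrsim 1/D_{I_t}$. Bessel's inequality then gives $\max_i\|a_i\|_2^2\gtrsim \sum_t D_{I_t}^{-2}$ along the chain. Some care is needed because different levels use different triples of leaves. To handle this, average over leaves in the subtree, or use $\|a_i\|_2\le \max$ to absorb constant factors.

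The third step, which I expect to be the main obstacle, is a martingale bound on $\sum_t D_{I_t}$. Give each row $b_k$ the Haar martingale $\langle b_k,\one_{I_t}\rangle$ along a uniformly random root-to-leaf path. Its increments are $\pm\delta_{I_t}(k)/2$, and its quadratic variation along the path is comparable to $\sum_t \delta_{I_t}(k)^2/|I_t|$-type terms. Applying Burkholder--Davis--Gundy/Davis in $L^1$ gives $\E\big[(\sum_t\delta_{I_t}(k)^2)^{1/2}\big]\lesssim \|b_k\|_1$ with suitable scaling. Summing over $k$ and using the column normalization bounds the average quadratic variation by $O(1)$. The difficulty is converting this into $\sum_t D_{I_t}=O(\log L)$ for most paths, since $D_{I_t}\le\|\delta_{I_t}\|_2$ and the normalizations by $|I_t|$ must match the $L^2$ scale of the tents. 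I would try a dyadic pigeonhole over the scales of $\|\delta_{I_t}\|$: there are $O(\log L)$ relevant magnitude classes, and each class contributes $O(1)$ in expectation by the martingale inequality. This is where the $\log L$ loss enters. Markov's inequality then makes the bound hold for a constant fraction of paths.

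To conclude, pick a typical leaf with $\sum_t D_{I_t}=O(\log L)$. By Cauchy--Schwarz or the power-mean inequality, $\sum_t D_{I_t}^{-2}\ge L^3/(\sum_t D_{I_t})^2=\Omega(L^3/\log^2 L)$. Hence $\|a_i\|_2=\Omega(L^{3/2}/\log L)$, and undoing the normalization gives $\|A\|_{2\to\infty}\|B\|_{1\to1}=\Omega(\log^{1.5}U/\log\log U)$.
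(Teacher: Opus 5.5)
Your outline follows the paper's: normalize $\|B\|_{1\to1}\le 1$, test against Haar functionals, take fresh distances along a chain, apply Bessel, bound $\sum D$ by a martingale argument, and finish with Jensen. However, Step~2 does not work as written.

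A second difference over three adjacent leaves only sees the local curvature of the tent. With your $\phi_I=\one_{I_L}-\one_{I_R}$ you get $\Delta^2 g_{I_t}=\pm 2$ against a tent of height $|I_t|/2$. So $|\langle \Delta^2 a, e_t\rangle| = 2/D_{I_t}$, where $D_{I_t}$ lives at a scale at which $\delta_{I_t}$ has $\ell_1$-mass up to $|I_t|$. At that scale $\sum_t D_{I_t}$ is not small even for valid factorizations. For the normalized dyadic factorization, $\delta_{I_t}$ is orthogonal to all coarser $\delta$'s and $D_{I_t}\asymp |I_t|/L$, so $\sum_t D_{I_t}\asymp U/L$. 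Equivalently, in the averaged normalization $(\one_{I_R}-\one_{I_L})/|I|$, which is what any martingale argument controls, your signal is $2/(|I_t| D_{I_t})$, exponentially small at coarse levels.

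A macroscopic second difference (spacing $\asymp |I_t|$) does give $\Omega(1)/D_{I_t}$. But then each level uses a different vector $\Delta_t^2 a$, so Bessel is unavailable. The bound $\sum_t \langle \Delta_t^2 a, e_t\rangle^2 \le 16 L \max_i \|a_i\|_2^2$ loses a factor $L$, not a constant, and leaves only $\Omega(L/\log L)$. Averaging over the leaves of a fixed chain's subtrees has the same defect.

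The missing idea is to not difference at all. For a \emph{single} row, $\langle a_i, u_I\rangle = (g_I(i)-P_I(i))/D_I$ exactly. Here $P_I$ is the combination of coarser tents produced by the projection, and it is affine on $I$ because their kinks lie at multiples of $|I|$. A tent is at distance $\ge 1/16$ from every affine function on a $7/16$ fraction of $I$ (\cref{lem:kink}). So a random row has $\Omega(L)$ good levels on its own chain, and Bessel is applied to the one vector $a_i$.

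Step~3 also rests on false ingredients. First, the sums $\langle b_k,\one_{I_t}\rangle$ are not a martingale; the averages $m_k(I_t)$ are. More importantly, Davis' inequality compares the square function with the \emph{maximal} function, not with $\E|M_T|$. For $B$ the identity (a valid normalized factorization) one has $\E_j \sum_k \sqrt{V_k(j)} \asymp L$, so the claimed $O(1)$ bound fails. What holds is only the weak-type bound $\Pr_j[V_k(j)\ge v]\le e\,\iota_k([U])/\sqrt v$ (\cref{lem:mart,cor:level}).

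Second, the bound $D_{I_t}\le \|\delta_{I_t}\|_2$ throws away exactly the orthogonality that is needed. Take half of the normalized dyadic $B$ and add a row $\tfrac14\sum_I c_I(\one_{I_R}-\one_{I_L})$, with $c_I=L^{-1/2}$ until the ancestral partial sum leaves $(-1,1)$ and $c_I=0$ afterwards. This keeps the factorization valid, yet makes $\sum_t\|\delta_{I_t}\|_2\gtrsim\sqrt L$ on a constant fraction of paths. So no pigeonhole over the sizes of $\|\delta_{I_t}\|$ can give $O(\log L)$.

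The paper instead writes $D_\ell=\langle u_\ell,\delta_{I_\ell(i)}\rangle$ and applies Cauchy--Schwarz in $\ell$ for each coordinate $k$, getting $\sum_\ell D_\ell\le\sum_k\sqrt{P_kV_k}$. Here $P_k=\sum_\ell u_\ell(k)^2\in[0,1]$ and $\sum_k P_k\le L$, since these are diagonal entries of a rank-$(L-5)$ projection. It then combines $P_k\le 1$ (for large $V_k$) and Cauchy--Schwarz (for small $V_k$) with the weak-type tail in a layer-cake integral (\cref{lem:innerprod}), giving $O(\log L)$ on average. The $\log L$ is $\int_{1/L}^{\sqrt L}dt/t$ over the sizes of $\sqrt{V_k}$ of dictionary rows, not over levels. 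Finally, this event and $|R(i)|\ge L/10$ must be secured for the same row before \cref{lem:jensen} is applied.
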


First we introduce some preliminaries.

\subsection{Preliminaries}
The following lemma essentially says that the absolute value function cannot be well approximated by an affine function. 
\begin{lemma}\label{lem:tentaffine}
For an integer $N$, let $h:\{-N,\ldots,N\} \rightarrow \R$ be defined by $h(x) = -\frac{1}{2} + \frac{|x|}{N}$. Then, for any affine function $P:\R \rightarrow \R$, 
$$\left|\left\{i \in \{-N, \ldots, N\}: |h(i) - P(i)| \geq \frac{1}{16}\right\}\right| \geq  \frac{7N}{8}.$$
\end{lemma}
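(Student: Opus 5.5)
The plan is to split the domain at the kink and use the fact that on each half $h-P$ is affine, while the two halves' slopes are forced apart. Write $P(x)=ax+b$. On the right half $\{0,1,\ldots,N\}$ we have $h(x)-P(x) = (1/N - a)x - (1/2+b)$. On the left half $\{-N,\ldots,0\}$ we have $h(x)-P(x) = (-1/N - a)x - (1/2+b)$. These are affine functions with slopes $s_+ = 1/N - a$ and $s_- = -1/N - a$. Since $s_+ - s_- = 2/N$, at least one of $|s_+|,|s_-|$ is at least $1/N$. Fix such a side and call its slope $s$, so $|s|\ge 1/N$.

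On the chosen half, let $f(x) = sx - (1/2+b)$. The ``good'' set $\{x\in\R : |f(x)|<1/16\}$ is an open interval, because $s\neq 0$. Its length is $(1/8)/|s| \le N/8$. An open interval of length $L$ contains at most $\lceil L\rceil$ integers, and $\lceil N/8\rceil \le N/8 + 7/8 < N/8+1$. So strictly fewer than $N/8+1$ of the $N+1$ integer points of that half satisfy $|h(i)-P(i)|<1/16$.

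Hence at least $(N+1) - (N/8+1) = 7N/8$ points $i$ of that half satisfy $|h(i)-P(i)|\ge 1/16$, which already gives the claimed bound. The points on the other half are not needed.

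There is no real obstacle here. The only care needed is the integer-counting step: showing that the open interval contributes at most $N/8$ good points (strictly fewer than $N/8+1$). This is what makes the constant come out as exactly $7N/8$ rather than $7N/8 - 1$.
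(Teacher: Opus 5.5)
Your proof is correct, and it takes a genuinely different route from the paper's.

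\emph{The paper's route.} It argues by reflection. Let $G$ be the set of indices where $|h-P|<\frac1{16}$, and let $D=\{i: i\in G,\ -i\in G\}$. The map $i\mapsto -i$ injects $G\setminus D$ into the bad set $\overline{G}$, so $|\overline{G}|\ge N+(1-|D|)/2$. Then the midpoint identity $P(i)+P(-i)=2P(0)$, combined with the evenness of $h$, forces $|h(i)-P(0)|<\frac1{16}$ for every $i\in D$. This confines $|i|$ to a window of length $N/8$, so $|D|$ is about $N/4$.

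\emph{Your route.} You use the slope jump of $2/N$ at the kink to find one half on which $h-P$ has slope at least $1/N$ in absolute value. You then count the integers in a single short sublevel interval.

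\emph{What each buys.} Your argument is more direct, places all the bad points on one half, and handles the rounding explicitly, yielding $N+1-\lceil N/8\rceil$ bad points. By contrast, the paper's intermediate bound $|D|\le 4N\alpha+1$ is slightly off as stated: for $N=9$ and $P\equiv -\frac13$ one gets $D=\{\pm1,\pm2\}$. This is easily repaired using $|D|\le 2\lceil N/8\rceil$ together with the integrality of $|\overline{G}|$, which then gives the same count as yours.

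The two arguments also generalize in different directions. Yours needs $h$ to be affine on each side of the kink but uses no symmetry. The paper's needs only that $h$ be even and increase by at least $1/N$ per unit step in $|i|$, and it never uses that $h$ is piecewise affine.
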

\begin{proof}
Fix a constant $\alpha > 0$. 
Let $G = \{i: |h(i) - P(i)| < \alpha\}$; our goal is to lower bound $|\overline{G}|$. Call an index $i$ \emph{doubly-good} if $i \in G$ and $-i \in G$; let $D \subseteq G$ be the set of doubly-good indices.  Note that 
$2N+1 - |\overline{G}| = |G| = |D| + |G \setminus D| \leq |D| + |\overline{G}|$, from which $|\overline{G}| \geq N + (1 - |D|)/2$.

Note that, as $P$ is affine we have $P(i) + P(-i) = 2 P(0)$. Thus, for $i \in D$, since $h$ is symmetric, we have $h(i) \in [P(0) - \alpha, P(0) + \alpha] $. But the number of indices $i$ such that $h(i)$ lies in an interval of width $2\alpha$ is at most $4 N \alpha + 1$; thus, $|D| \leq 4 N \alpha + 1$.  
The proof now follows by setting $\alpha=1/16$. 
\end{proof}

\begin{lemma}\label{lem:innerprod}
Let $L \ge 1$, and let $X,Y \in \mathbb{R}_+^m$ be random vectors such that, almost surely,
\[
\|X\|_\infty \le 1,
\qquad
\|X\|_2^2 \le L,
\qquad
\|Y\|_\infty^2 \le L.
\]
Suppose in addition, we have an empirical tail estimate that, for every $t>0$,
\[
\mathbb{E} [\bigl|\{j \mid Y_j>t\}\bigr|]
\lesssim \frac{1}{t}.
\]
Then
\[
\mathbb{E} [\langle X,Y\rangle] 
\lesssim \log(2+L).
\]
\end{lemma}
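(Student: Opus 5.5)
The plan is to split the coordinates at a threshold $s := 1/L$ on the value of $Y_j$. The large coordinates will be handled with the trivial bound $X_j\le 1$ together with a layer-cake integral of the tail estimate. The small coordinates will be handled with Cauchy--Schwarz against $\|X\|_2\le\sqrt L$, plus a layer-cake bound on the second moment of the small part of $Y$. Throughout write $N(t) := |\{j : Y_j > t\}|$, so the hypothesis reads $\E[N(t)] \le C/t$ for all $t>0$ and some absolute constant $C$. All quantities are nonnegative, so Fubini/Tonelli can be used freely.

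\emph{Large coordinates ($Y_j > 1/L$).} Since $0\le X_j\le 1$, we have $\sum_{j: Y_j>1/L} X_jY_j \le \sum_j Y_j\mathbf 1[Y_j>1/L]$. Writing each term as an integral and using $Y_j \le \|Y\|_\infty \le \sqrt L$ gives
\[
\sum_j Y_j\mathbf 1[Y_j>1/L] = \int_0^{\infty} \bigl|\{j : Y_j > \max(t,1/L)\}\bigr|\,dt = \frac{1}{L}N(1/L) + \int_{1/L}^{\sqrt L} N(t)\,dt .
\]
Taking expectations and applying the tail estimate bounds this by
\[
\frac{1}{L}\cdot CL + C\int_{1/L}^{\sqrt L}\frac{dt}{t} = C + \tfrac32 C\log L = O(\log(2+L)).
\]
This is where the hypothesis $\|Y\|_\infty^2\le L$ is essential: it truncates the otherwise divergent logarithmic integral.

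\emph{Small coordinates ($Y_j \le 1/L$).} Here the number of coordinates may be unbounded, so the trivial bound on $X$ is useless. Instead, Cauchy--Schwarz gives
\[
\sum_{j:Y_j\le s} X_jY_j \le \|X\|_2\Bigl(\sum_j Y_j^2\mathbf 1[Y_j\le s]\Bigr)^{1/2} \le \sqrt L\, S^{1/2},
\qquad S := \sum_j Y_j^2\mathbf 1[Y_j\le s].
\]
By the layer-cake formula, $Y_j^2\mathbf 1[Y_j\le s] \le \int_0^s 2t\,\mathbf 1[Y_j>t]\,dt$. Summing over $j$ gives $S\le \int_0^s 2t\,N(t)\,dt$, hence
\[
\E[S] \le \int_0^s 2t\cdot\frac{C}{t}\,dt = 2Cs .
\]
Jensen's inequality then gives $\E[\sqrt L\,S^{1/2}] \le \sqrt{L}\,\sqrt{2Cs} = \sqrt{2C} = O(1)$ for $s=1/L$.

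Adding the two parts gives $\E\langle X,Y\rangle \lesssim 1+\log(2+L) \lesssim \log(2+L)$, using $L\ge 1$. The only delicate step is the small-coordinate regime, where infinitely many tiny $Y_j$ could in principle accumulate. The point is that the tail hypothesis, integrated against $2t\,dt$, controls the \emph{squared} mass of $Y$ below $s$ linearly in $s$. Pairing this with the $\ell_2$ bound on $X$ then exactly cancels the $\sqrt L$ when $s=1/L$. No integrality or support-size assumption on $Y$ is needed.
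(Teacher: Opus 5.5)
Your proof is correct and is essentially the paper's argument: the paper also uses the layer-cake formula with threshold $1/L$, bounding $\sum_{j:Y_j>t}X_j$ by $N(t)$ via $\|X\|_\infty\le 1$ for $t\ge 1/L$, and by $\sqrt{L\,N(t)}$ via Cauchy--Schwarz and Jensen for $t<1/L$. The only difference is that you split by coordinate value rather than by the integration level $t$, applying Cauchy--Schwarz once to the small coordinates together with a second-moment layer-cake bound; this yields the same $O(1)+\tfrac32 C\log L$ estimate.
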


\begin{proof}
For $t>0$, let $N(t):=\bigl|\{j \mid Y_j>t\}\bigr|$. Every coordinate of $Y$ is at most $\sqrt{L}$. Thus, by the layer-cake formula,
\[
\mathbb{E}\langle X,Y\rangle
=
\int_0^{\sqrt{L}}
\mathbb{E}\left[\sum_{j:Y_j>t}X_j\right]\,dt.
\]
For every $t>0$, the bound $\|X\|_\infty\le 1$ gives
\[
\sum_{j:Y_j>t}X_j \le N(t),
\]
while Cauchy--Schwarz gives
\[
\sum_{j:Y_j>t}X_j
\le
\|X\|_2\sqrt{N(t)}
\le
\sqrt{L\,N(t)}.
\]
Taking expectations, using the assumed tail bound, and applying Jensen's inequality, we obtain
\[
\mathbb{E}\left[\sum_{j:Y_j>t}X_j\right]
\lesssim
\min\left\{\frac{1}{t},\sqrt{\frac{L}{t}}\right\}.
\]

The two bounds are equal when $t=1/L$. Therefore,
\[
\begin{aligned}
\mathbb{E} [\langle X,Y\rangle] 
&\lesssim
\int_0^{1/L}\sqrt{\frac{L}{t}}\,dt
+
\int_{1/L}^{\sqrt{L}}\frac{dt}{t} 
\lesssim
1+\log L
\lesssim
\log(2+L).  \qedhere
\end{aligned}
\]
\end{proof}

Finally, we need the following elementary inequality:
\begin{lemma}\label{lem:jensen}
For a positive vector $v \in \R^m$, if $\sum_i v_i \leq K$, then 
$$ \sum_{i=1}^m \frac{1}{v_i^2} \geq \frac{m^3}{K^2}.$$
\end{lemma}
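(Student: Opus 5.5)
This follows from Hölder's inequality, or equivalently from Jensen's inequality applied to the convex function $t\mapsto t^{-2}$ on $(0,\infty)$. I would present it via Jensen, since that is the label the paper attaches to the lemma. The plan is to compare the average of $1/v_i^2$ with the reciprocal square of the average of $v_i$, and then use the hypothesis $\sum_i v_i\le K$ to lower bound that reciprocal.

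First, let $\bar v := \frac1m\sum_{i=1}^m v_i$, so that $0<\bar v\le K/m$. Since $f(t)=t^{-2}$ is convex on $(0,\infty)$ (its second derivative is $6t^{-4}>0$), Jensen's inequality with uniform weights gives
\[
\frac1m\sum_{i=1}^m \frac{1}{v_i^2} \;\ge\; \frac{1}{\bar v^{2}} \;\ge\; \frac{m^2}{K^2},
\]
where the second step holds because $f$ is decreasing and $\bar v\le K/m$. Multiplying through by $m$ yields $\sum_i v_i^{-2}\ge m^3/K^2$.

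As an alternative check, Hölder's inequality with exponents $3$ and $3/2$ gives
\[
m=\sum_i v_i^{2/3}v_i^{-2/3}\le \Bigl(\sum_i v_i\Bigr)^{2/3}\Bigl(\sum_i v_i^{-2}\Bigr)^{1/3}.
\]
Cubing this gives $m^3\le K^2\sum_i v_i^{-2}$, which is the same bound.

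I do not expect any real obstacle here. The only points to note are that positivity of $v$ is needed for $f$ to be defined and convex on its domain, and that $K>0$ then follows automatically from the hypothesis.
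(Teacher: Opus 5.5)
Your proof is correct and is the same as the paper's: Jensen's inequality for the convex function $t\mapsto t^{-2}$ gives $\frac1m\sum_i v_i^{-2}\ge \bar v^{-2}\ge m^2/K^2$, and multiplying by $m$ finishes. Your H\"older check is also valid, but the paper does not need it.
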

\begin{proof}
    Applying Jensen's inequality, 

$$\frac{1}{m} \sum_{i=1}^m \frac{1}{v_i^2} \geq \left(\frac{\sum_{i=1}^m v_i}{m}\right)^{-2}
\geq \frac{m^2}{K^2}.  \qedhere$$
\end{proof}

%The rest of this section proves \cref{thm:fact-lb}.

\subsection{Dyadic tree, tents, and its properties}

We are now ready to go into the main proof. First, by scaling $A$ and $B$ suitably, 
we can assume that the $\ell_1$-norm of each column of $B$ is at most 1:
\begin{equation}
\label{eq:ink}
\sum_{k = 1}^m |b_k(j)| \leq 1,
\mbox{ for all } j \in [U] ,
\end{equation}
where we treat the $k$th row of $B$ as a function $b_k:[U]\rightarrow\reals$.
Hence it suffices to show a lower bound on
$\max_i \|a_i\|_2$, where $a_i$ is the $i$th row of $A$.

We consider the complete dyadic tree over the leaves $[U]$.  For $1\le \ell \le L$, a \emph{scale-$\ell$ node} with offset $b$
in this tree is the dyadic
interval $I=\{b+1,\dots,b+2^\ell\}$ with $b \equiv 0 \pmod{2^\ell}$; let $I_L, I_R$ be the left/right halves of $I$.  Let $I_\ell(j)$ denote the scale-$\ell$ node containing $j$; by convention $I_0(j) = \{j\}$. Define the ``Haar functional''
\[
\phi_I := \frac{\one_{I_R}-\one_{I_L}}{|I|},\qquad \delta_I\in\R^m,\ \ \delta_I(k):=\langle b_k,\phi_I\rangle,\qquad g_I(i):=\langle q_i,\phi_I\rangle.
\]
Intuitively, $\phi_I$ is a signed test function that compares the right half of $I$ to its left half; $\delta_I(k)$ measures how much the row $b_k$ is biased toward the right half of $I$ versus its left half.

The quantity $g_I$ is the same left-right test applied to the prefix
vector $q_i$.  Let us examine the structure of the function $g_I$.  Note that for every row $i$ and every node $I$:
\begin{equation}\label{eq:I1}
\langle a_i,\delta_I\rangle \;=\;\Big\langle \sum_k a_i(k)b_k,\ \phi_I\Big\rangle\;=\;\langle q_i,\phi_I\rangle\;=\;g_I(i).
\end{equation}
It is easy to see the following: $g_I(i)=0$ for $i\notin I$ (indeed, if $i< b+1$ then $q_i$ is disjoint from $I$; if $i\ge b+2^\ell$ then $q_i$ covers $I$ and $\langle \one_I,\phi_I\rangle=0$), and for $i=b+x$, $1\le x\le 2^\ell$,
\[
g_I(b+x) \;=\; -\frac{\min(x,\,2^\ell-x)}{2^\ell},
\]
a ``tent'' of depth $1/2$ at the midpoint $x=2^{\ell-1}$, with value $0$ at $x=2^\ell$.  

For convenience, we extend each $g_I$  to a continuous piecewise-affine function of a real variable in the natural canonical way: zero outside $I$, decreases linearly on the left half of $I$, reaches depth $-1/2$ at the midpoint, and then increases linearly back to 0 on the right half of $I$.
Its only ``kinks'' are at $b,\ b+2^{\ell-1},\ b+2^\ell$---all multiples of $2^{\ell-1}$. Whenever we say ``$g_I$ is affine on an interval,'' we refer to this canonical extension.

We will use the fact that the $V$-shaped tent functions $g_I$ cannot be well approximated by a straight line on most of its domain.  This follows immediately from \cref{lem:tentaffine}, using 
$N = |I|/2$ for an interval $I$.
\begin{lemma}\label{lem:kink}
Let $I$ be a node at scale $\ell\ge 6$, offset $b$. Let $P:I\to\R$ be the restriction to $I$ of an affine function on the real interval $[b,b+2^\ell]$. Then
\[
\#\Big\{i\in I:\ |g_I(i)-P(i)|\ge \tfrac1{16}\Big\}\;\ge\;\frac{7|I|}{16}.
\]
\end{lemma}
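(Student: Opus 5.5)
We set up coordinates around the midpoint of $I$ and prove the bound directly, rather than by quoting \cref{lem:tentaffine}. The reason is that $g_I$ has slope $\pm 1/|I|$, not $\pm 2/|I|$: it is a tent of half-width $N$ and depth $1/2$, whereas $h$ in \cref{lem:tentaffine} has depth $1/2$ at half-width $N/2$. With that lemma's symmetric-pairing argument, the threshold $1/16$ only yields about $3|I|/8$ bad points, which is short of $7|I|/16$. We therefore exploit convexity of $g_I-P$ one arm at a time.

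\textbf{Setup.} Put $N=2^{\ell-1}$ and $c=b+N$, and write $i=c+y$ with $y\in\{-N+1,\dots,N\}$. Then $g_I(c+y)=-\tfrac12+\tfrac{|y|}{2N}$, and $f:=g_I-P$ is convex and piecewise affine on $[-N,N]$ with a single kink at $0$. For $u\in[0,N]$ write $f(u)=v+\gamma u$ and $f(-u)=v+\beta u$, where $v=f(0)$. Since $g_I$ contributes slope $+1/(2N)$ to each arm while $P$ contributes $\pm s$, we get $\beta+\gamma=1/N$ whatever the slope $s$ of $P$. Let $G_R=\{u\in[0,N]:|v+\gamma u|<\alpha\}$ and $G_L=\{u\in[0,N]:|v+\beta u|<\alpha\}$, where $\alpha=1/16$; each is an open interval. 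The good set $\{i:|f(i)|<\alpha\}$ is the set of lattice points of $G_R$ together with the reflection of $G_L$, and these two pieces overlap at most in $y=0$. It therefore suffices to show
\[
\#(\mathbb Z\cap G_L)+\#(\mathbb Z\cap G_R)\;\le\;N+2\alpha N+O(1)\quad\text{and in fact}\quad\le \tfrac{9N}{8}.
\]
This gives at least $2N-\tfrac{9N}{8}=\tfrac{7N}{8}=\tfrac{7|I|}{16}$ bad points.

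\textbf{Continuous bound.} First we prove $|G_L|+|G_R|\le N+2\alpha N$ for lengths, treating cases by the signs of $\beta,\gamma$; by symmetry assume $\beta\le\gamma$, so $\gamma\ge 1/(2N)>0$. If $\beta\le 0$ then $\gamma\ge 1/N$, so $|G_R|\le 2\alpha/\gamma\le 2\alpha N$, and trivially $|G_L|\le N$. If $\beta>0$, both arms increase in $u$. Then $G_L=[0,\min(N,(\alpha-v)/\beta))$ and $G_R=[0,\min(N,(\alpha-v)/\gamma))$ when $v\in(-\alpha,\alpha)$, and each interval starts later when $v\le-\alpha$. We compare with the average $\tfrac12(f(u)+f(-u))=v+u/(2N)$, which has slope exactly $1/(2N)$. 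Whenever $|G_L|=N$, we have $v+\beta N<\alpha$ and $v>-\alpha$; this forces $\beta<2\alpha/N$, hence $\gamma>(1-2\alpha)/N$. Then
\[
|G_R|\le \frac{\alpha-v}{\gamma}\le \frac{\alpha-v}{1/N-\beta}\le 2\alpha N,
\]
where the last step uses $\alpha-v\le 2\alpha-\beta N$, a consequence of $v>-\alpha$ and $v<\alpha-\beta N$. In general we parametrize by $|G_L|=\lambda\le N$, bound $|G_R|$ in the same way, and check that $\lambda+|G_R|$ is maximized at $\lambda=N$.

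The example with $P$ parallel to the left arm and offset by $-\alpha^{+}$ achieves exactly $N+N/8$, so the constant $7/16$ is tight and there is no slack.

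\textbf{Lattice counting (main obstacle).} Because the continuous bound is tight, the lattice count must not lose even $+1$ per interval. Here we use $\ell\ge 6$: it makes $N\ge 32$ a power of two, so $2\alpha N=N/8$ is an integer. An open interval of integer length $\lambda$ contains at most $\lambda$ integers. Also, the left arm $\{-N+1,\dots,0\}$ has exactly $N$ points, while the point $y=N$ lies on the right arm, where in the extremal case $f=\pm\alpha$ exactly at the endpoints, which are excluded because the inequality is strict. We would first redo the case analysis directly on the integer sets, replacing each length bound $\min(N,\cdot)$ by the corresponding integer count and tracking the shared point $y=0$. If an extra $+1$ survives in some case, the fallback is to use the strictness $|f|<\alpha$ together with $v\in(-\alpha,\alpha)$ to shave it off.
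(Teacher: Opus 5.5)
You are right that the appeal to \cref{lem:tentaffine} with $N=|I|/2$ (which is how the paper derives this lemma) does not give $7|I|/16$. In your coordinates $g_I(c+y)=\tfrac{1}{2}h(y)-\tfrac{1}{4}$, so the threshold $\tfrac{1}{16}$ for $g_I$ becomes the threshold $\tfrac{1}{8}$ for $h$, and the pairing argument then yields only about $\tfrac{3}{8}|I|$ bad points. \textbf{The gap:} your replacement bound $|G_L|+|G_R|\le N(1+2\alpha)$ is false. The step that breaks is ``$\alpha-v\le 2\alpha-\beta N$''. The constraints $-\alpha<v<\alpha-\beta N$ only give $\beta N<\alpha-v<2\alpha$, and that does not bound $\alpha-v$ by $2\alpha-\beta N$, which would need $v\ge \beta N-\alpha$. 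Concretely, let $v\downarrow-\alpha$ and $\beta N\uparrow 2\alpha$ with $v+\beta N<\alpha$. The left arm $v+\beta u$ then still stays inside $(-\alpha,\alpha)$ on all of $[0,N]$, so $|G_L|=N$. Meanwhile the right slope drops to $\gamma=1/N-\beta\approx(1-2\alpha)/N$, so $|G_R|=(\alpha-v)/\gamma\to 2\alpha N/(1-2\alpha)=N/7>N/8$. Your ``tight'' example with $P$ parallel to one arm is therefore not extremal. Carrying out your sign-case analysis correctly gives $\sup(|G_L|+|G_R|)=N/(1-2\alpha)=\tfrac{8}{7}N$, i.e.\ only about $\tfrac{3}{7}|I|<\tfrac{7}{16}|I|$ bad points.

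Consequently no lattice bookkeeping can rescue the argument, because the statement itself fails already at $\ell=6$. Take the scale-$6$ node $I=\{1,\dots,64\}$ and $P(x)=(2x-135)/160$. Then $g_I(x)-P(x)=(270-9x)/320$ for $x\le 32$ and $(x-50)/320$ for $x\ge 32$. Since $\tfrac{1}{16}=\tfrac{20}{320}$, every $x\in\{28,\dots,64\}$ satisfies $|g_I(x)-P(x)|<\tfrac{1}{16}$. Only the $27<28=7|I|/16$ points $x\le 27$ are bad. What can be proved is a constant-fraction version. One option is $\tfrac{3}{8}|I|-O(1)$ via the rescaled pairing argument. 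Another is $\tfrac{3}{7}|I|-O(1)$ via your convexity analysis done correctly, where each open interval contributes at most its length plus one lattice points. A constant fraction is all \cref{lem:c} and the final averaging step use: with any fixed fraction $c>0$ one gets $\E_i|R(i)|\ge c(L-5)$, and the reverse-Markov step and \cref{thm:fact-lb} go through with adjusted constants.
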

\ignore{
\begin{proof}
Work in the local coordinate $x=i-b\in[0,2^\ell]$ and use the canonical extension of $g_I$.  Set $x_-=2^{\ell-2}$, $x_0=2^{\ell-1}$, $x_+=3\cdot 2^{\ell-2}$; then
\[
g_I(b+x_-)=g_I(b+x_+)=-\tfrac14,\qquad g_I(b+x_0)=-\tfrac12.
\]
Define $e(x):=g_I(b+x)-P(b+x)$ for $x \in [0,2^\ell]$. Since $P$ is affine, $P(b+x_0)=\tfrac12(P(b+x_-)+P(b+x_+))$, so
\[
e(x_0)-\tfrac12\big(e(x_-)+e(x_+)\big)\;=\;-\tfrac12+\tfrac14\;=\;-\tfrac14,
\]
whence
\begin{equation}\label{eq:31}
|e(x_0)|+\tfrac12\big(|e(x_-)|+|e(x_+)|\big)\;\ge\;\tfrac14.
\end{equation}
Suppose toward a contradiction that $F:=\{x\in[x_-,x_+]\cap\mathbb Z:\ |e(x)|\ge \tfrac1{16}\}$ has $|F|\le 2^{\ell-6}$.

Consider $[x_-,x_0]$: the only kink of the canonical extension of $g_I$ in the open interval $(0,2^\ell)$ is at $x_0$, so $e$ is affine on $[x_-,x_0]$. The set $J_1:=\{x\in[x_-,x_0]:|e(x)|<\tfrac1{16}\}$ is a subinterval (a sublevel set of the absolute value of an affine function on an interval), and it contains all $2^{\ell-2}+1$ integer points of $[x_-,x_0]$ except at most $|F|\le 2^{\ell-6}$ of them; hence its length satisfies
\[
|J_1|\;\ge\;2^{\ell-2}-2^{\ell-6}-1\;\ge\;\tfrac78\, 2^{\ell-2}\qquad(\ell\ge 6).
\]
On $J_1$, $e$ varies by at most $\tfrac18$, so its constant slope on $[x_-,x_0]$ obeys $|\mathrm{slope}(e)|\le \tfrac{1/8}{|J_1|}$. The complement of $J_1$ in $[x_-,x_0]$ consists of at most two end intervals, each of whose integer points all lie in $F$; hence each has length at most $2^{\ell-6}+1\le 2^{\ell-5}$ (for $\ell \ge 6$), so both endpoints $x_-,x_0$ are within distance $2^{\ell-5}$ of $J_1$. Therefore
\[
|e(x_-)|,\,|e(x_0)|\;\le\;\frac1{16}+\frac{1}{8}\cdot\frac{2^{\ell-5}}{\tfrac78\,2^{\ell-2}}\;=\;\frac1{16}+\frac1{56}\;<\;0.081.
\]
The identical argument on $[x_0,x_+]$ (where $e$ is again affine) gives $|e(x_+)|<0.081$. Then the left side of \eqref{eq:31} is $<0.081+0.081=0.162<\tfrac14$, a contradiction. Hence $|F|>2^{\ell-6}$, i.e.\ at least $|I|/64$ integer points of $[x_-,x_+]\subseteq I$ alone satisfy $|e|\ge\tfrac1{16}$.
\end{proof}}

\subsection{Mean martingale and its properties}

For a node $I$ set
\[
m_k(I):=\frac1{|I|}\sum_{j\in I}b_k(j),\qquad \iota_k(I):=\frac1{|I|}\sum_{j\in I}|b_k(j)|.
\]
Since $m_k(I)=\tfrac12(m_k(I_L)+m_k(I_R))$ and $\delta_I(k)=\tfrac12(m_k(I_R)-m_k(I_L))$,
\[
m_k(I_R)=m_k(I)+\delta_I(k),\qquad m_k(I_L)=m_k(I)-\delta_I(k).
\]
Let $j$ be a uniform random leaf and follow the root-to-leaf path: with filtration $\mathcal F_\ell$ generated by $I_{L-\ell}(j)$ ($\ell=0,\dots,L$; each child chosen with probability $1/2$ given its parent), the process
\[
t\mapsto m_k\big(I_{L-\ell}(j)\big),\qquad m_k(I_0(j)):=b_k(j),
\]
is a martingale with conditionally symmetric increments $\pm\,\delta_{I_\sigma(j)}(k)$ whose magnitudes are predictable (the scale-$\ell$ node is known before its child is chosen). Also, by the triangle inequality, $|\delta_I(k)| \le \iota_k(I)$, and averaging \eqref{eq:ink} over $j\in I$ gives
\begin{equation}
\label{eq:inkp}
\sum_k \iota_k(I)\le 1,\quad\text{for every node }I.
\end{equation}
Define the \emph{path variance} 
\begin{equation}\label{eq:pathvariance}
    V_k(j):=\sum_{\ell=1}^L \delta_{I_\ell(j)}(k)^2 \;\le\; L,
\end{equation}
since each $|\delta|\le\iota\le 1$.
Thus, $\delta_I(k)$ measures
the amount by which the average changes when you descend from $I$ to one of its children and $V_k(j)$ measures how much $\beta_k$ oscillates along the root-to-leaf path to $j$.  

Next, we show that if a martingale accumulates large squared jump sizes along a path, then its final value cannot remain tiny, i.e., a large path variance yields a lower bound on $\E[|M_T|]$. The following was proven in \cite{Osekowski14} (similar results that suffice for our setup could be obtained from any of \cite[Theorem 8]{Burkholder66}, \cite[Theorem 3.1]{Burkholder73}, \cite{Bollobas80}, or \cite{Cox82}, albeit with a worse constant).

\begin{lemma}\label{lem:mart}
Let $(M_t)_{t=0}^T$ be a finite discrete-time martingale on a filtration $(\mathcal F_t)$, $M_0=\mu$ deterministic, with $M_{t+1}=M_t\pm d_{t+1}$ where the sign is conditionally symmetric given $\mathcal F_t$ and $d_{t+1}\ge 0$ is $\mathcal F_t$-measurable. Let $V=\sum_{t=1}^T d_t^2$. Then for every $v>0$,  $\E|M_T| \;\ge\; \frac{\sqrt v}{e}\,\Pr[V\ge v]$.
\end{lemma}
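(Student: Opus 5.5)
The plan is to reduce to a stopped martingale and then apply an explicit test function, in the style of Burkholder's method. Since $|M_t|$ is a submartingale, optional stopping gives $\E|M_T|\ge \E|M_{\sigma}|$ for any stopping time $\sigma\le T$. Because $d_{t}$ is $\mathcal F_{t-1}$-measurable, the partial variance $V_t:=\sum_{s\le t}d_s^2$ is predictable. Hence $\sigma:=\min\{t: V_t\ge v\}\wedge T$ is a (predictable) stopping time, and $\{V\ge v\}=\{V_\sigma\ge v\}$. Set $N_t:=M_{t\wedge\sigma}$. It suffices to show
\[
\E|N_T|\;\ge\;\frac{\sqrt v}{e}\,\Pr[V_\sigma\ge v].
\]
By the symmetry $x\mapsto -x$ and conditional Jensen we may also assume $\mu\ge 0$. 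Replacing $\mu$ by $0$ only helps: the proof below yields a lower bound that is monotone in $|\mu|$, or one can symmetrize via an independent sign.

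The core step is to exhibit a function $U(x,y)$ on $\R\times[0,\infty)$ with three properties.
\begin{itemize}[nosep]
\item \emph{Majorization:} $U(x,y)\le |x|$.
\item \emph{Concavity along admissible moves:} $U(x,y)\le \tfrac12\big(U(x+d,y+d^2)+U(x-d,y+d^2)\big)$ for all $d\ge 0$, so that $U(N_t,V_{t\wedge\sigma})$ is a submartingale.
\item \emph{Initial and terminal bounds:} $U(\mu,0)\ge 0$, and $U(x,y)\ge \frac{\sqrt v}{e}\mathbf 1[y\ge v]-C\,(\text{something with nonpositive expectation})$.
\end{itemize}
Given such a $U$, we chain $\E|N_T|\ge \E U(N_T,V_\sigma)\ge U(\mu,0)$, combined with the terminal bound, to conclude.

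For the construction, my first attempt is the Lévy-type exponential martingale. For $\lambda>0$ we have $\tfrac12(\cos\lambda(x+d)+\cos\lambda(x-d))=\cos(\lambda x)\cos(\lambda d)$. Therefore, as long as $\lambda d_s<\pi/2$,
\[
Z_t:=\frac{\cos(\lambda N_t)}{\prod_{s\le t}\cos(\lambda d_s)}
\]
is a martingale. We will use $\cos(\lambda d)\ge e^{-\lambda^2d^2/2-O(\lambda^4d^4)}$, which makes $\prod\cos(\lambda d_s)\approx e^{-\lambda^2 V/2}$. Taking $\lambda\approx 1/\sqrt v$, on $\{V_\sigma\ge v\}$ the denominator is at most roughly $e^{-1/2}$. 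Since $\E Z_T=\cos(\lambda\mu)$ while $Z_T$ is bounded by $e^{1/2}$-type factors, the event $\{V_\sigma\ge v\}$ must force $\cos(\lambda N_T)$ to be noticeably below $1$ on average. Then $1-\cos(\lambda x)\le \lambda|x|$ converts this into a lower bound $\E|N_T|\gtrsim \sqrt v\,\Pr[V_\sigma\ge v]$. Optimizing $\lambda$, or using the sharper special function from \cite{Osekowski14}, should recover the constant $1/e$.

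I expect the main obstacle to be the large jumps: a single $d_s$ with $\lambda d_s\ge\pi/2$ breaks the cosine martingale, and the overshoot $V_\sigma - v$ can be large. I would handle this first by splitting on whether the stopping step has $d_\sigma\ge\sqrt v$. On that event, $|N_\sigma|\ge d_\sigma-|N_{\sigma-1}|$ in one of the two equally likely directions, so $\E[|N_\sigma|\mid\mathcal F_{\sigma-1}]\ge d_\sigma\ge\sqrt v$ directly. On the complementary event, all jumps before $\sigma$ are small, and the cosine argument applies to the martingale stopped just before the first big jump. If the constant $1/e$ proves elusive this way, I would fall back on citing \cite{Osekowski14} for the sharp constant, since any absolute constant suffices for the application.
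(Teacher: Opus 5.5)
The paper does not prove this lemma; it quotes it from \cite{Osekowski14}, noting that Burkholder, Bollob\'as or Cox give similar bounds. So your fallback of citing \cite{Osekowski14} is exactly the paper's route. As a self-contained argument, however, your proposal does not reach the stated constant $1/e$.

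The special-function part is misformulated and, as written, empty. $U(x,y)=|x|$ already satisfies your majorization, midpoint and initial conditions, and for this $U$ your ``terminal bound'' (with an unspecified term of nonpositive expectation) is literally the lemma. Your chain $\E|N_T|\ge \E U\ge U(\mu,0)$ therefore only yields $\E|N_T|\ge 0$. The Burkholder set-up needs the stronger majorization $U(x,y)\le |x|-\tfrac{\sqrt v}{e}\mathbf 1[y\ge v]$, together with the midpoint inequality and $U(\mu,0)\ge 0$. Constructing such a $U$ is precisely what the cited work supplies, and you do not construct one.

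The cosine route can be made rigorous, but only with a smaller constant. To assemble it:
\begin{itemize}[nosep]
\item Let $\sigma'=\sigma-1$ on $\{d_\sigma\ge\sqrt v\}$ and $\sigma'=\sigma$ otherwise. This is a stopping time because $d$ is predictable.
\item Since $d_s^2\le V_s<v$ for $s<\sigma$, every increment up to $\sigma'$ is below $\sqrt v$, and $V_{\sigma'}<2v$.
\item Take $\lambda=c/\sqrt v$ with $c<\pi/2$ and $\Pi:=\prod_{s\le\sigma'}\cos(\lambda d_s)$. Then $\E[(1-\cos\lambda M_{\sigma'})/\Pi]=\E[1/\Pi]-\cos(\lambda\mu)\ge (e^{c^2/2}-1)\Pr[V\ge v,\ d_\sigma<\sqrt v]$.
\item Also $1/\Pi\le\cos^{-2}c$ and $1-\cos y\le|y|$, which bounds the left side by $\cos^{-2}(c)\,\lambda\,\E|M_{\sigma'}|$. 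This handles every $\mu$ directly, so no symmetrization is needed.
\item On $\{d_\sigma\ge\sqrt v\}$, use $\tfrac12(|x+d|+|x-d|)=\max(|x|,d)$. Then combine the two events via $\max(a,b)\ge\theta a+(1-\theta)b$.
\end{itemize}
This gives $\E|M_T|\ge c_0\sqrt v\,\Pr[V\ge v]$. Optimizing over $c$, $\theta$ and the big-jump threshold, this scheme yields $c_0$ of only about $1/4$, not $1/e$; ``optimizing $\lambda$'' does not close the gap. The weaker constant is entirely sufficient for \cref{cor:level} and \cref{thm:fact-lb}, since it only changes the constant in $K=O(\log L)$. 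For the lemma as stated, you still need the citation.
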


Applied to path variances, we obtain the following corollary to upper bound the number of dictionary rows with large oscillation; since $B$'s columns are normalized, this cannot be too many. 
\begin{corollary}\label{cor:level}
For every $v>0$,
\[
\E_{j\sim [U]} \ \#\{k:\ V_k(j)\ge v\}\;\le\;\frac{e}{\sqrt v}.
\]
\end{corollary}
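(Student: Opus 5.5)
The plan is to apply \cref{lem:mart} to each dictionary row $k$ separately, with the martingale being the mean process along a uniformly random root-to-leaf path. Then we sum over $k$ using the column normalization \eqref{eq:ink}.

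Fix $k$ and let $j$ be a uniform random leaf. As set up above, consider $M_t := m_k(I_{L-t}(j))$ for $t=0,\dots,L$. This has deterministic start $M_0 = m_k([U])$ and final value $M_L = b_k(j)$. Its increments are $\pm\delta_{I_{L-t}(j)}(k)$, with conditionally symmetric sign: the child is chosen uniformly given the parent. The magnitude $d_{t+1} := |\delta_{I_{L-t}(j)}(k)|$ is $\mathcal F_t$-measurable. Hence the hypotheses of \cref{lem:mart} hold with $T=L$. The quadratic variation is
\[
V=\sum_{\ell=1}^L \delta_{I_\ell(j)}(k)^2 = V_k(j).
\]
The lemma therefore gives, for every $v>0$,
\[
\Pr_{j}[V_k(j)\ge v] \le \frac{e}{\sqrt v}\,\E_j|b_k(j)|.
\]

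Next, sum over $k$ and use linearity of expectation:
\[
\E_j\,\#\{k: V_k(j)\ge v\} = \sum_k \Pr_j[V_k(j)\ge v] \le \frac{e}{\sqrt v}\,\E_j \sum_k |b_k(j)| \le \frac{e}{\sqrt v},
\]
where the last inequality is \eqref{eq:ink}. If $m$ were infinite one would need care exchanging the sum and the expectation, but here $m$ is finite and all terms are nonnegative.

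The only point needing verification is that the sign convention matches the lemma. The lemma requires $M_{t+1}=M_t\pm d_{t+1}$ with $d_{t+1}\ge 0$. If $\delta_I(k)$ is negative, flipping which child carries the $+$ sign preserves conditional symmetry, so using $|\delta_I(k)|$ as the step size is harmless. With that checked, the corollary follows directly.
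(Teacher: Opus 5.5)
Your proposal is correct and matches the paper's proof: both apply \cref{lem:mart} to the mean martingale $m_k(I_{L-t}(j))$ along a uniformly random root-to-leaf path for each fixed $k$, then sum over $k$ using the column normalization. The only cosmetic difference is that the paper states the last step via \eqref{eq:inkp} at $I=[U]$, which is exactly your average of \eqref{eq:ink} over $j$.
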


\begin{proof}
Fix $k$ and apply Lemma~\ref{lem:mart} to the mean martingale started at the root: $T=L$, $\mu=m_k([U])$, terminal value $M_L=b_k(j)$, path variance $V=V_k(j)$. Lemma~\ref{lem:mart} gives
\[
\Pr_j[V_k(j)\ge v]\;\le\;\frac{e}{\sqrt v}\,\E_j|b_k(j)|\;=\;\frac{e}{\sqrt v}\,\iota_k([U]).
\]
Sum over $k$ and use \eqref{eq:inkp} with the full interval $[U]$. 
\end{proof}

\subsection{Fresh direction and fresh distances}\label{sec:fresh}

The goal is to show that for every dyadic node $I$, the direction $\delta_I$ contains some component that cannot be explained by all same-or-coarser dyadic directions; this unexplained component is ``fresh component'' $\rho_I$ defined next; here, $D_I$ is the ``fresh distance''.  This will let us orthonormalize (similar to Gram--Schmidt) over the root-to-leaf path in the dyadic tree.   For a node $I$ at scale $\ell$, 
\[
E_I:=\spann\{\delta_{I'}:\ I' \text{ a node with } |I'|\ge|I|,\ I'\ne I\}\ \subseteq\ \R^m,
\]
\[
\rho_I:=\delta_I-\Pi_{E_I}\delta_I\ \ (\text{orthogonal projection}),\qquad D_I:=\|\rho_I\|_2.
\]
\begin{lemma}[Affinity of projections; $D_I>0$]\label{prop:a}
Write $\Pi_{E_I}\delta_I=\sum_{I'} c_{I'}\delta_{I'}$ (a finite linear combination over nodes $I'$ with $|I'|\ge|I|$, $I'\ne I$; fix one such representation per node $I$ once and for all) and define
$P_I(i):=\sum_{I'} c_{I'}\, g_{I'}(i)$.
Then $P_I$ restricted to $I$ is the restriction of an affine function on $[b,b+2^\ell]$, and $D_I>0$ for every node $I$ at scale $6\le \ell\le L$; hence $u_I:=\rho_I/D_I$ is a unit vector.
\end{lemma}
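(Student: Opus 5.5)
We prove the two claims separately: first that $P_I$ is affine on $I$, then that $D_I>0$, the second following from the first via \cref{lem:kink}.

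\textbf{Affinity.} Each $g_{I'}$ with $|I'|\ge|I|$ and $I'\ne I$ is affine on the real interval $[b,b+2^\ell]$. The canonical extension of $g_{I'}$ has kinks only at multiples of $2^{\ell'-1}$, where $\ell'\ge \ell$ is the scale of $I'$, so all kinks are at multiples of $2^{\ell-1}$. The open interval $(b,b+2^\ell)$ contains exactly one multiple of $2^{\ell-1}$, namely the midpoint $b+2^{\ell-1}$. So we only need to rule out a kink of $g_{I'}$ at that point.

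We split into cases on the scale of $I'$.
\begin{itemize}
\item If $\ell'=\ell$ and $I'\ne I$, then $I'$ is disjoint from $I$ and $g_{I'}\equiv 0$ on $[b,b+2^\ell]$. The only boundary contact is at a shared endpoint, where $g_{I'}=0$, so the restriction is still affine (zero).
\item If $\ell'>\ell$, the kinks of $g_{I'}$ sit at the endpoints and midpoint of $I'$, all multiples of $2^{\ell'-1}$, hence multiples of $2^\ell$. The point $b+2^{\ell-1}$ is an odd multiple of $2^{\ell-1}$, so it is not a kink. Thus $g_{I'}$ has no kink in the open interval, and being continuous it is affine on the closed interval.
\end{itemize}
A finite linear combination of affine functions is affine, so $P_I$ restricted to $I$ is affine.

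\textbf{Positivity of $D_I$.} Suppose $D_I=0$, so $\delta_I=\sum c_{I'}\delta_{I'}$. Pairing with $a_i$ and using \eqref{eq:I1} gives, for every $i$,
\[
g_I(i)=\langle a_i,\delta_I\rangle=\sum c_{I'}\langle a_i,\delta_{I'}\rangle=P_I(i).
\]
So $g_I$ agrees on all of $I$ with an affine function. For $\ell\ge 6$, \cref{lem:kink} says at least $7|I|/16>0$ points of $I$ have $|g_I(i)-P_I(i)|\ge 1/16$, a contradiction. Hence $D_I>0$ and $u_I$ is a well-defined unit vector.

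We will record the identity $g_I(i)-P_I(i)=\langle a_i,\rho_I\rangle$ (the same pairing with $\rho_I=\delta_I-\Pi_{E_I}\delta_I$) for the Bessel step later.

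The main obstacle is the kink bookkeeping at the boundaries, since $P_I$ is evaluated on integer points of $I$ while the affinity is claimed on the closed real interval. The continuity of the canonical extension handles this, and also covers the possible dependence on which representation $c_{I'}$ was fixed.
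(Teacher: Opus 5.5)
Your proposal is correct and follows essentially the same route as the paper. You show that no $g_{I'}$ with $|I'|\ge|I|$, $I'\ne I$, has a kink in the open interval $(b,b+2^\ell)$, splitting by scale where the paper splits by disjoint versus strictly containing (an immaterial difference); you then get $D_I>0$ by pairing $\delta_I=\sum c_{I'}\delta_{I'}$ with each row $a_i$ via \eqref{eq:I1}, which forces $g_I\equiv P_I$ on $I$ and contradicts \cref{lem:kink}.
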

\begin{proof}
In the dyadic tree, a node $I'$ with $|I'|\ge |I|$, $I'\ne I$, is either disjoint from $I$ or strictly contains $I$. If $I'$ is disjoint from $I$ (in particular every same-scale $I'\ne I$), the canonical extension of $g_{I'}$ vanishes identically on $[b,b+2^\ell]$, hence is affine there. If $I'$ strictly contains $I$, its scale is $\ell'>\ell$, so all kinks of the canonical extension of $g_{I'}$ lie at multiples of $2^{\ell'-1}$, hence at multiples of $2^\ell$; the open interval $(b,b+2^\ell)$ contains no multiple of $2^\ell$, so $g_{I'}$ is affine on $[b,b+2^\ell]$. Hence $P_I$ is affine on $[b,b+2^\ell]$.

If $D_I=0$, then $\delta_I=\Pi_{E_I}\delta_I=\sum c_{I'}\delta_{I'}$, and \eqref{eq:I1} applied to every row $i$ gives
\[
g_I(i)=\langle a_i,\delta_I\rangle=\sum_{I'}c_{I'}\langle a_i,\delta_{I'}\rangle=\sum_{I'}c_{I'}g_{I'}(i)=P_I(i)\quad\text{for all }i,
\]
contradicting Lemma~\ref{lem:kink}, which forces at least $|I|/64\ge 1$ points of $I$ with $|g_I-P_I|\ge\tfrac1{16}$.
\end{proof}
Since $D_I > 0$, there is fresh information in every $\delta_I$.  The next lemma shows that the normalized fresh directions are orthogonal along a root-to-leaf path in the dyadic tree.
\begin{lemma}\label{lem:b}
Fix a row index $i$ and let 
$u_\ell:=u_{I_\ell(i)},
D_\ell = D_{I_\ell(i)}$ for $6\le \ell \le L$. Then $\{u_\ell\}_{\ell=6}^L$ is an orthonormal system in $\R^m$.
\end{lemma}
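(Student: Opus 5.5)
Unit norm is immediate from \cref{prop:a}: for $6\le \ell\le L$ we have $D_\ell>0$, so $u_\ell=\rho_{I_\ell(i)}/D_\ell$ has $\|u_\ell\|_2=1$. The remaining content is pairwise orthogonality, which we obtain from the nesting of the subspaces $E_I$ along the path from $i$ to the root, in the spirit of Gram--Schmidt.

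Fix $6\le \ell<\ell'\le L$ and write $I=I_\ell(i)$ and $I'=I_{\ell'}(i)$. These nodes are distinct and $|I'|=2^{\ell'}>2^\ell=|I|$. So $I'$ is one of the nodes indexing the spanning set of $E_I$, and hence $\delta_{I'}\in E_I$.

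Next we check that $\rho_{I'}\in E_I$. Every node $I''$ with $|I''|\ge |I'|$ and $I''\neq I'$ also satisfies $|I''|>|I|$, so $I''\ne I$. Therefore each generator $\delta_{I''}$ of $E_{I'}$ is a generator of $E_I$, which gives $E_{I'}\subseteq E_I$. In particular $\Pi_{E_{I'}}\delta_{I'}\in E_I$, and so
\[
\rho_{I'}=\delta_{I'}-\Pi_{E_{I'}}\delta_{I'}\in E_I .
\]

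By definition of orthogonal projection, $\rho_I=\delta_I-\Pi_{E_I}\delta_I$ lies in $E_I^{\perp}$. Combining this with $\rho_{I'}\in E_I$ gives $\langle \rho_I,\rho_{I'}\rangle=0$, and dividing by $D_\ell D_{\ell'}>0$ gives $\langle u_\ell,u_{\ell'}\rangle=0$.

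The only point needing care is that the path nodes at different scales are distinct and that the coarser one is strictly larger, so that it is admissible in $E_I$. This is automatic because each dyadic scale has exactly one node containing $i$. The argument uses nothing about $A$, $B$ or the tents beyond \cref{prop:a}.
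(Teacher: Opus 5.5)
Your proposal is correct and follows the paper's argument: for $\ell<\ell'$, $\rho_{I_{\ell'}(i)}$ lies in the span of $\delta_{I''}$ over nodes with $|I''|\ge 2^{\ell'}>2^{\ell}$, which is contained in $E_{I_\ell(i)}\perp\rho_{I_\ell(i)}$, and unit norm comes from $D_I>0$ in \cref{prop:a}. Your split into $\delta_{I'}\in E_I$ and $E_{I'}\subseteq E_I$ just spells out the paper's one-line containment.
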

\begin{proof}
For $\ell<\ell'$: $u_{\ell'}=\rho_{I_{\ell'}(i)}/D_{\ell'}$ lies in $\spann\{\delta_{I'}: |I'|\ge 2^{\ell'}\}$; every such $I'$ has $|I'|>2^\ell$, hence $I'\ne I_\ell(i)$, so this span is contained in $E_{I_\ell(i)}$, to which $u_\ell$ is orthogonal by construction. So $\langle u_\ell,u_{\ell'}\rangle=0$, and each $u_\ell$ is a unit vector.
\end{proof}
Now that we have turned non-affine information into orthogonal Euclidean directions, we next show how this can be used to force a
large row norm in $A$.  
\begin{lemma}\label{lem:c}
For every row $i$ and node $I$ at scale $\ell\ge 6$,
$\langle a_i,u_I\rangle=\big(g_I(i)-P_I(i)\big)/D_I$,
and, with
\[
R(i):=\Big\{\ell\in[6,L]:\ \big|g_{I_\ell(i)}(i)-P_{I_\ell(i)}(i)\big|\ge \tfrac1{16}\Big\},
\]
we have
\begin{equation}\label{eq:41}
\|a_i\|_2^2\;\ge\;\sum_{\ell=6}^L\langle a_i,u_\ell\rangle^2\;\ge\;\frac1{256}\sum_{\ell\in R(i)} D_{I_\ell(i)}^{-2}.
\end{equation}
Moreover $\E_i|R(i)|\ge L/5$ for $L \ge 10$.%\frac{L-5}{64}\ge\frac{L}{65}$ for $L\ge 325$.
\end{lemma}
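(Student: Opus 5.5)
The lemma has three parts: the identity for $\langle a_i,u_I\rangle$, the Bessel bound \eqref{eq:41}, and the averaging bound on $|R(i)|$. I will prove them in that order.

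\textbf{Identity.} Since $\rho_I=\delta_I-\Pi_{E_I}\delta_I=\delta_I-\sum_{I'}c_{I'}\delta_{I'}$, linearity and \eqref{eq:I1} give
\[
\langle a_i,\rho_I\rangle=g_I(i)-\sum_{I'}c_{I'}g_{I'}(i)=g_I(i)-P_I(i).
\]
Dividing by $D_I>0$ (Lemma~\ref{prop:a}) yields the claim.

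\textbf{Bound \eqref{eq:41}.} Fix $i$. By Lemma~\ref{lem:b}, $\{u_\ell\}_{\ell=6}^L$ is an orthonormal system, so Bessel's inequality gives $\|a_i\|_2^2\ge\sum_{\ell}\langle a_i,u_\ell\rangle^2$. By the identity, each term equals $(g_{I_\ell(i)}(i)-P_{I_\ell(i)}(i))^2/D_{I_\ell(i)}^2$. Drop every $\ell\notin R(i)$. For $\ell\in R(i)$ the numerator is at least $1/256$, which gives \eqref{eq:41}.

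\textbf{Averaging bound.} This is the only step needing care. Write
\[
\E_i|R(i)|=\sum_{\ell=6}^L\Pr_i[\ell\in R(i)].
\]
Fix a scale $\ell$. Then $I_\ell(i)$ ranges over the scale-$\ell$ nodes $I$, and conditioned on $I_\ell(i)=I$, the index $i$ is uniform on $I$. Apply Lemma~\ref{lem:kink} to $I$ with $P=P_I$; this is legitimate because Lemma~\ref{prop:a} shows $P_I$ is affine on $[b,b+2^\ell]$. It gives at least $7|I|/16$ points of $I$ with $|g_I-P_I|\ge 1/16$. The relevant point: $P_I$ depends only on $I$, not on $i$, since a single representation was fixed per node. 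So the set $\{i\in I: \ell\in R(i)\}$ is exactly the set counted by Lemma~\ref{lem:kink}. Hence $\Pr_i[\ell\in R(i)]\ge 7/16$ for every $\ell\ge6$, and
\[
\E_i|R(i)|\ge \tfrac{7}{16}(L-5).
\]
Finally, $\tfrac{7}{16}(L-5)\ge L/5$ is equivalent to $35(L-5)\ge 16L$, i.e.\ $L\ge 175/19\approx 9.2$, which holds for $L\ge10$.

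The only thing to double-check is that Lemma~\ref{lem:kink} applies as stated: it needs $P_I$ to be affine on all of $I$, including the endpoint used by the tent, and Lemma~\ref{prop:a} supplies exactly that.
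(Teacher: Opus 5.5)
Your proof is correct and follows the paper's argument essentially step for step. You get the identity from linearity and \eqref{eq:I1}, apply Bessel's inequality over the orthonormal system of Lemma~\ref{lem:b} after dropping the terms outside $R(i)$, and apply Lemma~\ref{lem:kink} node by node to the fixed affine $P_I$ over the partition of $[U]$ into scale-$\ell$ nodes, adding only the explicit arithmetic check that $\tfrac{7}{16}(L-5)\ge L/5$ for $L\ge 10$.
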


\begin{proof}
Using \eqref{eq:I1} for $\delta_I$ and for each $\delta_{I'}$ in the fixed representation of $\Pi_{E_I}\delta_I$:
\[
\langle a_i,u_I\rangle=\frac{\langle a_i,\delta_I\rangle-\langle a_i,\Pi_{E_I}\delta_I\rangle}{D_I}=\frac{g_I(i)-P_I(i)}{D_I}.
\]
(The value $\langle a_i,\Pi_{E_I}\delta_I\rangle$ is independent of the representation; $P_I$ is the fixed one.) By Lemma~\ref{lem:kink} applied to the affine function $P_I$: for each node $I$ at scale $\ell\ge 6$, at least $7|I|/16$ of the $i\in I$ satisfy $|g_I(i)-P_I(i)|\ge\tfrac1{16}$. Bessel's inequality for the orthonormal system of Lemma~\ref{lem:b} gives \eqref{eq:41}. Since scale-$\ell$ nodes partition $[U]$ and each contributes a $\ge \tfrac{7}{16}$ fraction of its points, $\Pr_{i}[\ell\in R(i)]\ge \tfrac{7}{16}$ for each $\ell\in[6,L]$, hence $\E_i|R(i)|\ge\frac{7(L-5)}{16}\ge\frac{L}{5}$ for $L\ge 10$.
\end{proof}
Given this lower bound on $\|a_i\|_2^2$, we want many of the $D_{I_\ell(i)}$ to be small.  This is what we show next: 
along most root-to-leaf paths in the dyadic tree, the total amount of ``fresh distance'' is only polylogarithmic in $L$.  
Fix a uniform row index $i$; write $D_\ell:=D_{I_\ell(i)}$, $V_k:=V_k(i)$, and $P_k:=\sum_{\ell=6}^L u_\ell(k)^2$.

\begin{lemma}\label{lem:chain}
%Let $r:=\lceil 3\log_2 L\rceil$ and $\Lambda:=260(r+2)$. 
For all $i$ outside a set of measure at most $1/16$,
\begin{equation}\label{eq:52}
\sum_{\ell=6}^L D_\ell\ \le\ K\ := O(\log L).%\ e\sqrt 2\Lambda r+\sqrt{2e}\sqrt\Lambda\ =\ O\big(\log^2 L\big).
\end{equation}
\end{lemma}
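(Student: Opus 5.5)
The plan is to write each fresh distance as an inner product against $\delta$, then bound the sum along the path with Cauchy--Schwarz and \cref{lem:innerprod}, using \cref{lem:b} and \cref{cor:level} to check its hypotheses. The first step is an exact formula for $D_\ell$. Since $\rho_I=\delta_I-\Pi_{E_I}\delta_I$ is orthogonal to $\Pi_{E_I}\delta_I$, we have $\langle u_I,\delta_I\rangle=\langle \rho_I,\delta_I\rangle/D_I=\|\rho_I\|_2^2/D_I=D_I$. Hence for the path of a uniform row index $i$,
\[
\sum_{\ell=6}^L D_\ell=\sum_{\ell=6}^L\sum_k u_\ell(k)\,\delta_{I_\ell(i)}(k)\;\le\;\sum_k\sum_{\ell=6}^L |u_\ell(k)|\,|\delta_{I_\ell(i)}(k)|\;\le\;\sum_k \sqrt{P_k}\,\sqrt{V_k},
\]
where the last step is Cauchy--Schwarz over $\ell$ for each fixed $k$. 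It also uses that the sum over $\ell\in[6,L]$ of $\delta_{I_\ell(i)}(k)^2$ is at most the full path variance $V_k(i)$ from \eqref{eq:pathvariance}.

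Next I apply \cref{lem:innerprod} with the random (over $i$) nonnegative vectors $X_k:=\sqrt{P_k}$ and $Y_k:=\sqrt{V_k}$, and check its hypotheses one by one.
\begin{itemize}
\item By \cref{lem:b} the $u_\ell$ form an orthonormal system, so Bessel's inequality applied to the basis vector $e_k$ gives $P_k=\sum_\ell \langle u_\ell,e_k\rangle^2\le 1$. Hence $\|X\|_\infty\le1$.
\item $\|X\|_2^2=\sum_k P_k=\sum_{\ell=6}^L\|u_\ell\|_2^2\le L$.
\item By \eqref{eq:pathvariance}, $V_k\le L$, so $\|Y\|_\infty^2\le L$.
\item For the tail estimate, \cref{cor:level} applied with $v=t^2$ gives $\E_i\#\{k: Y_k>t\}\le \E_i\#\{k:V_k(i)\ge t^2\}\le e/t$. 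The leaf $j$ in the corollary and our row index $i$ are both uniform on $[U]$ and follow the same dyadic path, so the corollary applies verbatim.
\end{itemize}
\cref{lem:innerprod} then yields $\E_i\sum_{\ell=6}^L D_\ell\le \E_i\langle X,Y\rangle\le C\log(2+L)$ for an absolute constant $C$.

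Finally, Markov's inequality shows that $\sum_\ell D_\ell> 16C\log(2+L)$ holds for at most a $1/16$ fraction of rows $i$. So we may take $K:=16C\log(2+L)=O(\log L)$.

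I expect the main obstacle to be recognizing the right pairing: the identity $D_\ell=\langle u_\ell,\delta_{I_\ell}\rangle$, followed by a per-coordinate Cauchy--Schwarz. This is what produces a bounded-$\ell_\infty$ vector (via Bessel) paired against a vector with a $1/t$ counting tail (via the martingale \cref{cor:level}), which is exactly the format \cref{lem:innerprod} needs. A cruder global Cauchy--Schwarz over all $k$ and $\ell$ would lose a factor of order $\sqrt L$. Beyond that, the only care needed is measurability: $u_\ell$ depends on $i$ only through the path, which is fine because \cref{lem:innerprod} allows arbitrary random vectors.
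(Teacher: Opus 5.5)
Your proposal is correct and follows the paper's proof essentially step for step: you use the identity $D_\ell=\langle u_\ell,\delta_{I_\ell(i)}\rangle$, then per-coordinate Cauchy--Schwarz to reach $\sum_k\sqrt{P_kV_k}$, then apply \cref{lem:innerprod} with $X_k=\sqrt{P_k}$ and $Y_k=\sqrt{V_k}$ (its hypotheses checked via orthonormality from \cref{lem:b} and the tail bound of \cref{cor:level} at $v=t^2$), and finish with Markov's inequality. Your explicit checks (Bessel's inequality for $P_k\le 1$ and the substitution $v=t^2$) just spell out details the paper leaves implicit.
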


\begin{proof}
Since $u_\ell\perp \Pi_{E_{I_\ell(i)}}\delta_{I_\ell(i)}$,
\[
D_\ell=\langle u_\ell,\rho_{I_\ell(i)}\rangle=\langle u_\ell,\delta_{I_\ell(i)}\rangle\ \le\ \sum_k |u_\ell(k)|\,|\delta_{I_\ell(i)}(k)|.
\]
Summing over $\ell\in[6,L]$ and applying Cauchy--Schwarz, 
\begin{equation}\label{eq:51}
\sum_{\ell=6}^L D_\ell
\leq \sum_k \sum_{\ell = 6}^L |u_\ell(k)|\,|\delta_{I_\ell(i)}(k)|
\leq
\sum_k \left( \sum_{\ell=6}^L u_\ell(k)^2 \right)^{1/2} \left( \sum_{\ell = 6}^L \delta_{I_\ell(i)}(k)^2 \right)^{1/2}
\le
\sum_k\sqrt{P_k\,V_k},
\end{equation}
where we used $\sum_{\ell=6}^L\delta_{I_\ell(i)}(k)^2\le V_k$.

We now apply \cref{lem:innerprod} suitably. First, observe that $\mathbf U$ with rows $u_6,\dots,u_L$ has orthonormal rows, so $\mathbf U^\top \mathbf U$ is an orthogonal projection of rank $L-5$; thus $P_k=(\mathbf U^\top \mathbf U)_{kk}\in[0,1]$ and $\sum_k P_k=L-5\le L$. Next, by \eqref{eq:pathvariance}, we know $V_k \leq L$. 

Let random variables $X,Y \in R^m$ be defined by $X_k = \sqrt{P_k}$ and $Y_k = \sqrt{V_k}$. Combining the above with \cref{cor:level}, $X, Y$ satisfy the conditions of \cref{lem:innerprod}. Thus, we must have 
$$\E[\langle X, Y \rangle] = O(\log L).$$

The claim now follows from a Markov inequality. 
\ignore{

$\sum_k V_k $
Thus, to bound the total fresh distance along row $i$, it
suffices to bound a weighted sum over dictionary coordinates $k$,
where $V_k$ is the path variance controlled by Corollary~\ref{cor:level}, and
$P_k$ is how much the orthonormal fresh directions use coordinate
$k$.  

To get a high probability statement in \eqref{eq:52}, we choose
dyadic threshold $v_\iota:=L\,2^{-\iota}$ for $\iota=0,\dots,r$ (so $v_r\le L^{-2}$). Consider the events (over uniform $i$):
\begin{align*}
\mathcal{E}_1: & \#\{k:\ V_k\ge v_\iota\}\ \le\ e\Lambda/\sqrt{v_\iota}  \mbox{ for all } 0\le\iota\le r; \\
\mathcal{E}_2: & \sum_k \min(V_k,v_r)\ \le\ 2e\Lambda\sqrt{v_r}.
\end{align*}
By Corollary~\ref{cor:level} and Markov's inequality, each single-$\iota$ condition in $\mathcal{E}_1$ 
fails with probability $\le 1/\Lambda$. 
For $\mathcal{E}_2$, since $\min(V,v_r)=\int_0^{v_r}\one[V\ge v]\,dv$,
\[
\E_i\sum_k\min(V_k,v_r)
=\int_0^{v_r} \E_{j\sim [n]} \ \#\{k:\ V_k(j)\ge v\} \,dv
\le\int_0^{v_r}\frac{e}{\sqrt v}\,dv
=2e\sqrt{v_r},
\]
so $\mathcal{E}_2$ fails with probability $\le 1/\Lambda$.  A union bound over $r+2$ events gives that both hold outside with probability $(r+2)/\Lambda=1/260$.

Assume both $\mathcal{E}_1$ and $\mathcal{E}_2$ hold.  We analyze
\eqref{eq:51}, splitting the $k$'s into two classes:
\begin{itemize}[nosep]
\item $V_k\ge v_r$: bucket $k$ by the unique $\iota\in[0,r-1]$ with $v_{\iota+1}\le V_k< v_\iota$ (elements with $V_k=v_0=L$ go into bucket $\iota=0$, where $\sqrt{V_k}=\sqrt L=\sqrt{2v_1}$, consistent with the estimate). Bucket $\iota$ contains at most $e\Lambda/\sqrt{v_{\iota+1}}$ elements (by $\mathcal{E}_1$ at level $\iota+1$), each contributing $\sqrt{P_kV_k}\le\sqrt{V_k}\le\sqrt{2v_{\iota+1}}$ (using $P_k\le 1$); so each bucket totals $\le e\sqrt 2\,\Lambda$, and over $r$ buckets the heavy contribution is $\le e\sqrt 2\,\Lambda r$. 
\item $V_k< v_r$: by Cauchy--Schwarz over these ``light'' $k$'s,
\[
\sum_{k\ \mathrm{light}}\sqrt{P_kV_k}\ \le\ \Big(\sum_kP_k\Big)^{1/2}\Big(\sum_k\min(V_k,v_r)\Big)^{1/2}\ \le\ \big(L\cdot 2e\Lambda\sqrt{v_r}\big)^{1/2}\ \le\ \sqrt{2e}\sqrt\Lambda,
\]
using $\sqrt{v_r}\le 1/L$. 
\end{itemize}
Hence \eqref{eq:52} holds outside measure $1/260$.}
\end{proof}

\subsection{Final steps}

Since $0\le|R(i)|\le L$ always and $\E_i|R(i)|\ge L/5$ (Lemma~\ref{lem:c}), reverse Markov gives
\[
\Pr_i\Big[|R(i)|\ge \tfrac{L}{10}\Big]\ \ge\ \frac{L/5-L/10}{L-L/8} \ =\ \frac1{9}.
\]
Combining with \eqref{eq:52}, which fails on measure $\le 1/16$: a set of rows $i$ of measure $\ge 1/9 -1/16=7/144$ satisfies \emph{both} $|R(i)|\ge L/10$ and $\sum_{\ell=6}^L D_\ell\le K$. Fix such a row $i$. By, \cref{lem:jensen}, we  get
$$\sum_{\ell \in R(i)} \frac{1}{D_{I_\ell(i)}^2} \geq \frac{L^3}{1000 K^2}.$$

\ignore{

Now, for such a $i$, by Jensen's inequality, we have
$$\sum_{\ell \in R(i)} \frac{1}{D_{I_\ell(i)}^2}$$

At most $K/\lambda^\ast$ scales $\ell\in[6,L]$ have $D_\ell>\lambda^\ast$; choose $\lambda^\ast:=260K/L$, so at most $L/260$ scales are excluded, leaving at least $\frac{L}{130}-\frac{L}{260}=\frac{L}{260}$ scales $\ell\in R(i)$ with $D_\ell\le\lambda^\ast$.} 

By \eqref{eq:41}:
\[
\|a_i\|_2^2\ \ge\ \frac1{256000}\cdot \frac{L^3}{K^2}.
\]
With $K=O(\log L)$ this gives, for $L\ge 10$,
$\max_i \|a_i\|_2 \ge \Omega\big(L^{3/2}/\log L\big)$.

\bibliography{refs}
\bibliographystyle{alpha}

\appendix
\section{Correlation analysis for $\csk$}\label{sec:correlationfull}

The following is a restatement of \cref{thm:cov-mains} with explicit constants:
\begin{theorem}[Correlation bound for median-of-5 $\countsketch$]\label{thm:cov-main}
Consider $\csk$ with $\ell=5$ rows and $r$ buckets per row, with the median decoder.
Then for any $u \neq v$, and errors as defined in $\cref{eq:csk3}$, 
\[
\big|\E[\err(u) \cdot \err(v)]\big|
\;\le\;
\frac{70}{r^3}\,\|x\|_1^2.
\]
In particular, setting $r=\lceil 1/\varepsilon\rceil$ gives
$|\E[\err(u) \cdot \err(v)]|\le 70\varepsilon^3 \|x\|_1^2$.
\end{theorem}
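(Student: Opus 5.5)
The plan follows the overview in \cref{sec:correlation}. Fix $u\neq v$ and let $K=|\{a\in[5]: h_a(u)=h_a(v)\}|\sim\mathrm{Bin}(5,1/r)$. We decompose
\[
\E[\err(u)\err(v)]=\sum_{k=0}^{5}\E\big[\err(u)\err(v)\mathbf 1_{\{K=k\}}\big]
\]
and bound each term by $O(\|x\|_1^2/r^3)$, tracking constants so they sum to at most $70$.

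\textbf{Cases $K=0$ and $K\ge 3$.} For $K=0$ we use a sign-flip symmetry. Condition on all hashes, and on all signs except $\sigma_{a,u}$ for $a\in[5]$. In every row $a$ we have $h_a(u)\ne h_a(v)$, so $v$ is not in $u$'s bucket. Flipping all five signs $\sigma_{a,u}$ simultaneously then negates every $Y_{a,u}$, because $Y_{a,u}$ is $\sigma_{a,u}$ times a quantity independent of $\sigma_{a,u}$. Hence $\err(u)$ is negated. It remains to check that $\err(v)$ is unchanged. The terms $\sigma_{a,v}\sigma_{a,u}x_u$ appear in $Y_{a,v}$ only when $h_a(u)=h_a(v)$, which never happens here. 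So the conditional expectation vanishes.

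For $K\ge 3$ we use the crude bound $|\err(\cdot)|\le\|x\|_1$, since each $|Y_{a,\cdot}|\le\|x\|_1$ and so is their median. This gives a contribution of at most $\|x\|_1^2\Pr[K\ge3]\le\binom53 r^{-3}\|x\|_1^2=10\|x\|_1^2/r^3$.

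\textbf{Cases $K=1,2$.} Here we use Cauchy--Schwarz in the form
\[
\big|\E[\err(u)\err(v)\mathbf 1_{K=k}]\big|\le \Pr[K=k]\,\E[\err(u)^2\mid K=k]^{1/2}\E[\err(v)^2\mid K=k]^{1/2},
\]
with $\Pr[K=1]\le 5/r$ and $\Pr[K=2]\le 10/r^2$. By symmetry between $u$ and $v$ it then suffices to show
\[
\E[\err(u)^2\mid K=1]=O(\|x\|_1^2/r^2),\qquad \E[\err(u)^2\mid K=2]=O(\|x\|_1^2/r).
\]

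\emph{$K=2$.} Conditioning on which two rows collide, the other three rows are independent of that event and have the unconditional distribution. The median of five values lies between the minimum and maximum of any three of them that sit on the appropriate sides. In particular $|\median|\le\max$ over the three clean rows $|Y_{a,u}|$ in the relevant sense; more precisely, $\err(u)^2\le\sum_{a\text{ clean}}Y_{a,u}^2$ whenever at least one clean value lies on each side of the median. I will argue robustly via $\err(u)^2\le \text{(third smallest of }|Y_{a,u}|)^2$ over the three clean rows. In fact a simpler route is $|\median_5|\le\max_{a\in S}|Y_{a,u}|$ for any set $S$ of three rows, because at least one value of $S$ lies weakly above the median and one weakly below. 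Then $\E[\max_{a\in S}Y_{a,u}^2]\le\sum_{a\in S}\E Y_{a,u}^2\le 3\|x\|_2^2/r$ by pairwise independence of the signs, and this is at most $3\|x\|_1^2/r$.

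\emph{$K=1$.} This is the main obstacle, because we need the sharper $1/r^2$ rate that \cite{LarsenPT21} obtains for the median of $\ge3$ rows. With one collided row removed, four clean i.i.d.\ rows remain. Whatever the corrupted row does, the median of five is sandwiched between the second and third order statistics of the four clean values, so $|\err(u)|\le$ the third-largest of the $|Y_{a,u}|$ over the clean rows. Hence
\[
\Pr[|\err(u)|>t]\le\binom42\Pr[|Y|>t]^2 .
\]
I would bound $\Pr[|Y|>t]$ by the standard heavy/light split: the probability that some item with $|x_w|\ge t/2$ hashes into $u$'s bucket is at most $2\|x\|_1/(rt)$. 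The light part is bounded by Chebyshev using a variance at most $t\|x\|_1/(2r)$, giving $O(\|x\|_1/(rt))$. Squaring gives a tail of $O(\|x\|_1^2/(r^2t^2))$. That is just barely non-integrable against $2t\,dt$, so I would also cap the tail by $1$ and use $|\err|\le\|x\|_1$, which introduces a logarithmic loss. To avoid that loss I would instead use three-fold products, via the event that two clean rows both have large error while the light parts concentrate more tightly, or invoke the \cite{LarsenPT21} argument directly on any three clean rows. That argument shows the median-of-3 second moment is $O(\|x\|_1^2/r^2)$, and the order-statistic sandwich transfers it to this setting. Either way we get $O(\|x\|_1^2/r^2)$, which yields a $K=1$ contribution of $O(\|x\|_1^2/r^3)$. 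Summing the four cases with explicit constants gives the stated $70/r^3$.
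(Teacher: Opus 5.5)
\textbf{Where the proposal matches the paper.} Your decomposition by $K$ is the paper's proof, and so are the sign-flip argument for $K=0$, the bound $10\|x\|_1^2/r^3$ for $K\ge 3$, and the Cauchy--Schwarz reduction. Your $K=2$ case ($|\median_5|$ at most the largest of three clean magnitudes, then $3\|x\|_2^2/r$) also matches. One small correction there: the clean rows are conditioned on $h_a(u)\ne h_a(v)$, not unconditional. This is harmless, since the conditioning only removes $v$ from $u$'s bucket.

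\textbf{The gap is the $K=1$ case.} It must contribute at most $30\|x\|_1^2/r^3$ for the total to reach $70$, and your argument does not deliver that.
\begin{itemize}
\item Your claim that $|\err(u)|$ is at most the \emph{third}-largest clean magnitude is false. Clean values $1,2,10,10$ with corrupted value $100$ have median $10$, while the third-largest clean magnitude is $2$.
\item The sandwich between the second and third clean order statistics only yields the \emph{second}-largest magnitude $R_2$. That is what your $\binom42\Pr[|Y|>t]^2$ tail actually uses, so a ``three-fold product'' route is not available.
\item Your tail integration then loses a $\log r$ factor, as you note.
\item The fallback to \cite{LarsenPT21} gives $O(\|x\|_1^2/r^2)$, but only with an unspecified constant. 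It also needs the unstated step $\median_5^2\le\sum_{|S|=3}\median_3(S)^2$ over the four triples of clean rows, each being a CountSketch of $x$ with $x_v$ zeroed.
\end{itemize}
So your closing assertion that the explicit constants sum to $70$ is not supported.

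\textbf{The missing idea is elementary.} For nonnegative $X_1,\dots,X_4$ we have $R_2^2\le X_{(1)}X_{(2)}\le\sum_{a<b}X_aX_b$. After conditioning on which row is bad, the clean rows remain independent, so $\E[X_aX_b]=\E X_a\,\E X_b$. The first-moment bound $\E[|Y_{a,u}|\mid h_a(u)\ne h_a(v)]\le\|x\|_1/r$ is a one-line computation. Together these give $\E[\err(u)^2\mid K=1]\le 6\|x\|_1^2/r^2$, hence a contribution of $30\|x\|_1^2/r^3$, and $30+30+10=70$.

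Your own tail bound was one step away from this. The logarithm appears only because you inserted the Markov bound $\|x\|_1/(rt)$ into \emph{both} factors of $\Pr[X_a>t]\Pr[X_b>t]$. By independence this product equals $\Pr[\min(X_a,X_b)>t]$, so $\int_0^\infty 2t\,\Pr[X_a>t]\Pr[X_b>t]\,dt=\E[\min(X_a,X_b)^2]\le\E[X_aX_b]$. This needs no log, no appeal to \cite{LarsenPT21}, and no heavy/light split, which only reproduces Markov's bound anyway.
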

We here provide the proof of \cref{thm:cov-mains}, together with details on how to derandomize it in low memory. We use the same setup as in \cref{sec:correlation}.

Fix distinct $i,j\in[U]$. For each row $a\in[5]$, call row $a$ \emph{bad}
if $h_a(i)=h_a(j)$, and let
\[
K=\bigl|\{a\in[5]:h_a(i)=h_a(j)\}\bigr|,
\]
be the number of bad rows.  Thus $K\sim\operatorname{Bin}(5,1/r)$.

We first state a simple fact about medians. 
\begin{lemma}
\label{lem:median}
Suppose
$z_1,\ldots,z_5\in\mathbb R$ and at most $k\le 2$ of the five
indices are bad. Let $G$ be the set of remaining good indices,
and let $R_{3-k}$ be the $(3-k)$th largest value among
$\{|z_a|:a\in G\}$. Then
$\left|\median(z_1,\ldots,z_5)\right|
\le R_{3-k}.$
\end{lemma}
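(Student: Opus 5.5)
The plan is to relate the median to order statistics of the absolute values and then discard the bad indices. Let $w_1\ge w_2\ge\dots\ge w_5$ be the sorted values $z_{(1)}\ge\cdots\ge z_{(5)}$, so the median is $w_3$.

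The first step shows that $|\median|$ is at most the third largest of $|z_1|,\dots,|z_5|$, which I denote $T_3$.
\begin{itemize}[nosep]
\item If $w_3\ge 0$, then $w_1,w_2,w_3$ are three values with absolute value at least $w_3=|w_3|$. Hence at least three of the $|z_a|$ are $\ge |w_3|$, which gives $T_3\ge |w_3|$.
\item If $w_3<0$, the same argument applies to $w_3,w_4,w_5$, each of absolute value at least $|w_3|$.
\end{itemize}
So in both cases, at least three indices $a$ satisfy $|z_a|\ge|\median|$.

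The second step removes the bad indices. Let $S$ be this set of at least three indices with $|z_a|\ge|\median|$. At most $k$ indices are bad, so $S$ contains at least $3-k$ good indices. Each of them satisfies $|z_a|\ge |\median|$.

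It follows that among $\{|z_a|:a\in G\}$ at least $3-k$ values are $\ge|\median|$. Therefore the $(3-k)$th largest of these, $R_{3-k}$, is at least $|\median|$.

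This value is well defined because $|G|\ge 5-k\ge 3-k\ge1$ when $k\le2$. The statement says "at most $k$" bad indices; if fewer are bad, $G$ only grows, and the count argument is unchanged.

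No step is a real obstacle. The only point needing care is the sign case split in the first step, which guarantees that three of the sorted values all lie on the same side of zero as the median and at least as far from zero as it.
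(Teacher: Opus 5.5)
Your proof is correct and is essentially the paper's argument. The paper argues by contradiction: if $|\median(z_1,\ldots,z_5)| > R_{3-k}$, then three of the $z_a$ exceed $R_{3-k}$ in absolute value, and at least $3-k$ of them are good. You run the same counting directly, showing that at least three indices satisfy $|z_a| \ge |\median(z_1,\ldots,z_5)|$ and that at least $3-k$ of these are good.
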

\begin{proof}
If the median were larger than $R_{3-k}$, then at least three of the
five $z_a$'s would be larger than $R_{3-k}$. Since at most $k$ of these
indices are bad, at least $3-k$ good indices would have absolute value
larger than $R_{3-k}$, contradicting its definition. The case
where the median is smaller than $-R_{3-k}$ is similar.
\end{proof}

We will also use two elementary estimates for a good row. 
\begin{lemma}
\label{lem:goodrow}
We have
\[
(i) \E\left[|Y_{a,i}|\,\middle|\,h_a(i)\ne h_a(j)\right]
\le
\frac{\|x\|_1}{r} 
\quad \mbox{ and } \quad
(ii) \E\left[Y_{a,i}^2\,\middle|\,h_a(i)\ne h_a(j)\right]
\le
\frac{\|x\|_2^2}{r}.
\]
\end{lemma}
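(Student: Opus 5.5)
The plan is to condition on the hash value $h_a(i)$ and on the event $h_a(i)\ne h_a(j)$, and then compute directly from the definition
\[
Y_{a,i}=\sum_{v\ne i:\,h_a(v)=h_a(i)}\sigma_{a,i}\sigma_{a,v}x_v .
\]
Under full independence, the variables $\{h_a(v)\}_{v\ne i,j}$ are i.i.d.\ uniform on $[r]$. They are independent of $h_a(i),h_a(j)$ and of the signs. Conditioning on $h_a(i)\ne h_a(j)$ therefore leaves them uniform and independent. The only change the conditioning causes is that the term $v=j$ disappears, since $j$ cannot collide with $i$. So, conditionally,
\[
Y_{a,i}=\sum_{v\ne i,j}\mathbf 1[h_a(v)=h_a(i)]\,\sigma_{a,i}\sigma_{a,v}x_v,
\]
and each indicator has conditional probability exactly $1/r$.

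For (i), I apply the triangle inequality, dropping the signs since they have modulus one:
\[
|Y_{a,i}|\le\sum_{v\ne i,j}\mathbf 1[h_a(v)=h_a(i)]\,|x_v| .
\]
Taking conditional expectations term by term gives
\[
\E\bigl[|Y_{a,i}|\,\big|\,h_a(i)\ne h_a(j)\bigr]\le\frac1r\sum_{v\ne i,j}|x_v|\le\frac{\|x\|_1}{r}.
\]

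For (ii), I expand the square as a double sum over $v,w\ne i,j$:
\[
Y_{a,i}^2=\sum_{v,w}\mathbf 1[h_a(v)=h_a(i)]\,\mathbf 1[h_a(w)=h_a(i)]\,\sigma_{a,v}\sigma_{a,w}\,x_vx_w ,
\]
where the factor $\sigma_{a,i}^2=1$ has already been dropped. The signs are independent of the hashes and independent of the conditioning event, which involves only the hashes of $i$ and $j$. Hence for $v\ne w$ we get $\E[\sigma_{a,v}\sigma_{a,w}]=0$, and every cross term vanishes. The diagonal terms $v=w$ contribute $\frac1r x_v^2$ each, so
\[
\E\bigl[Y_{a,i}^2\,\big|\,h_a(i)\ne h_a(j)\bigr]=\frac1r\sum_{v\ne i,j}x_v^2\le\frac{\|x\|_2^2}{r}.
\]

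There is no real obstacle here. The only point needing care is to justify that the conditioning event $\{h_a(i)\ne h_a(j)\}$ leaves both the collision probabilities for $v\ne i,j$ and the sign independence intact. This follows from full independence of the hash values across keys and from the independence of the signs from the hashes. When the paper later derandomizes, this step will need only pairwise independence of the hashes on triples $\{i,j,v\}$ and 4-wise independence of the signs, i.e., constant-wise independence.
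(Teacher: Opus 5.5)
Your proof is correct and follows the paper's argument essentially verbatim. For (i) you use the triangle inequality, note that the $j$-term vanishes under the conditioning, and use that each remaining collision has conditional probability $1/r$; for (ii) you expand the square, let the sign cross terms vanish, and average the diagonal over the hashes.

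One minor aside on your closing remark, which is not part of the lemma. The identity $\Pr[h_a(v)=h_a(i)\mid h_a(i)\ne h_a(j)]=1/r$ genuinely needs $3$-wise independence of $h_a$, not just pairwise. Conversely, pairwise independence of the signs already suffices for the cross terms, so $4$-wise is more than needed; this matches what the paper uses in its derandomization section.
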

\begin{proof}
(i) follows from
\[
|Y_{a,i}|
\le
\sum_{u\ne i}|x_u| \cdot \mathbf 1 \{h_a(u)=h_a(i)\}.
\]
After conditioning on $h_a(i)\ne h_a(j)$, the term $u=j$ contributes zero,
and for every $u\notin\{i,j\}$,
\[
\Pr[h_a(u)=h_a(i)\mid h_a(i)\ne h_a(j)]=\frac1r.
\]
For (ii), since $h_a(i)\ne h_a(j)$,
the item $j$ is not in the bucket of $i$, and hence
\[
Y_{a,i}
=
\sum_{u\notin\{i,j\}:\,h_a(u)=h_a(i)}
\sigma_{a,i}\sigma_{a,u}x_u .
\]
Taking expectation over the signs kills the cross terms, giving
\[
\E_{\sigma}[Y_{a,i}^2\mid h_a]
=
\sum_{u\notin\{i,j\}:\,h_a(u)=h_a(i)} x_u^2.
\]
Averaging over the hashes, still conditioned on $h_a(i)\ne h_a(j)$, gives
(ii). The same estimates hold with $i$ and $j$ interchanged.
\end{proof}
Resuming the proof of \cref{thm:cov-mains}, we now do a case analysis on the values of $K$.

(i) $K=0.$  Then, $\E[\err(i) \cdot \err(j)\mid K=0]=0.$  Indeed, condition on all hash values and on all signs except
$\sigma_{1,i},\ldots,\sigma_{5,i}$. Since no row has $h_a(i)=h_a(j)$,
the quantity $\err(j)$ is fixed under this conditioning. On
the other hand, flipping all five signs
$\sigma_{a,i}\mapsto-\sigma_{a,i}$ flips all five values $Y_{a,i}$, and
therefore flips $\err(i)$. By symmetry of the signs, the
conditional expectation is zero.

(ii) $K=1$.  Condition on the identity of the unique bad row; the remaining four rows are good. By 
\cref{lem:median}, $|\err(i)|\le R_2$,
where $R_2$ is the second largest among the four random variables
$|Y_{a,i}|$ coming from the good rows. For nonnegative numbers
$X_1,\ldots,X_4$, the square of the second largest is at most
\[
\sum_{a<b}X_aX_b.
\]
Using independence across rows and \cref{lem:goodrow}(i), we get
\[
\E[\err(i)^2\mid K=1]
\le
\binom42\left(\frac{\|x\|_1}{r}\right)^2
=
6\frac{\|x\|_1^2}{r^2}.
\]
The same bound holds for $\err(j)$. Hence by Cauchy--Schwarz and since $\Pr[K=1]\le \frac5r$, 
\[
\left|
\E[
\err(i) \cdot \err(j) \cdot \mathbf 1 \{K=1\}
]
\right|
= 
\left|
\E[\err(i) \cdot \err(j)\mid K=1]
\right| \cdot \Pr[K = 1]
\le
6\frac{\|x\|_1^2}{r^2} \cdot \frac5r
= 30\frac{\|x\|_1^2}{r^3}.
\tag{3}
\]

(iii) $K = 2$.  Condition on the identities of the two bad rows; the remaining three rows are good. By \cref{lem:median}, 
$|\err(i)| \le
\max_{a\in G}|Y_{a,i}|$,
where $G$ is the set of good rows. Therefore
\[
\err(i)^2
\le
\sum_{a\in G}Y_{a,i}^2.
\]
Using \cref{lem:goodrow}(ii),
\[
\E[\err(i)^2\mid K=2]
\le
3\frac{\|x\|_2^2}{r}
\le
3\frac{\|x\|_1^2}{r}.
\]
The same bound holds for $\err(j)$.  Hence  by
Cauchy--Schwarz and since $\Pr[K=2]\le \binom52\frac1{r^2}=\frac{10}{r^2}$, 
\[
\left|
\E[
\err(i) \cdot \err(j) \cdot \mathbf 1 \{K=2\} 
\right|
= 
\left|
\E[\err(i) \cdot \err(j)\mid K=2]
\right| 
 \cdot \Pr[K=2]
\leq 
3\frac{\|x\|_1^2}{r} \cdot \frac{10}{r^2}
= 
30\frac{\|x\|_1^2}{r^3}.
\tag{4}
\]

(iv) $K \geq 3$.  On every outcome,
\[
|\err(i)|\le \|x\|_1
\qquad\text{and}\qquad
|\err(j)|\le \|x\|_1,
\]
and since 
$\Pr[K\ge 3]\le \binom53\frac1{r^3}=\frac{10}{r^3}$, 
\[
\left|
\E[
\err(i) \cdot \err(j) \cdot \mathbf 1 \{K\ge 3\} 
]
\right|
\le
10\frac{\|x\|_1^2}{r^3}.
\tag{5}
\]

Combining the four cases, the $K=0$ contribution is zero, while
(3), (4), and (5) give
\[
\left|
\E[\err(i) \cdot \err(j)]
\right|
\le
(30+30+10)\frac{\|x\|_1^2}{r^3}
=
70\frac{\|x\|_1^2}{r^3}.
\]
This proves the theorem.

\begin{remark} 
Note the proof breaks down for $\ell=3$ for the following reason.  The key step in the $K=1$ case of the proof is that, after removing the unique bad row, four good
rows remain. The median is then controlled by the second largest absolute
value among these four good rows. For $\ell = 3$, the analogous situation with $K=1$ leaves only two good rows. The median can then only be bounded by the maximum of the two
good-row magnitudes. This leads to a bound using the second moment $\frac{\|x\|_2^2}{r}$, rather than the square of the first moment.  Hence, the $K=1$
contribution is only controlled at scale
$O(\|x\|_1^2/r^2)$ 
and not at scale $O(\|x\|_1^2/r^3)$.

This is not only an artifact of the analysis. Consider the two-sparse vector supported on $\{i,j\}$ with equal magnitudes. One can check that with $\ell=3$ the median can become nonzero as soon as $K\ge 2$, which occurs with probability $\Theta(1/r^2)$, yielding a covariance scale $\Theta(\|x\|_1^2/r^2)$ rather than $\Theta(\|x\|_1^2/r^3)$.
Thus $\ell\ge 5$ is the first odd choice compatible with an $\varepsilon^3$-type target when $r=\Theta(1/\varepsilon)$. 
\end{remark}

\subsection{Derandomized Correlation Bound for $\csk$}\label{sec:derandomize}

We now explain how to derandomize the proof of \Cref{thm:cov-main}. The proof is written assuming fully independent hash functions and signs.
The same argument for the $K=1$, $K=2$, and $K\ge 3$ cases only needs
limited independence.

First, the collision tail bounds
\[
\Pr[K\ge t]\le {5\choose t}r^{-t}, \qquad t=1,2,3,
\]
only require independence across the five rows and the guarantee that, in
each row, $\Pr[h_a(u)=h_a(v)]=1/r$. 
Thus pairwise independence of each hash function $h_a$ would suffice for
this part.

Second, \Cref{lem:median} is deterministic. The $K=1$ part of the proof uses
\Cref{lem:goodrow}(i).  This estimate only uses that, for every $w\notin\{u,v\}$,
\[
\Pr[h_a(w)=h_a(u)\mid h_a(u)\ne h_a(v)]=1/r,
\]
which follows from $3$-wise independence of $h_a$.

Third, the $K=2$ part of the proof uses \Cref{lem:goodrow}(ii).  
For this estimate it is enough that $h_a$ is $3$-wise independent and that
the row-sign vector $\sigma_a$ is pairwise independent and independent of
$h_a$, since the proof only needs the cross terms in
$\mathbb E_\sigma[Y_{a,u}^2\mid h_a]$ to vanish.

The only part not covered by this limited independence argument
is the exact $K=0$ sign-flip cancellation. 
We now show how to handle the $K=0$ case if the $\sigma$ random variables come from a pseudorandom generator fooling intersections of $O(1)$ halfspaces \cite{GopalanOWZ10,GopalanKM18}. Note then there are two constraints on $\sigma$ (pairwise independence, and fooling intersections of halfspaces); we finally give a soft argument showing a simple way to achieve both simultaneously.

%Several steps in the proof of \cref{thm:cov-mains} above do not require full independence:
%\begin{itemize}
%  \item Lemma~\ref{lem:goodrow-L1} and Lemma~\ref{lem:goodrow-L2} require that for any $u\notin\{i,j\}$,
%  $h_a(u)$ is independent of $(h_a(i),h_a(j))$; $3$-wise independence across keys suffices.
%  \item Lemma~\ref{lem:goodrow-L2} uses that cross terms vanish under sign expectation; $4$-wise independence of signs is more than sufficient.
%\end{itemize}

%By following these steps carefully, and using pseudorandom generators for functions of halfspaces, we show the following. 

\begin{definition}
A \emph{halfspace} is a function $f:\{-1,1\}^m \rightarrow \R$ that can be expressed as either
$$
f(x) = \mathbf{1}\{\langle w,x\rangle \ge \theta \}, \text{ or}\qquad f(x) = \mathbf{1}\{\langle w,x\rangle > \theta \}
$$
for some $w\in\R^m, \theta\in\R$. We say a function $F:\{-1,1\}^m\rightarrow\R$ is an \emph{intersection of $k$ halfspaces} if it can be expressed as $F(x) = \prod_{i=1}^k f_i(x)$ for some halfspaces $f_1,\ldots,f_k$.

We then say a function $\mathcal G:\{0,1\}^s\rightarrow\{-1,1\}^m$ is a \emph{$\gamma$-pseudorandom generator against the intersection of $k$ halfspaces} (or \emph{$\gamma$-PRG}) if for any such intersection $F$,
$$
\left|\E_{x\in\{-1,1\}^m} F(x) - \E_{s\in\{0,1\}^s} F(\mathcal G(s)) \right| \le \gamma
$$
\end{definition}

\begin{theorem}[\cite{GopalanOWZ10}]\label{thm:prgs}
For any integer $m>0$ and $0<\gamma<1$, and any fixed constant integer $k>0$, there exists a $\gamma$-PRG against the intersection of $k$ halfspaces over $\{-1,1\}^m$ with seed length $s = O(\log m + \log^2(1/\gamma))$.
\end{theorem}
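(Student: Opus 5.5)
Since this is an imported result, the plan is to reconstruct the Meka--Zuckerman / Gopalan--O'Donnell--Wu--Zuckerman argument for $k=O(1)$ halfspaces rather than derive it from anything earlier in the paper.

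\textbf{The generator.} Pick a pairwise-independent hash $g:[m]\to[t]$ with $t=\mathrm{poly}(1/\gamma)$ buckets. Within each bucket, fill the coordinates with $4$-wise (more generally $O(1)$-wise) independent signs. Do not give the buckets independent seeds. Instead, take the $t$ per-bucket seeds, each of $O(\log m)$ bits, as the output of Nisan's (or INW's) generator for read-once branching programs of width $\mathrm{poly}(m/\gamma)$ and length $t$. This costs $O(\log m + \log(t)\log(m/\gamma))$ bits. To reach the stated bound $O(\log m+\log^2(1/\gamma))$, I would precompose with a standard dimension-reduction step: hash $[m]$ into $\mathrm{poly}(1/\gamma)$ blocks so the branching-program width depends only on $1/\gamma$.

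\textbf{Regular case.} Suppose each weight vector $w_1,\dots,w_k$ is $\tau$-regular, i.e.\ no single coordinate carries more than a $\tau$ fraction of its $\ell_2$ mass.
\begin{itemize}[nosep]
\item Compare the generator with truly random signs through a multidimensional Lindeberg replacement, one bucket at a time.
\item Replace the indicator of the polytope $\{z\in\R^k: z_i\ge\theta_i\}$ by a smooth mollifier with bounded third derivatives at scale $\lambda$.
\item Each replacement step costs about $\lambda^{-3}\sum_{\text{bucket}}\|w|_{\text{bucket}}\|_2^3$. Only $4$-wise independence inside a bucket is used, to match second moments and bound fourth moments. Pairwise independence of $g$ ensures no bucket is heavy for any of the $k$ vectors, so the buckets' $\ell_2$ masses are small.
\item Pay for the mollification with Gaussian anticoncentration. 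A polytope with $k$ facets has Gaussian surface area $O(\sqrt{\log k})$, so the boundary region of width $\lambda$ has probability $O(\lambda\sqrt{\log k})$.
\item Choose $\lambda$ and $\tau$ polynomial in $\gamma$ to get total error $\gamma$.
\end{itemize}

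\textbf{General case via regularity.} Apply the critical-index regularity lemma jointly to the $k$ weight vectors. Either some vector has a bounded-size head of large coordinates after which the tail is regular, or its head has so many coordinates that the halfspace is essentially a function of the head alone. With $k=O(1)$ vectors, the union $H$ of all heads has size $\mathrm{poly}(1/\gamma)$. Under both the generator and true randomness, $H$ is spread over distinct buckets with good probability. Conditioned on any fixing of the bits in $H$, the remaining tail functions are regular halfspaces (or near-constant ones), so the regular case applies. I expect this joint regularity bookkeeping for $k$ halfspaces to be the \textbf{main obstacle}. The difficulty is that different vectors can have different critical indices and interact through the restriction. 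My first attempt would be an iterated critical-index argument: repeatedly extract the largest remaining head across all $k$ vectors, up to $k$ times, which keeps the losses bounded by $\mathrm{poly}(k)$ factors.

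\textbf{Derandomizing across buckets.} Finally, justify replacing the independent bucket seeds with Nisan's generator. A test that reads bucket seeds in order need only maintain the $k$ partial sums $\langle w_i,x\rangle$, rounded to precision $\mathrm{poly}(\gamma)$ after normalizing the weights. This is a read-once branching program of width $(1/\gamma)^{O(k)}$. Nisan's generator fools it with error $\gamma$ and seed $O(\log t\cdot k\log(1/\gamma))=O(\log^2(1/\gamma))$. Adding the $O(\log m)$ bits for $g$ and the within-bucket hash families gives the claimed seed length.
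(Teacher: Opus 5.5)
The paper gives no proof of this statement; it imports it from \cite{GopalanOWZ10}. That work analyzes essentially the Meka--Zuckerman bucket generator you describe: pairwise-independent bucketing, bounded independence inside buckets, invariance plus critical-index regularity for independent bucket seeds, and a space-bounded generator on the bucket seeds. So your blueprint is the right one, but your last step fails as written. Rounding the $k$ partial sums to precision $\mathrm{poly}(\gamma)$ does not give a read-once branching program of width $(1/\gamma)^{O(k)}$ that can stand in for the test, for two reasons:
\begin{itemize}
\item With $\|w_i\|_2=1$, a prefix sum can be as large as $\|w_i\|_1$, up to $\sqrt m$. So the grid needs $\mathrm{poly}(m/\gamma)$ points per form, not $\mathrm{poly}(1/\gamma)$.
\item The rounded program no longer computes the test. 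It is only useful as a sandwich, and that requires anti-concentration of each $\langle w_i,x\rangle$ near $\theta_i$.
\end{itemize}
For regular forms both issues are repairable: use an absorbing overflow state whose probability is itself fooled, a maximal inequality across independent buckets, and invariance-based anti-concentration. But a fixed program must also process the head coordinates. Their contributions can exceed the tail scale by arbitrary factors and then cancel later in the reading order, so clipping is wrong. The conditioning on $x_H$ that your analysis uses is not available to the program either, since remembering the head bits costs width $2^{|H|}$ with $|H|=\mathrm{poly}(1/\gamma)$. Your ``dimension-reduction'' precomposition does not help: hashing coordinates into blocks does not reduce the number of partial-sum states the program must distinguish.

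The missing ingredient is Meka and Zuckerman's notion of monotone read-once branching programs, introduced precisely for this step. A halfspace read bucket seed by bucket seed is a program with these properties:
\begin{itemize}
\item its states (partial sums) are totally ordered;
\item its transitions preserve that order;
\item its accepting set is an up-set.
\end{itemize}
Any such program, of arbitrary width, is sandwiched between two programs of width $\mathrm{poly}(t/\gamma)$ whose expectations under independent uniform bucket seeds differ by at most $\gamma$. For an intersection, multiply the $k$ sandwiches to get width $\mathrm{poly}(t/\gamma)^k$ and gap at most $k\gamma$. No regularity is used here, so the structural analysis is needed only for the generator with independent bucket seeds.

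You should also use INW rather than Nisan. With $O(\log m)$-bit blocks, Nisan's generator costs $O(\log t\cdot(\log m+\log(W/\gamma)))$. INW costs $O(\log m+\log t\cdot\log(tW/\gamma))$, and that is what keeps $\log m$ additive. For reference, the bound stated in \cite{GopalanOWZ10} for monotone functions of $k$ halfspaces has the product form $O((k\log(k/\gamma)+\log m)\log(k/\gamma))$, which already suffices for the paper's application.

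Finally, the joint-regularity bookkeeping you flag as the main obstacle can be avoided:
\begin{itemize}
\item Put the union $H$ of all heads into distinct buckets.
\item Use $5$-wise independence inside buckets, so that conditioning on one head bit leaves $4$-wise independence.
\item Condition on $x_H$, but mollify form $i$ at the scale $\|w_i|_{[m]\setminus H_i}\|_2$ of its own tail. Forms with large critical index are replaced by functions of their heads, as usual.
\end{itemize}
Removing the other heads only shrinks the Lindeberg terms. Anti-concentration is needed only after averaging over $x_{H\setminus H_i}$, where Berry--Esseen for the regular tail $w_i|_{[m]\setminus H_i}$ applies.
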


We now show that a PRG fooling the intersection of $4$ halfspaces suffices to handle the case $K=0$ of the proof.

\begin{lemma}\label{lem:derandomize-K0}
If $K=0$ (i.e., $h_a(u)\neq h_a(v)$ for all $a\in[5]$), the $\sigma_a$ are independently chosen across $a\in[5]$, and each $\sigma_a$ is itself the output of a PRG that $\gamma$-fools the intersection of $4$ halfspaces, then
\[
|\E[\err(u) \cdot \err(v)\mid K=0]| \le 5\cdot 2^{20}\gamma\|x\|_1^2 .
\]
\end{lemma}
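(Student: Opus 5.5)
We condition on all hash values $h_1,\dots,h_5$ (with $K=0$) and on all signs other than those of rows' coordinates at $u$ and $v$ that matter. We then show that $\E[\err(u)\err(v)]$ is close to its value under truly random signs, which is $0$ by the sign-flip argument of case (i). The plan is to write the product through its distribution functions. Since $|\err(u)|,|\err(v)|\le \|x\|_1$, use the layer-cake identity
\[
\err(u)\,\err(v)=\int_{-\|x\|_1}^{\|x\|_1}\int_{-\|x\|_1}^{\|x\|_1} \bigl(\mathbf 1\{\err(u)>s\}-\mathbf 1\{s<0\}\bigr)\bigl(\mathbf 1\{\err(v)>t\}-\mathbf 1\{t<0\}\bigr)\,ds\,dt .
\]
Expanding the product inside, it suffices to show that for each fixed $s,t$ the probability $\Pr[\err(u)>s,\ \err(v)>t]$, and each marginal $\Pr[\err(u)>s]$, is within a small additive error of its value under fully random signs. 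Integrating over the square of area $4\|x\|_1^2$ then gives a bound of $O(\gamma)\cdot\|x\|_1^2$ times a constant accounting for the combinatorics below.

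The key step is to express the event $\{\median(Y_{1,u},\dots,Y_{5,u})>s\}\cap\{\median(Y_{1,v},\dots,Y_{5,v})>t\}$ in terms of halfspaces in each row's sign vector $\sigma_a$. Fix the hashes. Then $\sigma_{a,u}Y_{a,u}=\sum_{w\ne u,\,h_a(w)=h_a(u)}\sigma_{a,w}x_w$ is linear in $\sigma_a$, but the leading factor $\sigma_{a,u}$ makes $Y_{a,u}$ itself nonlinear. We handle this by splitting on the value of $\sigma_{a,u}\in\{\pm1\}$ and of $\sigma_{a,v}$. Conditioned on these values, $\{Y_{a,u}>s\}$ becomes the event $\{\sigma_{a,u}\langle w,\sigma_a\rangle>s\}$, a halfspace, and likewise for $v$.

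The median exceeding $s$ means that at least $3$ of the $5$ row events $E_a=\{Y_{a,u}>s\}$ hold. So the joint event is a disjoint union over subsets $S,T\subseteq[5]$ (which rows succeed for $u$ and for $v$) of the events
\[
\bigcap_a\bigl(E_a^{\pm}\cap F_a^{\pm}\bigr),
\]
where each row contributes either an event or its complement. A complement of a halfspace is again a halfspace, since we allowed both $\ge$ and $>$. Including the sign-pattern split, each row-$a$ factor becomes an intersection of at most $4$ halfspaces in $\sigma_a$: two from the values of $\sigma_{a,u},\sigma_{a,v}$ (written as halfspaces on a single coordinate) and two from $E_a,F_a$.

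Because the rows are independent, the probability of each such product event is a product over $a$ of row probabilities. Each row probability is fooled to within $\gamma$, and a telescoping (hybrid) argument over the $5$ rows changes the product by at most $5\gamma$. The number of terms is at most $2^5\cdot2^5$ choices of $(S,T)$ times $2^{10}$ sign patterns, that is $2^{20}$. So each joint probability is within $5\cdot2^{20}\gamma$ of truth, and the marginals are handled identically.

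The main obstacle is the bookkeeping needed to reach exactly the constant $5\cdot 2^{20}$. The four expansion terms in the integrand each have absolute value at most $1$ after differencing against the true distribution. Summing the signed layer-cake pieces over the square $[-\|x\|_1,\|x\|_1]^2$ naively costs a factor of $4$, so the terms will need to be arranged so that this combines with the $2^{20}$ counting. One way is to write the product as a single signed combination of indicator events whose total variation is bounded by one per $(s,t)$, and to absorb the constant into the count of sign-pattern cells. If the constant comes out slightly larger, the statement's dependence on $\gamma$ is unaffected.
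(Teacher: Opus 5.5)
Your proposal is correct and is essentially the paper's argument. Both use a layer-cake identity for $\err(u)\err(v)$, decompose the median events into patterns $(S,T)$, split each row on $(\sigma_{a,u},\sigma_{a,v})$ so that each row factor is an intersection of $4$ halfspaces in $\sigma_a$, apply a $5\gamma$ hybrid over the independent rows, and use the sign-flip argument for uniform signs. The only differences are bookkeeping, plus one opening step you should remove.

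\textbf{Getting exactly $5\cdot 2^{20}$.} Count only the $16$ patterns with $|S|\ge 3$, not all $2^5$ subsets. Also rewrite your integrand on each quadrant of $[-n,n]^2$, where $n=\|x\|_1$, as a single signed joint indicator; for example it is $\mathbf 1\{\err(u)\le s,\err(v)\le t\}$ when $s,t<0$. This is the paper's identity
\[
\alpha\beta=\int_0^n\!\!\int_0^n\bigl(\mathbf 1\{\alpha>s,\beta>t\}+\mathbf 1\{\alpha<-s,\beta<-t\}-\mathbf 1\{\alpha>s,\beta<-t\}-\mathbf 1\{\alpha<-s,\beta>t\}\bigr)\,ds\,dt .
\]
It removes your marginal terms and gives exactly $4\cdot 2^8\cdot 2^{10}\cdot 5\gamma n^2=5\cdot 2^{20}\gamma n^2$.

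\textbf{The opening conditioning.} Drop your initial conditioning on the signs other than those at $u$ and $v$. The PRG guarantee does not survive conditioning on a subset of the coordinates of $\sigma_a$. That conditioning is legitimate only inside the uniform sign-flip step, which is where the paper uses it.
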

\begin{proof}
Write $n:=\|x\|_1$.  For any real numbers $\alpha,\beta$ with
$|\alpha|,|\beta|\le n$, we have the identity
\[
\alpha\beta
=
\int_0^n\int_0^n
\Big(
  \mathbf 1\{\alpha>s,\beta>t\}
+ \mathbf 1\{\alpha<-s,\beta<-t\}
- \mathbf 1\{\alpha>s,\beta<-t\}
- \mathbf 1\{\alpha<-s,\beta>t\}
\Big)\,ds\,dt .
\]
Since $|\err(w)| \leq \|x\|_1$ for every coordinate $w$, we have $|\err(u)|,|\err(v)|\le n$, so it suffices to compare,
for each fixed $s,t\in[0,n]$, the probabilities of the four events
appearing in the integrand.

We treat the event
\[
\{\err(u)>s,\err(v)>t\};
\]
the other three events are identical.  Since
$\err(u)=\median(Y_{1,u},\ldots,Y_{5,u})$, the event $\err(u)>s$
is the disjoint union over all subsets $S\subseteq[5]$ with
$|S|\ge 3$ of the events
\[
\bigcap_{a\in S}\{Y_{a,u}>s\}
\cap
\bigcap_{a\notin S}\{Y_{a,u}\le s\}.
\]
There are
\[
\binom53+\binom54+\binom55=16
\]
such choices of $S$.  Similarly, $\err(v)>t$ is the disjoint union
over subsets $T\subseteq[5]$ with $|T|\ge3$ of the analogous events.
Therefore $\{\err(u)>s,\err(v)>t\}$ is a disjoint union of at most
$16^2=2^8$ events indexed by pairs $(S,T)$.

Fix such a pair $(S,T)$.  The corresponding event factors across rows:
\[
\prod_{a=1}^5 F_a(\sigma_a),
\]
where $F_a$ is the indicator of the two row-local constraints imposed
on $Y_{a,u}$ and $Y_{a,v}$.  We now show that each such $F_a$ is a sum
of at most $4$ indicators, each of which is an intersection of at most
$4$ halfspaces in the single row-sign vector $\sigma_a$.

Fix a row $a$.  Since we have conditioned on $K=0$, we have
$h_a(u)\neq h_a(v)$.  Hence $v$ is not in the bucket used to estimate
$u$, and $u$ is not in the bucket used to estimate $v$.  Define
\[
L_{a,u}
:=
\sum_{\substack{w\neq u\\ h_a(w)=h_a(u)}}
   \sigma_{a,w}x_w,
\qquad
L_{a,v}
:=
\sum_{\substack{w\neq v\\ h_a(w)=h_a(v)}}
   \sigma_{a,w}x_w.
\]
Then
\[
Y_{a,u}=\sigma_{a,u}L_{a,u},
\qquad
Y_{a,v}=\sigma_{a,v}L_{a,v}.
\]
For example,
\[
\{Y_{a,u}>s\}
=
\big(\{\sigma_{a,u}>0\}\cap\{L_{a,u}>s\}\big)
\;\dot\cup\;
\big(\{\sigma_{a,u}<0\}\cap\{L_{a,u}<-s\}\big),
\]
a disjoint union of two events, each an intersection of two halfspaces
in $\sigma_a$.  The same statement holds for the events
$\{Y_{a,u}\le s\}$, $\{Y_{a,u}< -s\}$, and $\{Y_{a,u}\ge -s\}$, and
similarly for $v$.  Thus the conjunction of the row-local $u$-constraint
and the row-local $v$-constraint is a disjoint union of at most $4$
events, each an intersection of at most $4$ halfspaces.

Consequently, for the fixed pair $(S,T)$, the corresponding global
event is a disjoint union of at most
\[
4^5=2^{10}
\]
events of the following product form:
\[
\prod_{a=1}^5 f_a(\sigma_a),
\]
where each $f_a$ is the indicator of an intersection of at most $4$
halfspaces in the row vector $\sigma_a$.

We now compare the expectation of one such product under independent
row-wise PRG signs and under independent uniform signs.  Let
\[
p_a:=\E_{\sigma\sim\mathcal D} f_a(\sigma),
\qquad
q_a:=\E_{\sigma\sim\mathcal U} f_a(\sigma).
\]
Since $\mathcal D$ $\gamma$-fools intersections of $4$ halfspaces,
\[
|p_a-q_a|\le \gamma
\qquad\text{for every }a\in[5].
\]
Using independence across rows,
\[
\E_{\sigma_1,\ldots,\sigma_5\sim\mathcal D}
 \prod_{a=1}^5 f_a(\sigma_a)
=
\prod_{a=1}^5 p_a,
\]
and similarly the uniform expectation is $\prod_{a=1}^5 q_a$.
Since $p_a,q_a\in[0,1]$,
\[
\left|
\prod_{a=1}^5 p_a-\prod_{a=1}^5 q_a
\right|
\le
\sum_{a=1}^5 |p_a-q_a|
\le
5\gamma .
\]

For each fixed $s,t$, each of the four integrand events is a disjoint
union of at most $2^8$ median-pattern events, and each such event
expands into at most $2^{10}$ product terms.  Therefore the total
additive error in the integrand is at most
\[
4\cdot 2^8\cdot 2^{10}\cdot 5\gamma
=
5\cdot 2^{20}\gamma .
\]
Integrating over $[0,n]^2$ gives
\[
\left|
\E_{\mathcal D^{\otimes 5}}[\err(u)\err(v)\mid h]
-
\E_{\mathcal U^{\otimes 5}}[\err(u)\err(v)\mid h]
\right|
\le
5\cdot 2^{20}\gamma n^2 .
\]

It remains only to note that under fully uniform, independent row signs,
the conditional expectation is zero for every fixed hash realization
$h$ with $K=0$.  Indeed, condition on all signs except
$\{\sigma_{a,u}\}_{a=1}^5$.  Since $K=0$, $\err(v)$ does not depend on
these five signs, while flipping all five signs
$\sigma_{a,u}\mapsto-\sigma_{a,u}$ flips every $Y_{a,u}$ and therefore
flips $\err(u)$.  Hence
\[
\E_{\mathcal U^{\otimes 5}}[\err(u)\err(v)\mid h]=0.
\]
This proves the claimed bound for every fixed $h$ with $K=0$, and
averaging over such $h$ proves the lemma.
\end{proof}

To conclude, we can set $\gamma = 2^{-20}/(5r^3)$ to ensure the $K=0$ case does not introduce more than $O(\|x\|_1^2/r^3)$ into the correlation bound (which is already dominated by the $K=1$ and $K=2$ cases). We need $\sigma$ to both be a PRG fooling intersections of halfspaces, and be pairwise independent, which can be achieved by generating a PRG output and a pairwise independent $\{-1,1\}^m$ independently, then multiplying them bitwise (see \cref{lem:bitwise} below). The total seed length required, by \cref{thm:prgs} is $O(\log m + \log^2(1/\eps))$ bits, which is no more than $O(\log(1/\eps))$ words, dominated by the rest of the algorithm which always consumes $\Omega(1/\eps)$ words. The space to store five $3$-wise independent $h_a$ functions is also only $O(1)$ words.

\begin{lemma}\label{lem:bitwise}
Let $\sigma^{(\text{PRG})} \in \{-1, 1\}^m$ be a random vector generated by a pseudorandom generator (PRG) that $\gamma$-fools the intersection of $k$ halfspaces. Let $\sigma^{(\text{PI})} \in \{-1, 1\}^m$ be a random vector drawn from a pairwise independent distribution over $\{-1, 1\}^m$, drawn independently from $\sigma^{(\text{PRG})}$. 
Define $\sigma \in \{-1, 1\}^m$ via the bitwise (Hadamard) product $\sigma = \sigma^{(\text{PRG})} \circ \sigma^{(\text{PI})}$.

Then  $\sigma$ both (1) is pairwise independent, and (2) $\gamma$-fools the intersection of $k$ halfspaces.
\end{lemma}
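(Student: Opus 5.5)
The plan is to prove the two properties separately, each by conditioning on one of the two independent factors.

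For (1), condition on $\sigma^{(\text{PRG})}$. For any fixed $\tau\in\{-1,1\}^m$, the map $z\mapsto \tau\circ z$ is a bijection of $\{-1,1\}^m$ that acts coordinatewise by a fixed sign flip. Such a map preserves pairwise independence with uniform marginals. Concretely, for $p\neq q$ and any $b_p,b_q\in\{-1,1\}$,
\[
\Pr[\tau_p\sigma^{(\text{PI})}_p=b_p,\ \tau_q\sigma^{(\text{PI})}_q=b_q]=\Pr[\sigma^{(\text{PI})}_p=\tau_pb_p,\ \sigma^{(\text{PI})}_q=\tau_qb_q]=\tfrac14 .
\]
Averaging over $\tau=\sigma^{(\text{PRG})}$, which is independent of $\sigma^{(\text{PI})}$, then gives pairwise independence of $\sigma$.

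For (2), condition instead on $\sigma^{(\text{PI})}=\pi$ for a fixed $\pi\in\{-1,1\}^m$. Let $F=\prod_{i=1}^k f_i$ be an intersection of $k$ halfspaces, with $f_i(x)=\mathbf 1\{\langle w_i,x\rangle\ge\theta_i\}$ (or with strict inequality). Define $F_\pi(x):=F(\pi\circ x)$. Since $\langle w_i,\pi\circ x\rangle=\langle w_i\circ\pi,x\rangle$, each factor $f_i(\pi\circ x)$ is again a halfspace, with weight vector $w_i\circ\pi$ and the same threshold and the same strictness. So $F_\pi$ is an intersection of $k$ halfspaces, and the PRG guarantee gives
\[
\bigl|\E_{x}F_\pi(x)-\E_{s}F_\pi(\mathcal G(s))\bigr|\le\gamma .
\]
Because $x\mapsto\pi\circ x$ is a bijection on the uniform cube, $\E_x F_\pi(x)=\E_x F(x)$. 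Hence, conditional on $\sigma^{(\text{PI})}=\pi$, the distribution of $\sigma$ satisfies
\[
\bigl|\E F(\sigma)-\E_x F(x)\bigr|\le\gamma .
\]
Averaging over $\pi$ (again using independence of the two factors) and applying the triangle inequality, $|\E F(\sigma)-\E_xF(x)|\le\gamma$ unconditionally. This is property (2).

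Nothing here is a real obstacle. The only point needing care is checking that the class of intersections of $k$ halfspaces is closed under coordinatewise sign flips $x\mapsto\pi\circ x$. This holds because the halfspace definition allows arbitrary $w\in\R^m$, and the flip only changes the weights.
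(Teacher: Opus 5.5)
Your proposal is correct and follows essentially the same route as the paper. Part (2) is the paper's argument: condition on $\sigma^{(\text{PI})}=\pi$, note that $F(\pi\circ x)$ is again an intersection of $k$ halfspaces with weights $w_i\circ\pi$, and use invariance of the uniform cube under $x\mapsto\pi\circ x$. For part (1) you compute the pairwise joint probabilities directly after conditioning on $\sigma^{(\text{PRG})}$, whereas the paper checks that all means and pairwise products have expectation zero, which is an equivalent characterization for $\pm1$-valued variables.
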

\begin{proof}
    We prove the two properties separately.

    First we show $\sigma$ is pairwise independent. For this, it is equivalent to show that both the marginal expectation of every coordinate is $0$, and that the covariance of any pair of distinct coordinates is $0$.

Because $\sigma^{(\text{PI})}$ is pairwise independent over $\{-1, 1\}^m$, we already know that for its coordinates, $\E\sigma^{(\text{PI})}_i = 0$ for all $i \in [m]$, and
    $\E[\sigma^{(\text{PI})}_i \sigma^{(\text{PI})}_j] = 0$ for all $i \neq j \in [m]$.

For any coordinate $i \in [m]$, because $\sigma^{(\text{PRG})}$ and $\sigma^{(\text{PI})}$ are drawn independently, 
\begin{align*}
\E\sigma_i &= \E[\sigma^{(\text{PRG})}_i \cdot \sigma^{(\text{PI})}_i] \\
&= (\E\sigma^{(\text{PRG})}_i) \cdot (\E\sigma^{(\text{PI})}_i) \\
&= (\E\sigma^{(\text{PRG})}_i) \cdot 0 = 0
\end{align*}

For any distinct pairs of coordinates $i \neq j \in [m]$:
\[
\E[\sigma_i \sigma_j] = \E\left[\left(\sigma^{(\text{PRG})}_i \sigma^{(\text{PI})}_i\right) \left(\sigma^{(\text{PRG})}_j \sigma^{(\text{PI})}_j\right)\right]
\]
By rearranging terms and utilizing the mutual independence of the underlying vectors:
\begin{align*}
\E[\sigma_i \sigma_j] &= \E\left[\sigma^{(\text{PRG})}_i \sigma^{(\text{PRG})}_j\right] \cdot \E\left[\sigma^{(\text{PI})}_i \sigma^{(\text{PI})}_j\right] \\
&= \E\left[\sigma^{(\text{PRG})}_i \sigma^{(\text{PRG})}_j\right] \cdot 0 = 0
\end{align*}

Thus we have $\E\sigma_i = 0$ for all $i$, and $\E[\sigma_i \sigma_j] = 0 = \E[\sigma_i]\E[\sigma_j]$ for all distinct $i \neq j$, implying the coordinates of $\sigma$ are pairwise independent.

Now we show $\sigma$ fools the intersection of halfspaces.
Recall that an intersection of $k$ halfspaces is a boolean function $F: \{-1, 1\}^m \to \{0, 1\}$ defined as:
\[
F(x) = \prod_{\ell=1}^k \mathbbm{1}\left(\langle w^{(\ell)}, x \rangle > \theta^{(\ell)}\right)
\]
where $w^{(\ell)} \in \mathbb{R}^m$ (one may also replace some of the $>$ with $\ge$ in the product).

We will prove that $\left| \E[F(\sigma)] - \E_u[F(u)] \right| \le \gamma$ for the uniform distribution $u \sim \{-1, 1\}^m$.

First, condition on $\sigma^{(\text{PI})}$.
We can write:
\[
\E_{\sigma}[F(\sigma)] = \E_{\sigma^{(\text{PI})}}  \E_{\sigma^{(\text{PRG})}} [ F(\sigma^{(\text{PRG})} \circ y) \mid y = \sigma^{(\text{PI})} ]
\]
Fix an arbitrary outcome vector $y \in \{-1, 1\}^m$ and define a modified test function $F_y(x) = F(x \circ y)$. Let us expand $F_y(x)$:
\[
F_y(x) = \prod_{\ell=1}^k \mathbbm{1}\left(\langle w^{(\ell)}, x \circ y \rangle > \theta^{(\ell)}\right)
\]
Notice that the inner product evaluates to:
\[
\langle w^{(\ell)}, x \circ y \rangle = \sum_{i=1}^m w^{(\ell)}_i x_i y_i = \sum_{i=1}^m (w^{(\ell)}_i y_i) x_i = \langle \tilde{w}^{(\ell)}, x \rangle
\]
where we have defined a new weight vector $\tilde{w}^{(\ell)} \in \mathbb{R}^m$ such that $\tilde{w}^{(\ell)}_i = w^{(\ell)}_i y_i$. Since $y_i \in \{-1, 1\}$, this merely flips the signs of a fixed subset of the weights. Consequently:
\[
F_y(x) = \prod_{\ell=1}^k \mathbbm{1}\left(\langle \tilde{w}^{(\ell)}, x \rangle > \theta^{(\ell)}\right)
\]
Thus $F_y(x)$ is itself exactly an intersection of $k$ halfspaces, and so for any fixed $y$
\[
\left| \E_{\sigma^{(\text{PRG})}}\left[F_y(\sigma^{(\text{PRG})})\right] - \E_{u \sim \{-1, 1\}^U}\left[F_y(u)\right] \right| \le \gamma
\]
Also note $u\circ y$ is uniform for $u$ uniform and any choice of $y$. Therefore
\[
\E_{u \sim \{-1, 1\}^m}[F(u \circ y)] = \E_{u \sim \{-1, 1\}^m}[F(u)] ,
\]
and substituting this back, we have that for every fixed outcome $y \in \{-1, 1\}^m$
\[
\left| \E_{\sigma^{(\text{PRG})}}\left[F(\sigma^{(\text{PRG})} \circ y)\right] - \E_u\left[F(u)\right] \right| \le \gamma ,
\]
and thus the inequality holds for an average $y$ (where $y$ is drawn as $\sigma^{(\text{PI})}$).
\end{proof}

%\begin{theorem}\label{th:cskderand}
 %   We can choose pseudorandom hash functions $h_a:[U] \rightarrow [r]$, and signs $\sigma_{a,u} \in \{1,-1\}$ with a total seed of length $O(r)$ to satisfy the same guarantees as in \cref{thm:cov-mains}. In addition, the signs and hash functions can be computed efficiently. 
%\end{theorem}
\section{Matrix Mechanism in Differential Privacy}
\label{app:matmech}

Recall that in the world of {\it differential privacy} (DP), one has a private database $x$ and wants to answer queries without leaking too much privacy. For example, think of $x$ as being a histogram of user types, with $x_u$ being the number of users of type $u$. Then, the privacy guarantee is that one wants that for ``neighboring'' datasets $x,x'$, the distributions of $\mathcal M(x)$ and $\mathcal M(x')$ are  similar in some precise quantitative sense\footnote{We intentionally avoid the precise mathematical formalism here, since it is unrelated to the setting of our work.}, where $\mathcal M$ is the (randomized) private mechanism. Here, ``neighboring'' means that $x,x'$ differ in just one user, i.e.,  $x - x' = \pm e_u$ for some standard basis vector $e_u$. In other words, the output distribution of the mechanism should not be too sensitive to the presence of any one individual user. Then, the matrix mechanism considers problems that can be represented as the data analyst wanting to learn $Mx$ for some matrix $M\in\R^{U\times U}$. The private mechanism then factorizes $M = AB$, releases $Bx + \eta$ for some noise vector $\eta$, and the analyst then computes its estimate of $Mx$ as $A(Bx+\eta)$. The distribution of the noise $\eta$ depends on the specific notion of privacy desired (e.g.,  pure- vs. approximate-DP), which yields different $\gamma$-norms characterizing the privacy--utility tradeoff.
    
For example, in the case of $\eps$-DP, to release a function $f(x)$, it is well known that one should have the $\eta_i$ be i.i.d.\ Laplace noise with scale parameter  $\Delta_1(f)/\eps$ for $\Delta_1(f)$ denoting the {\it $\ell_1$-sensitivity}  $\sup_{\|x-x'\|_1 \le 1}\|f(x) - f(x')\|_1$. For $(\eps,\delta)$-DP, one should add Gaussian noise with standard deviation $\Delta_2(f)\sqrt{C\log(1/\delta)}/\eps$ for $\Delta_2(f) := \sup_{\|x-x'\|_1 \le 1} \|f(x)-f(x')\|_2$ being the {\it $\ell_2$-sensitivity}. In the case of the matrix mechanism, since we would like to publish $f(x) = Bx$ to the analyst, the $\ell_1$- and $\ell_2$-sensitivities are $\|B\|_{1\rightarrow 1}$ and $\|B\|_{1\rightarrow 2}$, respectively. If we would then like to minimize $\|y - \hat{y} \|_\infty$, note the error depends on $\sup_i\left|\langle A_{i:}, \eta\rangle\right|$, where $A_{i:}$ is the $i$th row of $A$ and $\eta$ has i.i.d.\ entries that are either Laplace or Gaussian.  

For pure-DP, one uses Laplacian noise for $\eta$. For each $i$, let $e_i = A_{i:} \eta$ be the error. Then, we need a bound on $\max_i |e_i|$. Note that each $e_i$ is a sum of independent mean zero random variables, and thus we can compute its standard deviation, which is proportional to $\|A_{i:}\|_2 \cdot(\E\eta_1^2)^{1/2} = O(\|A_{i:}\|_2\cdot \|B\|_{1\rightarrow 1} / \varepsilon)$. Further, we can use Bernstein inequality for sub-exponential random variables to get a sharper bound on the tail behavior of $e_i$: By Corollary 2.9.2 in \cite{Vershynin_2018}, if we let $K = \|B\|_{1 \rightarrow 1}/\varepsilon$, then  

$$\pr[ |e_i| > t] \leq 2 \exp\left[- c \min\left(\frac{t^2}{K^2 \|A_{i:}\|_2^2}, \frac{t}{K \|A_{i:}\|_\infty}\right)\right]. $$

By a standard union bound argument\footnote{Setting 
$t = O(K) \cdot \left((\sqrt{\log U}) \max_i \|A_{i:}\|_2 + (\log U) \max_i \|A_{i:}\|_\infty\right)$, gives $\pr[\max_i |e_i| > t] \leq U \exp(-\Omega(\log U))$.}, we get that
$$\ex\left[\max_i |e_i|\right] = O(K) \cdot \left((\sqrt{\log U}) \max_i \|A_{i:}\|_2 + (\log U) \max_i \|A_{i:}\|_\infty\right).$$

Therefore, for pure-DP, the error from the Laplace mechanism applied to the factorization $M = AB$ is 

\begin{equation}O(1) \left(\frac{\|A\|_{2 \rightarrow \infty} \|B\|_{1 \rightarrow 1} \cdot \sqrt{\log U}}{\varepsilon} + \frac{\max_i \|A_{i:}\|_\infty \cdot \|B\|_{1 \rightarrow 1} \cdot (\log U)}{\varepsilon}\right). \label{eqn:dp-ub}
\end{equation}

The first term above is essentially the $\gamma_{2,1}(M)$ norm again: $\gamma_{2,1}(M) = \inf_{A,B:AB=M} \{\|A\|_{2\rightarrow\infty} \cdot \|B\|_{1 \rightarrow 1}\}$. Note that the $\gamma_{2,1}$-norm resembles the first term in our \cref{eqn:main-space}. Using the above argument with dyadic factorization yields a bound of $O((\log U)^2/\varepsilon)$ error for pure-DP \cite{JainRSS23}. 

%Furthermore, although each error term has standard deviation $\|A_{i:}\|_2 \|B\|_{1\rightarrow 1}/\varepsilon$, since $\eta$ is random, a naive union bound over all $i$ yields that the maximum error over all $i$ is bounded by $O((\|A\|_{2\rightarrow\infty}\|B\|_{1\rightarrow 1}\log U) / \varepsilon)$. In the case of $M$ being the lower triangular matrix of all ones and with the dyadic factorization, this yields a bound of $O((\log U)^{2.5} /\varepsilon)$ \cite{ChanSS10,DNPR10}. A more careful analysis that improves upon the naive union bound can tighten this to $O((\log U)^2/\varepsilon)$ \cite{JainRSS23}.

\end{document}